\DeclareMathOperator{\C}{\mathcal{C}}
\DeclareMathOperator{\supp}{supp}
\newtheorem{theorem}{Theorem}[section]
\newtheorem{lemma}[theorem]{Lemma}
\newtheorem{corollary}[theorem]{Corollary}
\newtheorem{definition}[theorem]{Definition}
\newtheorem{proposition}[theorem]{Proposition}
\newtheorem{example}[theorem]{Example}
\newtheorem{remark}[theorem]{Remark}
\newcommand{\fqm}{\mathbb{F}_{q^m}}
\newcommand{\F}{{\mathbb F}}
\newcommand{\w}{{\mathrm w}}
\newcommand{\GL}{\hbox{{\rm GL}}}
\newcommand{\fq}{{\mathbb F}_{q}}
\newcommand{\N}{\mathrm{N}}
\newcommand{\uu}{\mathbf{u}}
\newcommand{\cc}{\mathbf{c}}
\newcommand{\xx}{\mathbf{x}}
\newcommand{\ee}{\mathbf{e}}
\newcommand{\zz}{\mathbf{z}}
\newcommand{\vv}{\mathbf{v}}
\newcommand{\nkd}{[n,k,d]_{q^m/q}}
\newcommand{\nk}{[n,k]_{q^m/q}}
\title{Completely decomposable rank-metric codes}
\author{Paolo Santonastaso}
\date{}
\begin{document}

\maketitle

\begin{abstract}
In this paper, we investigate completely decomposable rank-metric codes, i.e. rank-metric codes that are the direct sum of 1-dimensional maximum rank distance codes. We study the weight distribution of such codes, characterizing codewords with certain rank weights. Additionally, we obtain classification results for codes with the largest number of minimum weight codewords within the class of completely decomposable codes.
\end{abstract}


\noindent \textbf{MSC2020:} 94B05; 94B65; 94B27 \\
\textbf{Keywords:} rank-metric code; decomposable code; weight distribution

\section{Introduction}
Rank-metric codes have become increasingly popular in the recent years due to their several applications and because of their own interest. These codes have indeed been studied in connection with several algebraic and combinatorial objects, such as semifields, linear sets, tensorial algebras, and skew algebras, see e.g. \cite{sheekeysurvey}. Such codes can be seen as subspaces of matrices over a field, namely $\mathbb{K},$ or as subspaces of vectors of length $n$ over a degree $m$ extension of $\mathbb{K}$.
In this paper, we investigate special classes of codes that are the (external) direct sum of two or more codes, i.e., \emph{decomposable codes}. The first application of the direct sum of rank-metric codes is discussed in \cite{gabidulin2005subcodes}, dealing with the decoding problem beyond the error-correcting capability of the code. Rank-metric codes have been widely used in network coding, see e.g. \cite{bartz2022rank} and in the schemes proposed in \cite{silva2008rank,silva2011universal}, the authors employ direct sum of maximum rank distance (MRD) codes.

Decomposable codes have also been explored in the literature in other metrics. For instance, in the Hamming metric, classes of efficiently decodable decomposable codes are studied in \cite{takata1994suboptimum,he2023error}. In \cite{munuera1993equality}, some properties on the structure of these codes are examined in the context of algebraic geometric codes.

In this paper, we focus on \textbf{completely decomposable rank-metric codes}, i.e., rank-metric codes $\C$ that are, up to equivalence, the (external) direct sum of $1$-dimensional MRD codes $\C_1,\ldots,\C_k$. We characterize completely decomposable rank-metric codes as those codes that admit a basis ${\mathbf{c}_1, \ldots, \mathbf{c}_k}$ such that the sum of the weights of the $\mathbf{c}_i$'s equals the length of the code; see \Cref{th:characterization1dim}. This property is also expressed in terms of the supports of the codewords of such a basis. Moreover, we show that completely decomposable codes are not minimal rank-metric codes and we can detect the set of minimal codewords, see \Cref{prop:descriptionminimal}. We also relate the study of the weight distribution of a completely decomposable rank-metric code $\C$ to some of its shortened and punctured codes. In particular, we characterize codewords with certain rank weights, see \Cref{lm:numberpointsweightcomplementary} and \Cref{th:numberpointsweightcomplementary}. Furthermore, we obtain classification results for codes with the maximum number of minimum weight codewords within the class of completely decomposable codes, see \Cref{th:charminimumweightnotprime} and \Cref{th:charshortnedpunct}.

\section{Preliminaries} \label{sec:prel}

In this section, we first recall some properties of $\F_q$-subpaces of a finite dimensional $\F_{q^m}$-vector space under the action of a nondegenerate reflexive bilinear form over $\F_q$ induced by a nondegenerate reflexive bilinear form over $\F_{q^m}$. Then, in Subsection \ref{subsec:product}, we study multiplicative properties of $\F_q$-subspaces of $\F_{q^m}$. Finally, in Subsection \ref{subsec:rank}, we recall the results on rank-metric codes and their geometry. In the following, $q$ is a prime power and $\mathbb{F}_q$ is the finite field with $q$ elements.

\subsection{Dual of a subspace}

We recall the notion of the dual of an $\F_q$-subspace of the $k$-dimensional $\F_{q^m}$-vector space $V$.
Let $\sigma \colon V \times V \rightarrow \F_{q^m}$ be a nondegenerate reflexive bilinear form defined on $V$ and consider \[
\begin{array}{cccc}
    \sigma': & V \times V & \longrightarrow & \F_q  \\
     & (x,y) & \longmapsto & \mathrm{Tr}_{q^m/q} (\sigma(x,y)),
\end{array}
\] 
where $\mathrm{Tr}_{q^m/q}(x)=x+x^q+\ldots+x^{q^{m-1}}$ denotes the trace function.
In this way, $\sigma'$ turns out to be a nondegenerate reflexive bilinear form on $V$ seen as an $\fq$-vector space of dimension $km$. Consider $\perp$ and $\perp'$ as the orthogonal complement maps defined by $\sigma$ and $\sigma'$, respectively. For an $\F_q$-subspace $U$ of $V$, the \textbf{dual subspace of $U$} with respect to $\sigma'$ is the $\F_q$-subspace $U^{\perp'}$ in $V$. If $\perp_1'$ and $\perp_2'$ are orthogonal complement maps defined by the nondegenerate reflexive bilinear forms $\sigma_1$ and $\sigma_2$, respectively, then the dual subspaces $U^{\perp_1'}$ and $U^{\perp_2'}$ of $U$ are $\mathrm{GL}(k,q^m)$-equivalent, cf. \cite[Proposition 2.5]{polverino2010linear}.

Moreover, for any $\F_q$-subspace $U$ of $V$, we have $\dim_{\fq}(U^{\perp'})=km-\dim_{\fq}(U)$ and for any $\F_{q^m}$-subspace of $V$, it holds that $W^{\perp'}=W^{\perp}$. As a consequence, the following relation holds.

\begin{proposition} [see \textnormal{\cite[Property 2.6]{polverino2010linear}}] \label{prop:weightdual}
Let $U$ be an $\fq$-subspace of $V$ and $W$ be an $\fqm$-subspace of $V$.
Then 
\[ \dim_{\fq}(U^{\perp'}\cap W^{\perp})=\dim_{\fq}(U\cap W)+\dim_{\fq}(V)-\dim_{\fq}(U)-\dim_{\fq}(W). \]
\end{proposition}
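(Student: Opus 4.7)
The plan is to reduce the claim to two standard facts about the $\F_q$-bilinear form $\sigma'$: that orthogonality with respect to $\sigma'$ satisfies the Grassmann-type identity $X^{\perp'}\cap Y^{\perp'}=(X+Y)^{\perp'}$ for any $\F_q$-subspaces $X,Y\subseteq V$, and that $\dim_{\F_q}(X^{\perp'})=km-\dim_{\F_q}(X)$ (where $km=\dim_{\F_q}(V)$). Both of these are standard for nondegenerate bilinear forms on a finite dimensional vector space, so I would quote them and not reprove them.

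First I would invoke the fact recalled just before the proposition statement that for an $\F_{q^m}$-subspace $W$ of $V$ one has $W^{\perp'}=W^{\perp}$; this is what makes the $\F_q$-dual and the $\F_{q^m}$-dual agree on $W$, and it is what lets us pass between the two bilinear forms at this one step. With that in hand, the left-hand side $\dim_{\F_q}(U^{\perp'}\cap W^{\perp})$ becomes $\dim_{\F_q}(U^{\perp'}\cap W^{\perp'})$.

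Next I would apply $U^{\perp'}\cap W^{\perp'}=(U+W)^{\perp'}$, followed by the dimension formula for the $\perp'$-dual, to obtain
\[
\dim_{\F_q}(U^{\perp'}\cap W^{\perp'})=km-\dim_{\F_q}(U+W).
\]
Finally, the standard inclusion--exclusion for $\F_q$-subspaces gives $\dim_{\F_q}(U+W)=\dim_{\F_q}(U)+\dim_{\F_q}(W)-\dim_{\F_q}(U\cap W)$, and substituting this in and rearranging yields the claimed identity.

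There is no real obstacle here: the only subtlety is making sure to switch from $\perp$ to $\perp'$ on $W$ (which is legitimate precisely because $W$ is $\F_{q^m}$-linear) before invoking the $\F_q$-duality identities; once that switch is made, everything reduces to the familiar dimension counts for an $\F_q$-bilinear nondegenerate form on a $km$-dimensional space.
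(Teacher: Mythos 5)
Your proof is correct. The paper does not supply its own argument for this proposition---it simply cites \cite[Property 2.6]{polverino2010linear}---and your derivation (replacing $W^{\perp}$ by $W^{\perp'}$ using the $\F_{q^m}$-linearity of $W$, then applying $(U+W)^{\perp'}=U^{\perp'}\cap W^{\perp'}$, the dimension formula $\dim_{\F_q}(X^{\perp'})=km-\dim_{\F_q}(X)$ for the nondegenerate form $\sigma'$, and Grassmann's identity) is exactly the standard proof of that cited fact, with the one genuinely needed observation, the switch from $\perp$ to $\perp'$ on $W$, correctly identified and justified.
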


\noindent In the next, we will always consider $V=\F_{q^m}^k$ and $\sigma$ as the standard inner product
\[ \sigma \colon ((u_1,\ldots,u_k),(v_1,\ldots,v_k)) \in (\mathbb{F}_{q^m}^k)^2 \mapsto u_1v_1+\ldots+u_kv_k \in \F_{q^m}, \]
and hence
\[ \sigma' \colon ((u_1,\ldots,u_k),(v_1,\ldots,v_k)) \in (\mathbb{F}_{q^m}^k)^2 \mapsto \mathrm{Tr}_{q^m/q}(u_1v_1+\ldots+u_kv_k) \in \F_{q}. \]

Note that, when $k=1$, $\sigma'=\mathrm{Tr}_{q^m/q}(\cdot)$ is a nondegenerate reflexive bilinear form on $\F_{q^m}$ over $\F_q$ and in what follows, if $W$ is an $\F_q$-subspace of $\F_{q^m}$, we define \[
W^{\perp^*}=\{b \in \F_{q^m} \colon \mathrm{Tr}_{q^m/q}(ab)=0, \mbox{ for every }a \in W\},\]
the ortogonal complement of $W$ with respect to $\mathrm{Tr}_{q^m/q}(\cdot)$.

If $\F_{q^e}$ is a subfield of $\F_{q^m}$, the form 
\[(u,v) \in \F_{q^m}^2 \mapsto \mathrm{Tr}_{q^m/q^e}(uv)\] is a nondegenerate reflexive bilinear form on $\F_{q^m}$ over $\F_{q^e}$, as well. So, in the following, when it is necessary, we use the symbol $\perp^*_{q^m/q^e}$ to denote the orthogonal complement defined by $\mathrm{Tr}_{q^m/q^e}$ on $\F_{q^m}$ over $\F_{q^e}$.

When $W$ is generated by a subsequence of a geometric progression of a generator $\lambda$ of $\F_{q^m}$, $W^{\perp^*}$ is as follows.

\begin{proposition} [see \textnormal{\cite[Proposition 2.8]{napolitano2023classifications}}] \label{prop:dualwithdual}
Let $\lambda \in \F_{q^m}$ be such $\F_q(\lambda)=\F_{q^m}$. Let $W=\langle 1,\lambda,\ldots,\lambda^{t-1} \rangle_{\F_q}$, with $t \in \{1,\ldots,m-1\}$. Then $ W^{\perp^*}=\delta^{-1}\langle 1,\lambda,\ldots,\lambda^{m-t-1} \rangle_{\F_q}$, where $\delta =f'(\lambda)$ and $f(x)$ is the minimal polynomial of $\lambda$ over $\fq$.
\end{proposition}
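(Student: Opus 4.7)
The plan is to compute the $\F_q$-basis of $\F_{q^m}$ dual to $(1,\lambda,\ldots,\lambda^{m-1})$ with respect to $\mathrm{Tr}_{q^m/q}$, and then to read off $W^{\perp^*}$ as the $\F_q$-span of the dual vectors in positions $t,t+1,\ldots,m-1$. Since $\F_q(\lambda)=\F_{q^m}$, the minimal polynomial $f(x)$ of $\lambda$ has degree $m$, so $(1,\lambda,\ldots,\lambda^{m-1})$ is indeed an $\F_q$-basis and the approach is well-founded.

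Write $g(x):=f(x)/(x-\lambda)=\sum_{j=0}^{m-1}b_j\,x^j\in \F_{q^m}[x]$. The key identity I intend to establish is
\[
\mathrm{Tr}_{q^m/q}\!\left(\frac{\lambda^i\,b_j}{f'(\lambda)}\right)=\delta_{ij},\qquad 0\le i,j\le m-1,
\]
proved via Lagrange interpolation at the Galois conjugates $\lambda_1=\lambda,\lambda_2,\ldots,\lambda_m$ of $\lambda$: both sides of
\[
x^i=\sum_{\ell=1}^{m}\lambda_\ell^{\,i}\,\frac{f(x)/(x-\lambda_\ell)}{f'(\lambda_\ell)}
\]
are polynomials of degree at most $m-1$ agreeing at the $m$ distinct points $\lambda_1,\ldots,\lambda_m$, hence are equal; extracting the coefficient of $x^j$ (and noting that $f(x)/(x-\lambda_\ell)$ is the Galois image of $g(x)$ at $\lambda_\ell$) recognizes the right-hand side as a trace and yields the displayed identity. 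This makes $(\delta^{-1}b_0,\ldots,\delta^{-1}b_{m-1})$ the dual basis of $(1,\lambda,\ldots,\lambda^{m-1})$, so
\[W^{\perp^*}=\delta^{-1}\,\langle b_t,b_{t+1},\ldots,b_{m-1}\rangle_{\F_q}.\]

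It then remains to identify this span with $\langle 1,\lambda,\ldots,\lambda^{m-t-1}\rangle_{\F_q}$. Writing $f(x)=\sum_{i=0}^{m}a_i x^i$ with $a_m=1$ and expanding $(f(x)-f(\lambda))/(x-\lambda)$ gives $b_{m-1-k}=\lambda^k+a_{m-1}\lambda^{k-1}+\cdots+a_{m-k}$ for $k=0,\ldots,m-1$, a monic polynomial of degree $k$ in $\lambda$ with $\F_q$-coefficients. The transition matrix from $(b_{m-1},b_{m-2},\ldots,b_t)$ to $(1,\lambda,\ldots,\lambda^{m-t-1})$ is therefore unipotent lower-triangular, so the two $\F_q$-spans coincide and the proposition follows. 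The main obstacle is the Euler/Lagrange identity producing the dual basis; once this is in hand, the remainder is triangular bookkeeping, with $\dim_{\F_q}W^{\perp^*}=m-t$ serving as a consistency check.
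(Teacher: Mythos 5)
Your proof is correct and complete: the Lagrange-interpolation identity does yield $\mathrm{Tr}_{q^m/q}(\lambda^i b_j/f'(\lambda))=\delta_{ij}$ (separability of $f$ guarantees the $m$ conjugates are distinct and $f'(\lambda)\neq 0$), the annihilator of the span of the first $t$ basis vectors is indeed the span of the last $m-t$ dual basis vectors, and the unipotent-triangular change of basis from $(b_{m-1},\ldots,b_t)$ to $(1,\lambda,\ldots,\lambda^{m-t-1})$ is verified by your explicit formula $b_{m-1-k}=\lambda^k+a_{m-1}\lambda^{k-1}+\cdots+a_{m-k}$. The paper itself gives no proof of this proposition --- it is imported from \cite[Proposition 2.8]{napolitano2023classifications} --- so there is nothing to compare against; your argument is the standard ``Euler/dual polynomial basis'' derivation and can stand as a self-contained proof.
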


Recall that if $\lambda \in \F_{q^m}$, then the \textbf{degree} of $\lambda$ over $\F_q$ is the degree of the minimal
polynomial of $\lambda$ over $\F_q$, or equivalently, it is the smallest integer $e$ such that $\lambda \in \F_{q^e}$.
We extend the above lemma in the case that $\lambda$ has degree less than $m$ over $\F_q$.

\begin{proposition} \label{prop:duallambdasubfield}
    Assume that $m=se$, with $s,e>1$. Let $\lambda \in \F_{q^m}$ having degree $e$ over $\F_q$ and let $t\leq e$ be a positive integer. Then
    \[
(\langle 1,\lambda,\ldots,\lambda^{t-1}\rangle_{\F_q})^{\perp^*}=\mathrm{Ker}(\mathrm{Tr}_{q^m/q^e}) \oplus c \langle 1,\lambda,\ldots,\lambda^{e-t-1}\rangle_{\F_q},
    \]
    for some $c \in (\F_{q^m}^* \setminus \mathrm{Ker}(\mathrm{Tr}_{q^m/q^e}))$.
\end{proposition}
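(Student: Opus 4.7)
The plan is to use transitivity of the trace to reduce the dual computation in $\F_{q^m}$ to one in the subfield $\F_{q^e}$, where \Cref{prop:dualwithdual} applies directly.

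Set $W := \langle 1,\lambda,\ldots,\lambda^{t-1}\rangle_{\F_q}$, and note $W \subseteq \F_{q^e}$ because $\lambda$ has degree $e$ over $\F_q$. For any $a \in W$ and $b \in \F_{q^m}$, transitivity of the trace together with the $\F_{q^e}$-linearity of $\mathrm{Tr}_{q^m/q^e}$ gives
\[
\mathrm{Tr}_{q^m/q}(ab) \;=\; \mathrm{Tr}_{q^e/q}\bigl(a \cdot \mathrm{Tr}_{q^m/q^e}(b)\bigr).
\]
Hence $b \in W^{\perp^*}$ iff $\mathrm{Tr}_{q^m/q^e}(b) \in W^{\perp^*_{q^e/q}}$, i.e., the $\F_{q^e}$-dual of $W$ computed inside $\F_{q^e}$. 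Applying \Cref{prop:dualwithdual} there (with $\F_q(\lambda) = \F_{q^e}$, playing the role of the ambient field, and $t \leq e-1$), this dual equals $\delta^{-1}\langle 1,\lambda,\ldots,\lambda^{e-t-1}\rangle_{\F_q}$. Therefore
\[
W^{\perp^*} \;=\; \mathrm{Tr}_{q^m/q^e}^{-1}\bigl(\delta^{-1}\langle 1,\lambda,\ldots,\lambda^{e-t-1}\rangle_{\F_q}\bigr).
\]

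Next I would exhibit the element $c$. By surjectivity of $\mathrm{Tr}_{q^m/q^e}$ onto $\F_{q^e}$, choose $c \in \F_{q^m}$ with $\mathrm{Tr}_{q^m/q^e}(c) = \delta^{-1}$, so that automatically $c \notin \mathrm{Ker}(\mathrm{Tr}_{q^m/q^e})$. Since each $\lambda^i$ lies in $\F_{q^e}$, the $\F_{q^e}$-linearity of the trace yields $\mathrm{Tr}_{q^m/q^e}(c\lambda^i) = \delta^{-1}\lambda^i$ for $i = 0, \ldots, e-t-1$. Thus $c\langle 1,\lambda,\ldots,\lambda^{e-t-1}\rangle_{\F_q}$ sits inside $W^{\perp^*}$ and maps $\F_q$-isomorphically onto $\delta^{-1}\langle 1,\lambda,\ldots,\lambda^{e-t-1}\rangle_{\F_q}$, the isomorphism being injective because $1,\lambda,\ldots,\lambda^{e-1}$ are $\F_q$-linearly independent. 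In particular, this subspace intersects $\mathrm{Ker}(\mathrm{Tr}_{q^m/q^e})$ trivially, so the sum $\mathrm{Ker}(\mathrm{Tr}_{q^m/q^e}) \oplus c\langle 1,\lambda,\ldots,\lambda^{e-t-1}\rangle_{\F_q}$ is direct and contained in $W^{\perp^*}$.

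To conclude I would compare $\F_q$-dimensions: $\dim_{\F_q} W^{\perp^*} = m - t$ (either from \Cref{prop:weightdual} with $W$ an $\F_{q^m}$-subspace of $\F_{q^m}$, or directly), while
\[
\dim_{\F_q} \mathrm{Ker}(\mathrm{Tr}_{q^m/q^e}) + \dim_{\F_q} c\langle 1,\lambda,\ldots,\lambda^{e-t-1}\rangle_{\F_q} = (s-1)e + (e-t) = m-t,
\]
forcing equality. The boundary case $t = e$ is trivial: $W = \F_{q^e}$ and the criterion above becomes $\mathrm{Tr}_{q^m/q^e}(b) = 0$, giving $W^{\perp^*} = \mathrm{Ker}(\mathrm{Tr}_{q^m/q^e})$, which matches the statement since the second summand is then zero.

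The only subtle point is checking that a single multiplier $c$ with $\mathrm{Tr}_{q^m/q^e}(c) = \delta^{-1}$ is enough to lift the whole $\F_{q^e}$-dual of $W$ back to $\F_{q^m}$ transversally to $\mathrm{Ker}(\mathrm{Tr}_{q^m/q^e})$. This works precisely because the basis vectors $\lambda^i$ live in $\F_{q^e}$ and can be pulled through the $\F_{q^e}$-linear map $\mathrm{Tr}_{q^m/q^e}$; if $\lambda$ had degree larger than $e$ one could not multiply by a single $c$ in this clean way.
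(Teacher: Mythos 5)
Your proposal is correct and follows essentially the same route as the paper: reduce to the subfield $\F_{q^e}$ via transitivity of the trace, apply \Cref{prop:dualwithdual} there to get the dual $d\langle 1,\lambda,\ldots,\lambda^{e-t-1}\rangle_{\F_q}$, lift it by choosing $c$ with prescribed relative trace $\mathrm{Tr}_{q^m/q^e}(c)=d$, and conclude by checking the sum is direct and comparing $\F_q$-dimensions. Your phrasing of $W^{\perp^*}$ as the full preimage $\mathrm{Tr}_{q^m/q^e}^{-1}\bigl(d\langle 1,\lambda,\ldots,\lambda^{e-t-1}\rangle_{\F_q}\bigr)$ and your explicit treatment of the boundary case $t=e$ are minor presentational refinements over the paper's argument, not a different proof.
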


\begin{proof}
    Note that \begin{equation}  \label{eq:dualZ}
\F_{q^e}^{\perp^*_{q^m/q^e}}=\mathrm{Ker}(\mathrm{Tr}_{q^m/q^e})=:Z,
    \end{equation}
    and $Z$ is and $\F_{q^e}$-subspace of $\F_{q^m}$ having dimension $s-1$. 
    By \Cref{prop:dualwithdual}, we have that \begin{equation} \label{eq:dualbasissubfield}
(\langle 1,\ldots,\lambda^{t-1} \rangle_{\F_q})^{\perp^*_{{q^e/q}}}=d\langle 1,\ldots,\lambda^{e-t-1} \rangle_{\F_q},
    \end{equation}
    for some $d \in \F_{q^e}^*$, where here the orthogonal complement $\perp^*_{q^e/q}$ is defined with respect to $\mathrm{Tr}_{q^e/q}(\cdot)$. 
    Let $c \in \F_{q^m}^*$ such that $\mathrm{Tr}_{q^m/q^e}(c)=d$. Note that $Z \cap  c \langle 1,\lambda,\ldots,\lambda^{e-t-1}\rangle_{\F_q}=\{0\}$. Indeed, assume that there exist $z \in Z$ and $u \in \langle 1,\lambda,\ldots,\lambda^{e-t-1}\rangle_{\F_q}$, such that $z=cu$. Therefore, since $Z$ is an $\F_{q^e}$-subspace, we get that $c \in Z$ and so, by \eqref{eq:dualZ}, we have 
    \[
    0 \neq d=\mathrm{Tr}_{q^m/q^e}(c)=0,
    \]
    a contradiction. 
    Now, we prove that, with this choice of $c$, we have 
    \[
(\langle 1,\lambda,\ldots,\lambda^{t-1}\rangle_{\F_q})^{\perp^*}
=Z \oplus c \langle 1,\lambda,\ldots,\lambda^{e-t-1}\rangle_{\F_q}.
    \]
    For $z \in Z, u \in \langle 1,\lambda,\ldots,\lambda^{e-t-1}\rangle_{\F_q}$ and $w \in \langle 1,\lambda,\ldots,\lambda^{t-1}\rangle_{\F_q}$, we get 
    \[
    \begin{array}{rl}
    \mathrm{Tr}_{q^m/q}((z+cu)w) &= \mathrm{Tr}_{q^m/q} (zw)+\mathrm{Tr}_{q^m/q}(cuw) \\
    &  =\mathrm{Tr}_{q^e/q} \left(\mathrm{Tr}_{q^m/q^e}(zw)\right)+\mathrm{Tr}_{q^e/q} \left(\mathrm{Tr}_{q^m/q^e}(cuw) \right) \\
    &  =\mathrm{Tr}_{q^e/q} \left(\mathrm{Tr}_{q^m/q^e}(zw)\right)+\mathrm{Tr}_{q^e/q} \left((du)w \right) \\
    & = 0 + 0,
   \end{array}
   \]
   where the last equality follows by \eqref{eq:dualZ} and  \eqref{eq:dualbasissubfield}, proving the assertion.
\end{proof}

Finally, we also show how the duality $\perp'$ acts on the direct product of $\F_q$-subspaces of $\F_{q^m}$, that will be useful in the next. 

\begin{proposition} \label{prop:dualcomplement}
    Let $U_1,\ldots,U_k$ be $\F_q$-subspaces of $\F_{q^m}$ and define $U=U_1 \times \ldots \times U_k \subseteq \F_{q^m}^k$. Then
    \[
    U^{\perp'}=U_1^{\perp^*} \times \ldots \times U_k^{\perp^*}.
    \]
\end{proposition}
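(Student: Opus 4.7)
The plan is to prove the nontrivial inclusion $U_1^{\perp^*} \times \ldots \times U_k^{\perp^*} \subseteq U^{\perp'}$ by a direct computation, and then conclude equality via a dimension count, since both sides are $\F_q$-subspaces of $\F_{q^m}^k$.

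First, I would take an arbitrary element $\vv=(v_1,\ldots,v_k)\in U_1^{\perp^*}\times\ldots\times U_k^{\perp^*}$ and an arbitrary $\uu=(u_1,\ldots,u_k)\in U$, so that $u_i\in U_i$ for every $i$. Using the additivity of the trace and the definition of $\sigma'$, I compute
\[
\sigma'(\uu,\vv)=\mathrm{Tr}_{q^m/q}\!\left(\sum_{i=1}^k u_iv_i\right)=\sum_{i=1}^k \mathrm{Tr}_{q^m/q}(u_iv_i).
\]
Each summand vanishes because $v_i\in U_i^{\perp^*}$ and $u_i\in U_i$, giving $\sigma'(\uu,\vv)=0$. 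This shows $\vv\in U^{\perp'}$, and so $U_1^{\perp^*}\times\ldots\times U_k^{\perp^*}\subseteq U^{\perp'}$.

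Next, I would match dimensions. The ambient space $\F_{q^m}^k$ has $\F_q$-dimension $km$, so by the general formula recalled before \Cref{prop:weightdual},
\[
\dim_{\F_q}(U^{\perp'})=km-\dim_{\F_q}(U)=km-\sum_{i=1}^{k}\dim_{\F_q}(U_i).
\]
On the other hand, applying the same formula to each $U_i\subseteq\F_{q^m}$ with respect to $\mathrm{Tr}_{q^m/q}$ gives $\dim_{\F_q}(U_i^{\perp^*})=m-\dim_{\F_q}(U_i)$, and since the direct product of $\F_q$-subspaces has $\F_q$-dimension equal to the sum of the factor dimensions,
\[
\dim_{\F_q}\!\left(U_1^{\perp^*}\times\ldots\times U_k^{\perp^*}\right)=\sum_{i=1}^{k}\bigl(m-\dim_{\F_q}(U_i)\bigr)=km-\sum_{i=1}^{k}\dim_{\F_q}(U_i).
\]
The two dimensions agree, so the inclusion proved above is an equality, establishing the proposition.

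There is no real obstacle: the argument is essentially an unwinding of the definitions combined with a dimension count. The only point worth mentioning is that one must be careful to use the formula for the $\F_q$-dimension of the dual, which is valid because $\sigma$ and $\sigma'$ are nondegenerate (as recalled in the preliminaries), and that the direct product decomposition of $U$ makes the cross-trace terms disappear coordinate by coordinate.
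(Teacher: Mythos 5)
Your proposal is correct and follows essentially the same route as the paper: both prove the inclusion $U_1^{\perp^*}\times\ldots\times U_k^{\perp^*}\subseteq U^{\perp'}$ by expanding $\sigma'$ via additivity of the trace and then conclude equality by comparing $\F_q$-dimensions, each side having dimension $km-\sum_i\dim_{\F_q}(U_i)$. No gaps to report.
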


\begin{proof}
For $(b_1,\ldots,b_k) \in U_1^{\perp^*} \times \ldots \times  U_k^{\perp^*}$, and for every $(a_1,\ldots,a_k) \in U$, 
\[ \begin{array}{rl}
   \sigma'((a_1,\ldots,a_k),(b_1,\ldots,b_k))  & =\mathrm{Tr}_{q^m/q}(a_1b_1+\ldots+a_kb_k)  \\
     & =\mathrm{Tr}_{q^m/q}(a_1b_1)+\ldots +\mathrm{Tr}_{q^m/q}(a_kb_k) \\
     & =0,
\end{array}
\]
implying that \begin{equation} \label{eq:inclusiondual}
U_1^{\perp^*} \times \ldots \times U_k^{\perp^*} \subseteq U^{\perp'}.\end{equation} 
Now, if $\dim_{\F_q}(U_i)=n_i$, for every $i$, we have that $\dim_{\F_q}(U)=n_1+\ldots+n_k$ and hence, $\dim_{\F_q}(U^{\perp'})=km-(n_1+\ldots+n_k)$. On the other hand, $\dim_{\F_q}(U_i^{\perp^*})=m-n_i$ and so $\dim_{\F_q}(U_1^{\perp^*} \times \cdots \times U_k^{\perp^*})=km-(n_1+\ldots+n_k)$. Therefore \[\dim_{\F_q}(U^{\perp'})=\dim_{\F_q}(U_1^{\perp^*} \times \cdots \times U_k^{\perp^*})\]
implying the equality in \eqref{eq:inclusiondual}.
\end{proof}

\subsection{Product of subspaces} \label{subsec:product}
The study of the properties of the product of subspaces in a finite field extension is related to the linear analogue of some problems arising from additive combinatorics, such as Kneser's Theorem \cite{kneser1956summenmengen} and Vosper's theorems \cite{lev2006critical,vosper1956critical}. In recent years, such investigations have been related to different contexts, such as in designing probabilistic decoding algorithms of LRPC codes \cite{gaborit2013low,franch2024bounded}, proposed in ROLLO \cite{melchor2020rollo} for the NIST post-quantum cryptography competition, and in the classifications of geometric objects \cite{napolitano2022clubs,napolitano2023classifications}. In the following, we recall some useful results and provide a characterization for pairs $(U_1,U_2)$ of $\F_q$-subspaces such that the $\F_q$-span of their product is an $\F_q$-hyperplane of $\F_{q^m}$.

For a subset $S$ of $\F_{q^m}$, we denote by $\langle S \rangle_{\F_{q}}$, the $\F_q$-subspace generated by $S$ and for two $\fq$-subspaces $U_1$ and $U_2$ of $\F_{q^m}$, we define \[U_1U_2:=\langle \{a_1 a_2 \colon a_1 \in U_1, a_2\in U_2\}\rangle_{\F_q}.\]

Note, that if $U_1=\langle a_1,\ldots,a_t \rangle_{\F_q}$ and $U_2$ is an $\F_q$-subspace of $\F_{q^m}$, we have that
\begin{equation} \label{eq:productsplitted}
U_1U_2=a_1U_2+\ldots+a_tU_2,
\end{equation}
where the sum on the right-hand side is intended to be the sum of $\F_q$-vector spaces. Moreover, via the duality $\perp^{*}$ defined in the above section, we also have 
\begin{equation}
    \label{eq:productsplittedperp}
(U_1U_2)^{\perp^*}=(a_1U_2+\ldots+a_tU_2)^{\perp^*}=a_1^{-1}U_2^{\perp^*}\cap \ldots \cap a_t^{-1}U_2^{\perp^*}.
\end{equation}

For prime extensions, the linear analog of the Cauchy-Davenport inequality reads as follows. 

\begin{theorem} [see \textnormal{\cite[Corollary 2.5]{hou2002generalization} and \cite[Theorem 3]{bachoc2018revisiting}}] \label{teo:bachocserrazemor}
Assume that $m$ is a prime. Let $U_1,U_2$ be $\F_q$-subspaces of $\F_{q^m}$ such that $\dim_{\F_q}(U_1),\dim_{\F_q}(U_2) \geq 1$ and $\dim_{\F_q}( U_1U_2 ) \leq m-1$. Then  \begin{equation} \label{eq:chauchydav}
    \dim_{\F_q}( U_1U_2 )\geq \dim_{\F_q}(U_1)+\dim_{\F_q}(U_2)-1.
  \end{equation}
\end{theorem}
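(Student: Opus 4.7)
The plan is to mimic the classical combinatorial proof of the Cauchy--Davenport theorem in the $\F_q$-subspace setting, via a linear analogue of the Dyson $e$-transform, with the primality of $m$ entering to rule out obstructing intermediate subfields.

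First, I normalize. Dividing $U_1$ and $U_2$ by non-zero elements alters neither their $\F_q$-dimensions nor $\dim_{\F_q}(U_1 U_2)$, so I may assume $1 \in U_1 \cap U_2$. Then I proceed by induction on $r := \min(\dim_{\F_q} U_1, \dim_{\F_q} U_2)$. The base case $r=1$ is immediate: if $U_1 = \langle a \rangle_{\F_q}$ then $U_1 U_2 = a U_2$, and equality holds.

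For the inductive step, the key tool is the transformation
\[
(U_1, U_2) \longmapsto (U_1', U_2') := (U_1 \cap U_2,\ U_1 + U_2),
\]
which preserves the sum of dimensions and satisfies $U_1' U_2' = (U_1 \cap U_2)(U_1 + U_2) \subseteq U_1 U_2$ by expanding and using commutativity. Since $1 \in U_1 \cap U_2$, both $U_1'$ and $U_2'$ still contain $1$, and $\dim_{\F_q}(U_1' U_2') \leq \dim_{\F_q}(U_1 U_2) \leq m-1$. If $U_1 \not\subseteq U_2$ and $U_2 \not\subseteq U_1$, then $\dim_{\F_q} U_1' < r$, so the induction hypothesis applied to $(U_1',U_2')$ gives $\dim_{\F_q}(U_1 U_2) \geq \dim_{\F_q}(U_1' U_2') \geq \dim_{\F_q} U_1' + \dim_{\F_q} U_2' - 1 = \dim_{\F_q} U_1 + \dim_{\F_q} U_2 - 1$.

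The main obstacle is the degenerate case $U_1 \subseteq U_2$ (the other being symmetric). I would break this inclusion by replacing $U_1$ with $\alpha U_1$ for a suitable $\alpha \in U_2 \setminus \{0\}$: this keeps $\dim_{\F_q} U_1$ and $\dim_{\F_q}(U_1 U_2)$ unchanged, while automatically giving $\alpha \in \alpha U_1 \cap U_2$ so that the subsequent Dyson intersection is non-trivial. The existence of such $\alpha$ with $\alpha U_1 \not\subseteq U_2$ is exactly where the primality of $m$ enters: otherwise $U_2 U_1 \subseteq U_2$, placing $U_1$ inside the stabilizer $S := \{x \in \F_{q^m} : x U_2 \subseteq U_2\}$, which is an $\F_q$-subalgebra of $\F_{q^m}$, hence (by finiteness) a subfield $\F_{q^e}$ with $e \mid m$. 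Primality of $m$ forces $e \in \{1, m\}$; since in the inductive case $\dim_{\F_q} U_1 \geq 2$, the option $e=1$ is excluded, and $e=m$ forces $S = \F_{q^m}$, hence $U_2 = \F_{q^m}$, contradicting $\dim_{\F_q} U_2 \leq \dim_{\F_q}(U_1 U_2) \leq m-1$. Having produced $\alpha$, one applies the Dyson step to $(\alpha U_1, U_2)$ and closes the induction as in the non-degenerate case.
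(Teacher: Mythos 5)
The paper offers no proof of this theorem: it is imported verbatim from the literature (Hou--Leung--Xiang and Bachoc--Serra--Z\'emor), so there is no internal argument to compare against. Your proof is correct and self-contained, and it is worth recording how it relates to the cited sources. You transplant the Dyson $e$-transform proof of Cauchy--Davenport to the $\F_q$-linear setting: the containment $(U_1\cap U_2)(U_1+U_2)\subseteq U_1U_2$ together with the Grassmann identity makes the transform dimension-sum-preserving and product-non-increasing; the normalization $1\in U_1\cap U_2$ (restored after the degenerate case by passing to $\alpha U_1$ with $\alpha\in U_2$) keeps the intersection nontrivial; and primality of $m$ enters only to show that the stabilizer $S=\{x\in\F_{q^m}\colon xU_2\subseteq U_2\}$ --- a finite $\F_q$-subalgebra of a field, hence a subfield $\F_{q^e}$ with $e\mid m$ --- cannot be proper and intermediate, which correctly dispatches the obstructing case $U_2U_1\subseteq U_2$ via the hypothesis $\dim_{\F_q}(U_1U_2)\leq m-1$. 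All the individual steps check out: the induction on $r=\min(\dim_{\F_q}U_1,\dim_{\F_q}U_2)$ is well-founded because the transformed pair has strictly smaller minimum and both members nonzero, and replacing $U_1$ by $\alpha U_1$ changes neither $\dim_{\F_q}U_1$ nor $\dim_{\F_q}(U_1U_2)$. The difference from the cited references is one of scope rather than mechanism: they prove the stronger linear Kneser theorem, bounding $\dim_{\F_q}(U_1U_2)$ from below by $\dim_{\F_q}U_1+\dim_{\F_q}U_2-\dim_{\F_q}H$ where $H$ is the stabilizer of the product, and the present statement is the specialization to prime $m$; your induction targets that special case directly and is shorter, but does not yield the Kneser-type refinement or the critical-pair information that the paper separately imports in its Theorem 2.6.
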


The following classification of pairs of subspaces $(U_1,U_2)$ attaining the equality in \eqref{eq:chauchydav}, also known as \emph{critical pair}, holds.

\begin{theorem} [see \textnormal{\cite[Theorem 3]{bachoc2017analogue} and \cite[Lemma 5]{bachoc2017analogue}}] \label{teo:bachocserrazemorclass}
Assume that $m$ is a prime. Let $U_1,U_2$ be $\F_q$-subspaces of $\F_{q^m}$ such that $2 \leq \dim_{\F_q}(U_1),\dim_{\F_q}(U_2)$. 
\begin{enumerate}
    \item If $\dim_{\F_q}(U_1U_2) \leq m-2$ and 
\begin{equation}
\dim_{\F_q}(U_1U_2) =\dim_{\F_q}(U_1)+\dim_{\F_q}(U_2)-1,
\end{equation}
then $U_1=c_1 \langle 1,\lambda,\ldots,\lambda^{\dim_{\F_q}(U_1)-1}\rangle_{\F_q}$ and $U_2=c_2 \langle 1,\lambda,\ldots,\lambda^{\dim_{\F_q}(U_2)-1}\rangle_{\F_q}$, for some $c_1,c_2,\lambda \in \F_{q^m}^*$.
\item If $\dim_{\F_q}(U_1U_2) \leq m-1$, $U_1=c_1 \langle 1,\lambda,\ldots,\lambda^{\dim_{\F_q}(U_1)-1}\rangle_{\F_q}$, for some $c_1,\lambda \in \F_{q^m}$ and 
\begin{equation} \label{eq:eqproductm-1}
\dim_{\F_q}( U_1U_2) =\dim_{\F_q}(U_1)+\dim_{\F_q}(U_2)-1,
\end{equation}
then $U_2=c_2 \langle 1,\lambda,\ldots,\lambda^{\dim_{\F_q}(U_2)-1}\rangle_{\F_q}$, for some $c_2 \in \F_{q^m}^*$.
\end{enumerate}
\end{theorem}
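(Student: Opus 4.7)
The plan is to handle Part 2 first and then use it to drive the induction in Part 1.

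\emph{Part 2.} Scaling by $c_1^{-1}$, I may assume $U_1 = \langle 1, \lambda, \ldots, \lambda^{t-1}\rangle_{\F_q}$ with $t = \dim_{\F_q}(U_1)$. By \eqref{eq:productsplitted},
\[ U_1 U_2 = V_t, \qquad \text{where } V_i := U_2 + \lambda U_2 + \cdots + \lambda^{i-1} U_2. \]
If $V_{i+1} = V_i$ for some $i$, then $\lambda^i U_2 \subseteq V_i$ and thus $\lambda V_i \subseteq V_i$, making $V_i$ an $\F_q(\lambda)$-subspace of $\F_{q^m}$. Since $t \geq 2$ forces $\lambda \notin \F_q$ and $m$ is prime, this yields $\F_q(\lambda)=\F_{q^m}$, so $V_i = \F_{q^m}$, contradicting $\dim_{\F_q}(V_t) \leq m-1$. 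Hence every inclusion in the chain is strict, and as the total dimension gain is exactly $t-1$ over $t-1$ steps, each step contributes precisely $1$. In particular $\dim_{\F_q}(U_2 \cap \lambda U_2) = \dim_{\F_q}(U_2)-1$. I would then convert this one-step identity into a full geometric-progression description of $U_2$: pick $u_0 \in U_2 \setminus \lambda U_2$, observe that $U_2 = \langle u_0\rangle_{\F_q} \oplus (U_2 \cap \lambda U_2)$, and iteratively use the analogous one-step identities (for $U_2 \cap \lambda^{-1}U_2 \cap \ldots$) to show that $\{u_0, \lambda u_0, \ldots, \lambda^{\dim_{\F_q}(U_2)-1}u_0\}$ is a basis of $U_2$.

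\emph{Part 1.} I would induct on $\dim_{\F_q}(U_1)+\dim_{\F_q}(U_2)$. In the base case $\dim(U_1)=\dim(U_2)=2$, write $U_1 = c_1\langle 1, \lambda\rangle_{\F_q}$ for some $\lambda \in \F_{q^m}$ and apply Part 2 directly to conclude $U_2 = c_2\langle 1, \lambda\rangle_{\F_q}$. For the inductive step, I would replace $U_1$ (or $U_2$) by a well-chosen $\F_q$-hyperplane, producing a critical pair of smaller total dimension. The strict bound $\dim_{\F_q}(U_1 U_2) \leq m-2$, together with \Cref{teo:bachocserrazemor} applied to the reduced pair, provides the slack needed to ensure the reduced pair still satisfies the Cauchy--Davenport equality and remains below the threshold $m-1$. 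The inductive hypothesis then supplies a common $\lambda$ for the reduced pair, and a final application of Part 2 (with the already-identified geometric progression playing the role of $U_1$) promotes the same $\lambda$ back to the original pair.

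\emph{Main obstacle.} The hardest step is the iterative peeling inside Part 2: upgrading the single identity $\dim_{\F_q}(U_2 \cap \lambda U_2) = \dim_{\F_q}(U_2)-1$ to the global conclusion that $U_2$ is a geometric progression in $\lambda$. The primality of $m$ is indispensable here, since any proper subfield $\F_{q^e}\subsetneq\F_{q^m}$ would admit non-progression critical pairs (e.g.\ $U_2=\F_{q^e}$). A secondary difficulty is in the inductive step of Part 1: the shrinking procedure must simultaneously preserve the critical pair condition, maintain the upper bound $\dim_{\F_q} \leq m-2$, and produce a $\lambda$ that is consistent across the successive reductions.
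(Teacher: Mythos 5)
This statement is quoted from the literature (Bachoc--Serra--Z\'emor, \cite[Theorem 3 and Lemma 5]{bachoc2017analogue}); the paper itself gives no proof of it, so there is nothing internal to compare against and your attempt has to be judged on its own merits.

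Your Part 2 is essentially sound in outline: the chain $V_1\subsetneq V_2\subsetneq\cdots\subsetneq V_t$ argument correctly forces $\dim_{\F_q}(U_2+\lambda U_2)=\dim_{\F_q}(U_2)+1$, using that $m$ prime and $\lambda\notin\F_q$ make any $\lambda$-stable nonzero proper subspace impossible. However, the final ``peeling'' step is wrong as literally stated: for an \emph{arbitrary} $u_0\in U_2\setminus\lambda U_2$ the set $\{u_0,\lambda u_0,\ldots,\lambda^{\dim U_2-1}u_0\}$ need not lie in $U_2$ (take $U_2=c\langle 1,\lambda,\ldots,\lambda^{n-1}\rangle_{\F_q}$ and $u_0=c(1+\lambda^{n-1})$; then $\lambda u_0\notin U_2$). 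The repair is standard: set $T=U_2\cap\lambda U_2$, note $T+\lambda T\subseteq\lambda U_2$ and $\lambda T\neq T$, so $T+\lambda T=\lambda U_2$, and induct downward on $\dim T$ to get $T=c\langle 1,\ldots,\lambda^{\dim U_2-2}\rangle_{\F_q}$, whence $U_2=c\lambda^{-1}\langle 1,\ldots,\lambda^{\dim U_2-1}\rangle_{\F_q}$. So Part 2 is fixable.

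Part 1 contains a genuine gap. Your inductive step rests on the existence of an $\F_q$-hyperplane $H$ of $U_1$ such that $(H,U_2)$ is again a critical pair, i.e.\ such that $HU_2\subsetneq U_1U_2$. \Cref{teo:bachocserrazemor} only gives the lower bound $\dim_{\F_q}(HU_2)\ge\dim_{\F_q}(H)+\dim_{\F_q}(U_2)-1$; nothing in ``the slack from $\dim_{\F_q}(U_1U_2)\le m-2$'' forces the product to drop when you pass to a hyperplane, and it is entirely possible a priori that $HU_2=U_1U_2$ for the hyperplanes you try. Establishing that a criticality-preserving reduction exists is precisely the hard content of the Vosper-type theorem: the proof in \cite{bachoc2017analogue} does not proceed by this naive descent but via the linear isoperimetric method (boundary/connectivity, atoms and fragments in the sense of Hamidoune), which is where the structure of critical pairs is actually extracted. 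As written, your Part 1 assumes the conclusion of the difficult step rather than proving it, so the proposal does not constitute a proof of item 1.
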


In the next lemma, we characterize $\F_q$-subspaces of $\F_{q^m}$ satisfying \eqref{eq:eqproductm-1}, and for which $\dim_{\F_q}(U_1U_2)=m-1$.

\begin{lemma} \label{lem:gencriticalm-1}
    Let $U_1,U_2$ be $\F_q$-subspaces of $\F_{q^m}$, with $1 \leq \dim_{\F_q}(U_1),\dim_{\F_q}(U_2) <m$. Assume that $\dim_{\F_q}(U_2)=m-\dim_{\F_q}(U_1)$. If
    \[
    \dim_{\F_q}(U_1U_2)=m-1,
    \]
    then $U_2=cU_1^{\perp^*}$, for some $c \in \F_{q^m}^*$.
\end{lemma}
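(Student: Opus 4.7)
The plan is to exploit the duality identity \eqref{eq:productsplittedperp} to turn the hypothesis $\dim_{\F_q}(U_1U_2) = m-1$ into a statement about the intersection of translates of $U_2^{\perp^*}$, and then a dimension count pins down $U_2^{\perp^*}$ exactly.

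Set $t = \dim_{\F_q}(U_1)$, so $\dim_{\F_q}(U_2) = m-t$ and $\dim_{\F_q}((U_1U_2)^{\perp^*}) = 1$. Choose any basis $a_1,\ldots,a_t$ of $U_1$. By \eqref{eq:productsplittedperp},
\[
(U_1U_2)^{\perp^*} = a_1^{-1}U_2^{\perp^*}\cap \ldots \cap a_t^{-1}U_2^{\perp^*},
\]
and this intersection is $1$-dimensional. Write its generator as $c^{-1}$ for some $c \in \F_{q^m}^*$.

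The condition $c^{-1}\in a_i^{-1}U_2^{\perp^*}$ for every $i$ is equivalent to $a_i c^{-1}\in U_2^{\perp^*}$ for every $i$, hence $c^{-1}U_1 \subseteq U_2^{\perp^*}$. But $\dim_{\F_q}(c^{-1}U_1) = t$ and $\dim_{\F_q}(U_2^{\perp^*}) = m - (m-t) = t$, so in fact $c^{-1}U_1 = U_2^{\perp^*}$. Applying $\perp^*$ to both sides and using the identity $(c^{-1}W)^{\perp^*} = cW^{\perp^*}$ (immediate from the definition: $v\in(c^{-1}W)^{\perp^*}$ iff $\mathrm{Tr}_{q^m/q}((vc^{-1})w)=0$ for all $w\in W$), we conclude $U_2 = cU_1^{\perp^*}$, as required.

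There is no real obstacle here — the statement is essentially an exercise in trace duality combined with \eqref{eq:productsplittedperp}, once one observes that a $1$-dimensional intersection of the $t$ subspaces $a_i^{-1}U_2^{\perp^*}$ (each of dimension $t$) is large enough to force $U_2^{\perp^*}$ itself to be a scalar multiple of $U_1$. The only mild point to be careful about is the dimension bookkeeping: one must use the hypothesis $\dim_{\F_q}(U_2) = m - t$ precisely to match $\dim_{\F_q}(c^{-1}U_1)$ with $\dim_{\F_q}(U_2^{\perp^*})$, which is what upgrades the inclusion to an equality.
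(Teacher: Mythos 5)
Your proof is correct and follows essentially the same route as the paper: both arguments identify the one-dimensional annihilator of the hyperplane $U_1U_2$ (the paper writes $U_1U_2=\mathrm{Ker}(\mathrm{Tr}_{q^m/q}(bx))$ and deduces $bU_2\subseteq U_1^{\perp^*}$ directly, while you invoke \eqref{eq:productsplittedperp} to get $c^{-1}U_1\subseteq U_2^{\perp^*}$) and then upgrade the inclusion to an equality by the same dimension count. The only cosmetic difference is that you land on the dual statement and need one extra application of $\perp^*$ at the end, which you justify correctly.
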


\begin{proof}
Since $\dim_{\F_q}(U_1U_2)=m-1$, there exists $b \in \F_{q^m}^*$, such that $U_1U_2=\mathrm{Ker}(\mathrm{Tr}_{q^m/q}(bx))$, see e.g. \cite{lidl1997finite}. This means that for any $w\in U_2$, we have that 
\[
\mathrm{Tr}_{q^m/q}(buw)=0,
\]
for every $u \in U_1$. Hence, $bw \in U_1^{\perp^*}$. This implies that $bU_2 \subseteq U_1^{\perp^*}$. Finally, by a dimensional argument, we get the assertion.
\end{proof}

\subsection{Rank-metric codes} \label{subsec:rank}

The \textbf{(rank) weight} $\w(\mathbf{v})$ of a vector $\mathbf{v}=(v_1,\ldots,v_n) \in \F_{q^m}^n$ is the $\F_q$-dimension of the vector space generated over $\F_q$ by its entries, i.e, $\w(\mathbf{v})=\dim_{\fq} (\langle v_1,\ldots, v_n\rangle_{\fq})$. The rank distance between two vectors $\uu, \vv \in \F_{q^m}^n$ is defined as $
d(\uu,\vv)=\w(\uu-\vv).$

An $[n,k,d]_{q^m/q}$ \textbf{(rank-metric) code} $\C $ is a $k$-dimensional $\F_{q^m}$-subspace of $\F_{q^m}^n$ endowed with the rank distance, where the parameter $d$ is its \textbf{minimum distance}, and it is defined as $
d=\min\{d(\cc_1,\cc_2) \colon \cc_1, \cc_2 \in \C, \cc_1 \neq \cc_2  \}.$ We also write that $\C$ is an  $[n,k]_{q^m/q}$ code if the parameter $d$ is not known/relevant. 
A \textbf{generator matrix} for $\C$ is a matrix $G\in \F_{q^m}^{k\times n}$, such that $\C=\{\xx G \colon \xx \in \F_{q^m}^k\}$. \\
Moreover, we denote by $A_i(\C)$, or simply by $A_i$, the number of codewords in $\C$ of weight $i \in \{0,\ldots,n\}$ and $(A_0,\ldots,A_n)$ is called the \textbf{weight distribution} of $\C$.

A Singleton-like bound for a rank-metric code holds. For an $[n,k,d]_{q^m/q}$ code, it holds \begin{equation}\label{eq:boundgen}
mk \leq \max\{m,n\}(\min\{n,m\}-d+1),\end{equation}
see \cite{delsarte1978bilinear}. An $[n,k,d]_{q^m/q}$ code is said to be \textbf{maximum rank distance code (MRD)} if its parameters attains the equality in \eqref{eq:boundgen}.


We also recall the notion of support of a vector. 
Let $\Gamma=(\gamma_1,\ldots,\gamma_m)$ be an ordered $\fq$-basis of $\F_{q^m}$. For any vector $\mathbf{v}=(v_1, \ldots ,v_n) \in \F_{q^m}^n$ define the matrix $\Gamma(\vv)\in \F_{q}^{n \times m}$, where
$$v_{i} = \sum_{j=1}^m \Gamma (\vv)_{ij}\gamma_j, \qquad \mbox{ for all } i \in \{1,\ldots,n\},$$
that is $\Gamma(\vv)$ is the matrix expansion of the vector $\vv$ with respect to the basis $\Gamma$ of $\F_{q^m}$. The \textbf{(rank) support} of $\vv$ is defined as the column span of $\Gamma(\vv)$, i.e. 
$\supp(\vv)=\mathrm{colsp}(\Gamma(\vv)) \subseteq \fq^n.$ The support of a vector does not depend on the choice of $\Gamma$ and we can talk about the support of a vector without mentioning $\Gamma$, see e.g. \cite[Proposition 2.1]{alfarano2022linear} and clearly, $\w(\vv)=\dim_{\F_q}(\supp(\vv))$. Finally, the support of a code $\C$ is the sum of the supports of the codewords defining a basis for $\C$, i.e.
\begin{equation} \label{eq:supportcode}
\supp(\C):=\supp(\cc_1)+\ldots+\supp(\cc_k),
\end{equation}
where $\{\cc_1,\ldots,\cc_k\}$ is a basis of $\C$ and the sum on the right-hand side is intended to be of vector spaces, see e.g. \cite[Proposition 2.4]{alfarano2022linear}. We also refer to \cite{alfarano2022linear,martinez2017relative}, for more details on the rank support. We will say that two codes $\C$ and $\C'$ are \textbf{equivalent} if and only if there exist a matrix $A \in \mathrm{GL}(n,q)$ such that
$\C'=\C\cdot A=\{\cc A : \cc \in \C\}$. We say that an $\nk$ code is \textbf{nondegenerate} if $\supp(\C)=\F_q^n$, or equivalently if the $\F_q$-column span of any generator matrix of $\C$ has dimension $n$. 

\subsection{The geometry of rank-metric codes}

The geometric counterpart of rank-metric codes are the systems. 
 \begin{definition}
An $[n,k,d]_{q^m/q}$ (or simply $[n,k]_{q^m/q}$) \textbf{system} $U$ is an $\F_q$-subspace of $\F_{q^m}^k$ of dimension $n$, such that
$ \langle U \rangle_{\F_{q^m}}=\F_{q^m}^k$ and
$$ d=n-\max\left\{\dim_{\F_q}(U\cap H) \mid H \textnormal{ is an $\F_{q^m}$-hyperplane of }\F_{q^m}^k\right\}.$$
\end{definition}

Two $[n,k,d]_{q^m/q}$ systems $U$ and $U'$ are \textbf{equivalent} if there exists an invertible matrix $B\in\GL(k,\F_{q^m})$ such that
$$ U'=U\cdot B=\{\xx B \colon \xx \in U\}.$$

Rank-metric codes and systems are related in the following way. Let $\C$ be an $[n,k]_{q^m/q}$ code and $G$ be an its generator matrix. Then the $\F_q$-subspace $U$ obtained as the $\F_q$-span of the columns of $G$ is called a \textbf{system associated with} $\C$. Viceversa, let $U$ be an $[n,k]_{q^m/q}$ system. Define $G$ as the matrix whose columns are an $\F_q$-basis of $U$ and let $\C$ be the code generated by $G$. $\C$ is called a \textbf{code associated with} $U$.

For a vector $\xx=(x_1,\ldots,x_k) \in \F_{q^m}^k$, by $\xx^\perp$ we denote the orthogonal complement of $\langle \xx\rangle_{\F_{q^m}}$ with respect to the standard scalar product in $\F_{q^m}^k$, i.e. \[\xx^\perp=\{(y_1,\ldots,y_k) \in \F_{q^m}^k \colon \sum_{i=1}^k x_iy_i=0\}.\] The following holds.

\begin{theorem} [see \textnormal{\cite{Randrianarisoa2020ageometric}}] \label{th:connection}
Let $\C$ be an $[n,k,d]_{q^m/q}$ code and let $G$ be a generator matrix.
Let $U \subseteq \F_{q^m}^k$ be the $\F_q$-span of the columns of $G$. 
The weight of any codeword $\xx G \in \C$ is
\begin{equation}\label{eq:relweight}
w(\xx G) = n - \dim_{\fq}(U \cap \xx^{\perp}).\end{equation}
\end{theorem}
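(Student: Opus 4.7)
The plan is to prove the identity by applying the $\F_q$-linear rank–nullity theorem to a suitably chosen evaluation map from $U$ to $\F_{q^m}$. The key observation is that the weight of a codeword $\xx G$ is, by definition, the $\F_q$-dimension of the $\F_q$-span of its entries, and those entries are precisely the pairings of $\xx$ with the columns of $G$, which in turn form an $\F_q$-generating set for $U$. So the weight should be detectable by looking at the image of a linear map on $U$, and the dimension correction should come from its kernel.

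Concretely, I would write $G=(g_1\mid g_2\mid\cdots\mid g_n)$ with $g_i\in\F_{q^m}^k$, so that by definition of $U$ one has $U=\langle g_1,\ldots,g_n\rangle_{\F_q}$ and, since $\C$ is a (nondegenerate) $\nk$ code, $\dim_{\F_q}(U)=n$. Fix $\xx\in\F_{q^m}^k$ and define the $\F_q$-linear map
\[
\phi_{\xx}\colon U\longrightarrow \F_{q^m},\qquad \phi_{\xx}(u)=\sum_{i=1}^k x_i u_i,
\]
where $u=(u_1,\ldots,u_k)$. By $\F_q$-linearity, the image of $\phi_{\xx}$ equals $\langle \xx\cdot g_1,\ldots,\xx\cdot g_n\rangle_{\F_q}$, and since the entries of $\xx G$ are exactly the scalars $\xx\cdot g_i$, this image coincides with the $\F_q$-span of the coordinates of $\xx G$. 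Hence
\[
\dim_{\F_q}\big(\mathrm{im}\,\phi_{\xx}\big)=w(\xx G).
\]
On the other hand, the kernel of $\phi_{\xx}$ is the set of vectors in $U$ orthogonal to $\xx$ under the standard bilinear form, which is exactly $U\cap \xx^{\perp}$.

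Applying the rank–nullity theorem to $\phi_{\xx}$ yields
\[
n=\dim_{\F_q}(U)=\dim_{\F_q}(\ker\phi_{\xx})+\dim_{\F_q}(\mathrm{im}\,\phi_{\xx})=\dim_{\F_q}(U\cap \xx^{\perp})+w(\xx G),
\]
from which \eqref{eq:relweight} follows by rearrangement.

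I do not expect any genuine obstacle: the whole argument is rank–nullity, once the right map $\phi_{\xx}$ is identified. The only point that needs a small sanity check is that $\dim_{\F_q}(U)=n$, i.e.\ that the columns of $G$ are $\F_q$-linearly independent. This is implicit in the standing assumption that $\C$ is associated with an $\nk$ system $U$ (equivalently, that $\C$ is nondegenerate), and without it the formula should be read with $n$ replaced by $\dim_{\F_q}(U)$ — the same proof still goes through verbatim.
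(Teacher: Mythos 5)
Your proof is correct: the paper states this result without proof, citing \cite{Randrianarisoa2020ageometric}, and your rank--nullity argument applied to the evaluation map $u\mapsto \xx\cdot u$ on $U$ is precisely the standard argument given in that reference. Your closing caveat is also the right one --- the identity as stated needs $\dim_{\F_q}(U)=n$, i.e.\ nondegeneracy of $\C$, which the paper leaves implicit here and makes explicit in the proposition immediately following this theorem.
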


As a consequence, for an $\nkd$ code $\C$, we have that
\begin{equation} \label{eq:distancedesign}
d=n - \max\left\{ \dim_{\fq}(U \cap H)  \colon H\mbox{ is an } \F_{q^m}\mbox{-hyperplane of }\F_{q^m}^k  \right\}.
\end{equation}

Moreover, it can be proved that two codes are equivalent if and only if their associated systems are equivalent. 
The above argument allows to estabilish a one-to-one correspondence between equivalence classes of $[n,k,d]_{q^m/q}$ systems and equivalence classes of $[n,k,d]_{q^m/q}$ codes with respect to equivalence relation, see \cite{Randrianarisoa2020ageometric, alfarano2022linear}.

\begin{remark} \label{rk: changebasissystem}
    We note that if $G \in \F_{q^m}^{k\times n}$ is a generator matrix for the code $\C$, then any code $\C'$ equivalent to $\C$, via some $A \in \GL(n,q)$, has generator matrix $G'$ of the form $G'=BGA,$
for some $B\in \GL(k,q^m)$ and $A \in \GL(n,q)$. Moreover, if the codes $\C$ and $\C'$ are equivalent, their associated systems $U$ and $U'$ are related by $U'=U \cdot B=\{\xx B \colon \xx \in U\}$, for some $B \in \GL(k,q^m)$. Finally, let $U$ be a system associated with a code $\C$. Then, any code associated with $U$, due to the different choices of the basis of $U$, is of the form $\C \cdot A=\{\cc A \colon \cc \in \C\}$, for some $A \in \GL(n,q)$.
\end{remark}


\begin{proposition}
Let $\C$ be a nondegenerate $[n,k]_{q^m/q}$ code $\C$ having as generator matrix $G$ and associated system $U$, such that $U$ is the $\F_q$-column span of $G$. For every $\xx \in \F_{q^m}^k$, the following equivalences hold
\begin{equation} \label{eq:relationweightdual}
\w(\xx G)=i \Longleftrightarrow \dim_{\F_q}(U\cap \xx^{\perp})=n-i \Longleftrightarrow \dim_{\F_q}(U^{\perp'}\cap \langle \xx \rangle_{\F_{q^m}})=m-i,
\end{equation}
\end{proposition}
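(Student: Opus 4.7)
My plan is to show that both equivalences in \eqref{eq:relationweightdual} are immediate consequences of results already established earlier in the preliminaries, so the proof reduces to bookkeeping with dimensions.

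For the first equivalence, I would simply invoke \Cref{th:connection}, which states that $\w(\xx G) = n - \dim_{\F_q}(U \cap \xx^{\perp})$. Reading this equality in the form of an equivalence, $\w(\xx G) = i$ holds if and only if $\dim_{\F_q}(U \cap \xx^{\perp}) = n - i$, and the first biconditional is established without further work.

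For the second equivalence, the plan is to apply \Cref{prop:weightdual} to the pair $(U^{\perp'}, W)$, where I set $W := \langle \xx \rangle_{\F_{q^m}}$. (I restrict to $\xx \neq 0$; the case $\xx = 0$ forces $i = 0$ and the relevant condition is the first dimension equation, which already holds.) Since $W$ is an $\F_{q^m}$-subspace of $\F_{q^m}^k$, one has $W^{\perp'} = W^{\perp} = \xx^{\perp}$ and $\dim_{\F_q}(W) = m$, as recalled in the preliminaries. Applying \Cref{prop:weightdual} to $(U^{\perp'}, W)$ gives
\[
\dim_{\F_q}\bigl((U^{\perp'})^{\perp'} \cap W^{\perp}\bigr) = \dim_{\F_q}(U^{\perp'} \cap W) + km - \dim_{\F_q}(U^{\perp'}) - \dim_{\F_q}(W).
\]
Using $(U^{\perp'})^{\perp'} = U$ and $\dim_{\F_q}(U^{\perp'}) = km - n$, this simplifies to
\[
\dim_{\F_q}(U \cap \xx^{\perp}) = \dim_{\F_q}\bigl(U^{\perp'} \cap \langle \xx \rangle_{\F_{q^m}}\bigr) + n - m,
\]
from which the equivalence of $\dim_{\F_q}(U \cap \xx^{\perp}) = n - i$ with $\dim_{\F_q}\bigl(U^{\perp'} \cap \langle \xx \rangle_{\F_{q^m}}\bigr) = m - i$ is immediate.

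The main (and really only) obstacle is the careful bookkeeping: one needs to use the involutivity $(U^{\perp'})^{\perp'} = U$ of the duality $\perp'$, the coincidence $W^{\perp'} = W^{\perp}$ for $\F_{q^m}$-subspaces $W$, and the $\F_q$-dimension formula $\dim_{\F_q}(U^{\perp'}) = km - \dim_{\F_q}(U)$, all of which are explicitly recorded in the preliminaries. No genuine difficulty arises, and the proof is essentially a one-line application of the two earlier results.
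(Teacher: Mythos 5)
Your proof is correct and follows exactly the paper's route: the first equivalence is \eqref{eq:relweight} from \Cref{th:connection}, and the second is \Cref{prop:weightdual} applied to the pair $(U^{\perp'},\langle \xx\rangle_{\F_{q^m}})$ together with the dimension formula for $\perp'$, which is precisely the one-line argument the paper gives. Your explicit bookkeeping (and the remark that $\xx\neq 0$ is needed for the third condition to make sense) only adds detail to the same proof.
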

\begin{proof}
    The assertion immediately follows by Proposition \ref{prop:weightdual} together with \eqref{eq:relweight}.
\end{proof}

\section{Direct sum of 1-dimensional MRD codes} \label{sec:directsum}

In this section, we investigate \emph{completely decomposable codes}, i.e. codes that are, up to equivalence, the (external) direct sum of $1$-dimensional MRD codes.

Let $\C_1,\ldots,\C_t$ be codes such that $\C_i$ is an $[n_i,k_i]_{q^m/q}$ code. We define the \textbf{(external) direct sum} $\C_1\oplus \ldots\oplus\C_t$ of $\C_1,\ldots,\C_t$ as the $[n_1+\ldots+n_t,k_1+\ldots+k_t]_{q^m/q}$ code defined as
\begin{equation} \label{eq:defdecomposable}
\bigoplus_{i=1}^t\C_i=\C_1 \oplus \ldots \oplus \C_t:=\{(\cc_1,\ldots,\cc_t) \colon \cc_i \in \C_i\} \subseteq \F_{q^m}^{n_1} \times \cdots \times \F_{q^m}^{n_t}=\F_{q^m}^{n_1+\ldots+n_t}.  
\end{equation}

We investigate codes as in \eqref{eq:defdecomposable}, where the $\C_i$'s are 1-dimensional MRD codes, or equivalently, 1-dimensional nondegenerate codes.

\begin{lemma} \label{lem:MRDnondegen}
    Let $\C$ be an $[n,1]_{q^m/q}$ code. Then $\C$ is an MRD code if and only if $\C$ is nondegenerate, or in other words $\C=\langle \cc \rangle_{\F_{q^m}}$, with $\w(\cc)=n$. 
\end{lemma}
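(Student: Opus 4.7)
The plan is to exploit that in a $1$-dimensional code every nonzero codeword is an $\F_{q^m}$-scalar multiple of a single generator, so the rank weight is constant on $\C \setminus \{0\}$, and then to show that the same numerical condition on this constant weight characterizes both being MRD and being nondegenerate.

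First I would write $\C = \langle \cc \rangle_{\F_{q^m}}$ and observe that for any $\alpha \in \F_{q^m}^*$, the entries of $\alpha \cc$ span $\alpha \langle c_1,\ldots,c_n\rangle_{\F_q}$ over $\F_q$, which has the same $\F_q$-dimension as $\langle c_1,\ldots,c_n\rangle_{\F_q}$. Thus $\w(\alpha\cc)=\w(\cc)$ for every nonzero $\alpha$, and the minimum distance of $\C$ satisfies $d = \w(\cc) \le \min\{n,m\}$ (since the span of the entries lives inside $\F_{q^m}$).

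Next I would feed $k=1$ into the Singleton-like bound \eqref{eq:boundgen}. When $n \le m$ this reduces to $d \le n$, with equality (the MRD condition) exactly when $\w(\cc) = n$. When $n > m$, the bound would force $m - d + 1 = m/n < 1$, i.e.\ $d > m$, incompatible with $d \le m$; so no $1$-dimensional MRD code exists in that range, and the only consistent case is $n \le m$ with $\w(\cc) = n$. Hence $\C$ is MRD if and only if $\w(\cc) = n$.

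Finally I would compare with nondegeneracy. The row vector $\cc$ itself is a generator matrix of $\C$, so the $\F_q$-column span of this matrix is exactly $\langle c_1,\ldots,c_n\rangle_{\F_q} \subseteq \F_{q^m}$, whose $\F_q$-dimension is $\w(\cc)$. By the definition recalled in Subsection~\ref{subsec:rank}, $\C$ is nondegenerate if and only if this dimension equals $n$, i.e.\ $\w(\cc) = n$. Combining the two characterizations gives the claimed equivalence. There is essentially no obstacle here; the only subtlety is handling $n > m$, where both the MRD property and nondegeneracy fail vacuously for dimensional reasons, so the equivalence still holds.
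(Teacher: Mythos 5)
Your proof is correct and follows essentially the same route as the paper: plugging $k=1$ into the Singleton-like bound \eqref{eq:boundgen} to see that the MRD condition forces $n\le m$ and $d=n$, and then identifying nondegeneracy of a one-dimensional code with its generator having weight $n$. The only cosmetic difference is that you split into the cases $n\le m$ and $n>m$, whereas the paper deduces $\max\{m,n\}=m$ directly from the integrality of $d$; both handle the degenerate range $n>m$ equivalently.
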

 
\begin{proof}
Assume that $\C$ is an MRD code. Then $m=\max\{m,n\}(\min\{m,n\}-d+1)$, implying that 
\[
d=\min\{m,n\}-\frac{m}{\max\{m,n\}}+1.
\]
The minimum distance $d$ is a positive integer and so $\max\{m,n\}=m$, from which $n \leq m$ and $d=n$. Thus, $\C=\langle \cc \rangle_{\F_{q^m}}$, for some $\cc \in \F_{q^m}^n$ of weight $n$, implying that $\C$ is a nondegenerate code.
    Conversely, if $\C$ is nondegenerate then necessarily $n \leq m$ and $d=n$. Therefore, the equality in \eqref{eq:boundgen} holds. 
\end{proof}

In the following, for $\ell$ matrices $A_1,\ldots,A_{\ell}$, where $A_i \in \F_q^{t_i \times n_i}$, the direct sum $A_1\oplus \ldots \oplus A_{\ell}$ is the block matrix 
\[
A_1\oplus \ldots \oplus A_{\ell}=\begin{pmatrix}
            A_1 & 0 & 0 & \cdots & 0 \\
            0 & A_2 & 0 & \cdots  & 0 \\
            0 & 0 & \ddots & \cdots & 0 \\
            \vdots & \vdots & \vdots & \ddots & \vdots \\
            0 & 0 & 0 & \cdots & A_{\ell}
        \end{pmatrix} \in \F_q^{(t_1+\ldots+t_{\ell}) \times (n_1 +\ldots+n_{\ell})}.
\]

In the next result, we characterize codes that are, up to equivalence, the direct sum of 1-dimensional MRD codes as those that admit a basis ${\mathbf{c}_1, \ldots, \mathbf{c}_k}$ such that the sum of the weights of the $\mathbf{c}_i$'s equals the length of the code. Additionally, equivalent conditions in terms of support of the $\cc_i$'s and of generator matrix for such codes are provided.

\begin{theorem} \label{th:characterization1dim}
    Let $\C$ be an $\nk$ code. Let $m>n_1 \geq \ldots \geq n_k \geq 1$ be positive integers. Then the following are equivalent:
    \begin{enumerate}[$1)$]
        \item $\C$ is equivalent to a code $\C'=\bigoplus_{i=1}^k \C_i$, where $\C_i \mbox{ is an MRD } [n_i,1]_{q^m/q} \mbox{ code, for every }i$;
 \item $\C$ admits a basis $\{\cc_1,\ldots,\cc_k\}$ such that $
n=\sum\limits_{i=1}^kn_i$ and 
\begin{equation} \label{eq:directsumsupport}
\supp(\C)=\bigoplus\limits_{i=1}^k \supp(\cc_i),\end{equation}
with $\dim_{\F_q}(\supp(\cc_i))=n_i$, for every $i$;
\item $\C$ admits a basis $\{\cc_1,\ldots,\cc_k\}$ such that
$
n=\sum\limits_{i=1}^kn_i, \mbox{ with }n_i=\w(\cc_i)$, for every $i$;
        \item a generator matrix of a code equivalent to $\C$ is of the form 
        \[ G=\begin{pmatrix}
            \uu_1 & 0 & 0 & \cdots & 0 \\
            0 & \uu_2 & 0 & \cdots  & 0 \\
            0 & 0 & \ddots & \cdots & 0 \\
            \vdots & \vdots & \vdots & \ddots & \vdots \\
            0 & 0 & 0 & \cdots & \uu_k
        \end{pmatrix}=\mathbf{u}_1 \oplus \mathbf{u}_2 \oplus \ldots \oplus \mathbf{u}_k,\]
    where $\mathbf{u}_i \in \F_{q^m}^{1 \times n_i}$ and $\w(\uu_i)=n_i$, for each $i$.
    \end{enumerate}
\end{theorem}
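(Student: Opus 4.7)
The plan is to establish the cycle of implications $(1)\Rightarrow(4)\Rightarrow(2)\Rightarrow(3)\Rightarrow(1)$. The equivalence $(1)\Leftrightarrow(4)$ is essentially a translation: by \Cref{lem:MRDnondegen}, each $1$-dimensional MRD code $\C_i$ of length $n_i$ is a line $\langle\uu_i\rangle_{\F_{q^m}}$ with $\w(\uu_i)=n_i$, so the direct sum $\bigoplus_i\C_i$ has generator matrix $\uu_1\oplus\cdots\oplus\uu_k$, and conversely a code with such a generator matrix is manifestly the direct sum of the lines $\langle\uu_i\rangle_{\F_{q^m}}$, each of which is $1$-dimensional MRD by the same lemma. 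The implication $(2)\Rightarrow(3)$ is immediate from $\w(\cc_i)=\dim_{\F_q}\supp(\cc_i)$.

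For $(4)\Rightarrow(2)$, I would first observe that property (2) is invariant under the $\GL(n,q)$-action defining code equivalence. A direct computation with the matrix expansion yields $\Gamma(\vv A)=A^{T}\Gamma(\vv)$ for any $A\in\GL(n,q)$, and hence $\supp(\vv A)=A^{T}\supp(\vv)$. Since $A^{T}$ is an $\F_q$-linear automorphism of $\F_q^n$, direct-sum decompositions of supports are preserved by equivalence. It therefore suffices to check (2) on a code whose generator matrix already has the shape $\uu_1\oplus\cdots\oplus\uu_k$: the rows $\cc_i$ are supported in pairwise disjoint standard coordinate blocks of $\F_q^n$, of dimensions $n_i=\w(\uu_i)$, so the direct-sum decomposition of $\supp(\C)$ is transparent.

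The core step is $(3)\Rightarrow(1)$. Starting from a basis $\{\cc_1,\ldots,\cc_k\}$ of $\C$ with $\w(\cc_i)=n_i$ and $\sum_{i=1}^k n_i=n$, the strategy is to realign the supports $\supp(\cc_i)$ with the standard coordinate blocks of $\F_q^n$ by a single $A\in\GL(n,q)$; for this to be possible the $\supp(\cc_i)$ must be in direct sum and their sum must fill $\F_q^n$. The dimension chain
\[
n=\sum_{i=1}^k\dim_{\F_q}\supp(\cc_i)\;\geq\;\dim_{\F_q}\!\Bigl(\sum_{i=1}^k\supp(\cc_i)\Bigr)\;=\;\dim_{\F_q}\supp(\C)\;\leq\;n
\]
must be shown to collapse to equalities, yielding both the directness of the sum and the identity $\supp(\C)=\F_q^n$. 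Once this is established, concatenating ordered $\F_q$-bases of $\supp(\cc_1),\ldots,\supp(\cc_k)$ produces an ordered basis of $\F_q^n$, and choosing $A\in\GL(n,q)$ so that $A^{T}$ sends this basis to the standard basis of $\F_q^n$ in order, the equality $\supp(\cc_iA)=A^{T}\supp(\cc_i)$ places $\cc_iA$ in the $i$-th standard coordinate block. Consequently $\cc_iA=(\mathbf{0},\ldots,\mathbf{0},\uu_i,\mathbf{0},\ldots,\mathbf{0})$ with $\uu_i\in\F_{q^m}^{1\times n_i}$ of weight $n_i$, and $\C\cdot A$ has generator matrix $\uu_1\oplus\cdots\oplus\uu_k$, exhibiting it as $\bigoplus_i\langle\uu_i\rangle_{\F_{q^m}}$, a direct sum of $1$-dimensional MRD codes.

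The main obstacle is precisely the dimension chain displayed above: forcing its inequalities to be equalities amounts to establishing that $\dim_{\F_q}\supp(\C)=n$, i.e.\ the nondegeneracy of $\C$, which is the genuinely substantive input in the argument. Once this is secured, the construction of the change-of-basis matrix $A$ and the extraction of the block-diagonal generator matrix are routine bookkeeping, driven throughout by the transformation formula $\supp(\vv A)=A^{T}\supp(\vv)$.
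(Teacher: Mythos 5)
Your reductions $(1)\Leftrightarrow(4)$, $(2)\Rightarrow(3)$ and $(4)\Rightarrow(2)$ are fine and correspond to the routine parts of the paper's argument. The problem is the crux $(3)\Rightarrow(1)$, where your text is not yet a proof: you reduce everything to the assertion that the displayed dimension chain ``must be shown to collapse to equalities'', i.e.\ that the supports $\supp(\cc_1),\ldots,\supp(\cc_k)$ are in direct sum and fill $\F_q^n$, and then you defer exactly that assertion as ``the genuinely substantive input'' without supplying an argument. Note that the chain you write down contains no squeeze at all: it says $n\geq\dim_{\F_q}\supp(\C)$ and $\dim_{\F_q}\supp(\C)\leq n$, two inequalities pointing the same way. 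Worse, no counting of support dimensions can close this gap, because condition $3)$ alone does not force the supports into direct sum: for $\lambda\in\F_{q^m}\setminus\F_q$ the degenerate code $\langle(1,\lambda,0),(1,1,0)\rangle_{\F_{q^m}}$ has the basis $\{(1,\lambda,0),(1,1,0)\}$ of weights $2$ and $1$ summing to $n=3$, yet $\supp((1,1,0))\subseteq\supp((1,\lambda,0))$. So what you call ``routine bookkeeping'' is in fact the entire content of the implication (it is essentially the statement $3)\Rightarrow 2)$), and some structural input beyond dimension counting is unavoidable.

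The missing idea, which is how the paper proves $3)\Rightarrow 4)$, is to work on the dual side of the associated system rather than with supports. Writing $U\subseteq\F_{q^m}^k$ for the $\F_q$-column span of the generator matrix with rows $\cc_1,\ldots,\cc_k$, the relation \eqref{eq:relationweightdual} converts the hypothesis $\sum_i\w(\cc_i)=n$ into $\sum_{i=1}^k\dim_{\F_q}(U^{\perp'}\cap\langle\ee_i\rangle_{\F_{q^m}})=km-n=\dim_{\F_q}(U^{\perp'})$. The decisive point is that the subspaces $U^{\perp'}\cap\langle\ee_i\rangle_{\F_{q^m}}$ sit inside the coordinate lines $\langle\ee_i\rangle_{\F_{q^m}}$, which are already in direct sum; hence their sum is automatically direct, the dimension count forces it to equal $U^{\perp'}$, so $U^{\perp'}=W_1\times\cdots\times W_k$ and, by \Cref{prop:dualcomplement}, $U=U_1\times\cdots\times U_k$, which is condition $4)$. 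This independence obtained ``for free'' on the dual side is precisely what your support-side chain lacks. You should also be aware that for degenerate codes \eqref{eq:relweight} reads $\w(\xx G)=\dim_{\F_q}(U)-\dim_{\F_q}(U\cap\xx^{\perp})$, so the example above shows that nondegeneracy must enter the argument somewhere; the paper's proof uses it tacitly at exactly this step, and your write-up would need to make that hypothesis (or its derivation) explicit before the realignment by $A\in\GL(n,q)$ can be carried out.
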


\begin{proof} 
\emph{$\underline{1) \Rightarrow 2)}$.} Assume that $\C$ is equivalent to a code $\C'=\bigoplus_{i=1}^k \C_i$, where $\C_i$ is an MRD $[n_i,1]_{q^m/q}$ code, for every $i$. So, $\C_i=\langle \cc_i'\rangle_{\F_{q^m}}$, with $\w(\cc_i')=n_i$, for any $i$. Clearly $n=n_1+\cdots+n_k$. Now, since $\C$ is equivalent to $\C'$, there exists a matrix $A \in \GL(n,q)$ such that $\C=\C' \cdot A$. Then the codewords \[
\cc_1''=(\cc_1',\mathbf{0},\ldots,\mathbf{0})A, \cc_2''=(\mathbf{0},\cc_2',\mathbf{0},\ldots,\mathbf{0})A, \ldots,\cc_k''=(\mathbf{0},\ldots,\mathbf{0},\cc_k')A \in \C' \cdot A=\C,
\]
form a basis of $\C$. Note that 
\[
\supp(\cc_i'')= \supp((\mathbf{0},\cc_i',\mathbf{0},\ldots,\mathbf{0}))A = (\underbrace{(0,\ldots,0)}_{n_1+\cdots+n_{i-1}} \times \F_q^{n_i}\times \underbrace{(0,\ldots,0)}_{n_{i+1}+\cdots+n_{k}}  )A, 
\]
see e.g. \cite[Proposition 2.1]{alfarano2022linear}, from which we get the assertion.
\\

\emph{$\underline{2) \Rightarrow 3)}$.}  The assertion immediately follows by the fact that $\w(\cc)=\dim_{\F_q}(\supp(\cc))$, for any codeword $\cc \in \C$.
 
\emph{$\underline{3) \Rightarrow 4)}$.} Assume that there exists a basis $\{\cc_1,\ldots,\cc_k\}$ of $\C$ as in the hypotheses. Let $G'$ be a generator matrix for $\C$ having as rows the codewords $\cc_1,\ldots,\cc_k$. Therefore, $\cc_1=\ee_1G',\ldots,\cc_k=\ee_kG',$
where the $\ee_i$'s are the vectors of the standard basis of $\F_{q^m}^k$. Let $U$ be a system associated with $\C$ defined as the $\F_q$-column span of $G'$.
By using \eqref{eq:relweight}, we have that 
\[
\dim_{\F_q}(U \cap \ee_i^{\perp})=n-\w(\cc_i)    \mbox{ and } \sum_{i=1}^k \dim_{\fq}(U \cap \ee_i^{\perp})=kn-n. 
\]
Let consider the dual subspace $U^{\perp'} \subseteq \F_{q^m}^k$ of $U$. Then $\dim_{\F_q}(U^{\perp'})=km-n$ and by \eqref{eq:relationweightdual}, we get
\[
\sum_{i=1}^k \dim_{\F_q}(U^{\perp'} \cap \langle \ee_i \rangle_{\fqm})=km-n. 
\]
Hence, 
\[
U^{\perp'}=(U^{\perp'}\cap \langle \ee_1 \rangle_{\fqm}) \oplus \ldots \oplus (U^{\perp'} \cap \langle \ee_k \rangle_{\fqm}),
\]
i.e. $U^{\perp'}\cap \langle \ee_i \rangle_{\fqm}=\{(0,\ldots,0,w_i,0,\ldots,0) \colon w_i \in W_i\}$, where $W_i$ is an $\F_q$-subspace of $\F_{q^m}$ having dimension $m-n_i$. This implies that 
\[U^{\perp'}=W_1 \times W_2 \times \ldots \times W_k.\]
Now, by \Cref{prop:dualcomplement}, we have that \[U=(U^{\perp'})^{\perp'}=
(W_1\times \ldots \times W_k)^{\perp'}=U_1 \times U_2 \times \ldots \times U_k,
\]
with $U_i=W_i^{\perp^*}$, where $W_i^{\perp^*}=\{ a \in \F_{q^m} \colon \mathrm{Tr}_{q^m/q}(ab)=0, \mbox{ for any }b \in W_i \}$. Also, $\dim_{\F_q}(U_i)=m-\dim_{\F_q}(W_i)=n_i$. The assertion follows by the definition of system associated to $\C$ and Remark \ref{rk: changebasissystem}. 

\emph{$\underline{4) \Rightarrow 1)}$.} Clearly, the code having $G$ as a generator matrix is of the form $\bigoplus_{i}^k \langle \uu_i\rangle_{\F_{q^m}}$. Hence, $\C$ is equivalent to $\bigoplus_{i}^k \C_i$, where $\C_i=\langle \uu_i\rangle_{\F_{q^m}}$ and so the assertion follows.
\end{proof}

\begin{definition}
An $\nk$ code $\C$ is said to be \textbf{completely decomposable of type $(n_1,\ldots,n_k)$}, for some integers $m>n_1 \geq n_2 \geq \ldots \geq n_k \geq 1$ if one of the equivalent conditions in Theorem \ref{th:characterization1dim} occurs. A generator matrix $G$ of the form
\[
G=\uu_1 \oplus \ldots \oplus \uu_k \in \F_{q^m}^{k \times n},
\]
for some $\mathbf{u}_i \in \F_{q^m}^{1 \times n_i}$ such that $\w(\uu_i)=n_i$, for each $i$, will be said to be in \textbf{a weight complementary form}.
\end{definition}

By Theorem \ref{th:characterization1dim}, we immediately obtain the following.

\begin{proposition}
   A completely decomposable $\nk$ code $\C$ of type $(n_1,\ldots,n_k)$ is nondegenerate.
\end{proposition}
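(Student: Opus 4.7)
The plan is to quote condition $2)$ of \Cref{th:characterization1dim}, which is already essentially tailored to prove nondegeneracy. By the hypothesis that $\C$ is completely decomposable of type $(n_1,\ldots,n_k)$, one of the equivalent conditions there holds, and hence all of them hold. In particular, there exists a basis $\{\cc_1,\ldots,\cc_k\}$ of $\C$ with $n=\sum_{i=1}^k n_i$ and
\[
\supp(\C)=\bigoplus_{i=1}^k \supp(\cc_i),\qquad \dim_{\F_q}(\supp(\cc_i))=n_i \text{ for every } i.
\]

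From this decomposition I would compute
\[
\dim_{\F_q}(\supp(\C))=\sum_{i=1}^k \dim_{\F_q}(\supp(\cc_i))=\sum_{i=1}^k n_i = n.
\]
Since by construction $\supp(\C)\subseteq \F_q^n$ and the dimensions coincide, the two subspaces are equal: $\supp(\C)=\F_q^n$. By the definition of nondegeneracy recalled at the end of Subsection \ref{subsec:rank}, this is exactly the statement that $\C$ is nondegenerate.

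There is no real obstacle; the heavy lifting is already absorbed into \Cref{th:characterization1dim}. If one preferred the generator-matrix route via condition $4)$, the argument would instead note that a generator matrix of a code equivalent to $\C$ has the block-diagonal form $G=\uu_1\oplus\cdots\oplus\uu_k$ with $\w(\uu_i)=n_i$; the columns of $G$ lie in pairwise disjoint coordinate blocks of $\F_{q^m}^k$, and within the $i$-th block the $n_i$ entries of $\uu_i$ are $\F_q$-linearly independent in $\F_{q^m}$ by the weight condition. Hence the $\F_q$-span of the columns of $G$ has dimension $n_1+\cdots+n_k=n$, which is equivalent to $\supp(\C)=\F_q^n$. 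Since equivalence of codes preserves nondegeneracy (both are defined by the full-dimensionality of the $\F_q$-column span, which is invariant under the right action of $\GL(n,q)$), this again yields the claim.
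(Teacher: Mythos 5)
Your proof is correct and takes essentially the same route as the paper, which simply states that the proposition follows immediately from \Cref{th:characterization1dim}; you have just spelled out the one-line dimension count (via condition $2)$, or equivalently $4)$) that makes ``immediately'' precise. No issues.
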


\begin{remark} \label{rk:systemassociatedtype}
    Note that by \Cref{th:characterization1dim}, we also deduce that there exists a system $U$ associated with a completely decomposable code $\C$ of type $(n_1,\ldots,n_k)$ of the form 
        \[
        U=U_1 \times U_2 \times \ldots \times U_k \subseteq \F_{q^m}^k, 
        \]
        where $U_i$ is an $\F_q$-subspace of $\F_{q^m}$ having dimension $n_i$, for each $i$. Subspaces of this shape have been recently studied in connection with linear sets having points of complementary weights and multi-orbits cyclic codes; see \cite{adriaensen2023minimum,napolitano2022linearsets,zullo2023multi}.
\end{remark}

\begin{remark}
\Cref{th:characterization1dim} shows that codes that are, up to equivalence, direct sum of $1$-dimensional MRD codes can be also characterized as those nondegenerate codes admitting a basis $\{\cc_1,\ldots,\cc_k\}$ such that the supports of $\cc_1,\ldots,\cc_k$ are in direct sum. So, for such kind of codes the left-hand side of \eqref{eq:supportcode} is a direct sum of subspaces. 
\end{remark}

In the next, we prove that the type of a completely decomposable code is uniquely determined.

\begin{proposition}
    Let $\C$ be a completely decomposable $\nk$ code of type $(n_1,\ldots,n_k)$. If $\C$ is equivalent to a completely decomposable code $\C'$ of type $(n'_1,\ldots,n'_k)$, then $n_i=n'_i$, for any $i$.
\end{proposition}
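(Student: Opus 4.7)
The plan is to reduce the uniqueness of the type to a polynomial identity between weight enumerators. The main observation is that equivalence by a matrix $A \in \GL(n,q)$ preserves the rank weight of every vector, since the $\F_q$-span of the coordinates of $\cc A$ equals that of $\cc$. Hence equivalent codes share the same weight distribution, and it suffices to show that the weight enumerator of a completely decomposable code of type $(n_1,\ldots,n_k)$ determines the non-increasing sequence $(n_1,\ldots,n_k)$.

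First, I would compute the weight enumerator. By \Cref{th:characterization1dim}, up to equivalence we may assume $\C=\bigoplus_{i=1}^k \langle \uu_i\rangle_{\F_{q^m}}$ with $\w(\uu_i)=n_i$. Every codeword has the form $(x_1\uu_1,\ldots,x_k\uu_k)$ with $x_i\in\F_{q^m}$. Since the supports of the $\uu_i$'s lie in disjoint coordinate blocks, they are in direct sum, so the rank weight of $(x_1\uu_1,\ldots,x_k\uu_k)$ equals $\sum_{i}\w(x_i\uu_i)$. Moreover, for $x_i\in\F_{q^m}^*$ the map $v\mapsto x_iv$ is an $\F_q$-linear bijection of $\F_{q^m}$, so $\w(x_i\uu_i)=\w(\uu_i)=n_i$. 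Grouping codewords by which $x_i$'s vanish yields the weight enumerator
\[
W_{\C}(z)=\sum_{i=0}^{n} A_i z^{i}=\prod_{i=1}^{k}\bigl(1+(q^m-1)z^{n_i}\bigr).
\]

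Second, applying the same computation to the equivalent code $\C'$ of type $(n'_1,\ldots,n'_k)$ and using that equivalent codes have equal weight enumerators, I obtain the polynomial identity
\[
\prod_{i=1}^{k}\bigl(1+(q^m-1)z^{n_i}\bigr)=\prod_{i=1}^{k}\bigl(1+(q^m-1)z^{n'_i}\bigr).
\]

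Third, I would prove by induction on $k$ that this identity, together with the sortedness $n_1\geq\ldots\geq n_k$ and $n'_1\geq\ldots\geq n'_k$, forces $n_i=n'_i$. Set $a=q^m-1\neq 0$. Expanding the left-hand side, the minimum positive exponent appearing is $n_k$ (any contribution from a subset of size $\geq 2$ has exponent $\geq 2n_k>n_k$), and its coefficient equals $a\cdot |\{i:n_i=n_k\}|$. The analogous statement holds for the right-hand side with $n'_k$. Comparing the minimum positive exponents and their coefficients yields $n_k=n'_k$ and the equality of their multiplicities in the two sorted sequences. Dividing both sides by $(1+az^{n_k})^{|\{i:n_i=n_k\}|}$ (valid in $\mathbb{Z}[z]$ since the factor is monic-like in the appropriate sense) and applying the induction hypothesis to the remaining product of $k-r$ factors completes the argument.

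The main point I expect to be slightly delicate is the extraction of $n_k$ from the coefficient of $z^{n_k}$: one must verify that no other monomial in the expansion contributes to that coefficient, which is where the hypothesis $n_i\geq 1$ for every $i$ is essential. Everything else is a bookkeeping computation with the weight enumerator.
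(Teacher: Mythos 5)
Your argument breaks down at the very first computation: the rank weight of $(x_1\uu_1,\ldots,x_k\uu_k)$ is \emph{not} $\sum_i \w(x_i\uu_i)$. Writing $U_i$ for the $\F_q$-span of the entries of $\uu_i$, the weight of that codeword is $\dim_{\F_q}(x_1U_1+\cdots+x_kU_k)$, i.e. the dimension of a sum of subspaces of $\F_{q^m}$, and this sum need not be direct even though the coordinate blocks are disjoint. (Disjointness of the blocks gives $\supp(\cc)\subseteq\bigoplus_i\supp(\cc_i)$ and hence only the inequality $\w(\cc)\leq\sum_i\w(x_i\uu_i)$.) A concrete counterexample: with $\uu_1=\uu_2=\uu$ and $\w(\uu)=t\leq m/2$, the codeword $(\uu,\uu)$ has weight $t$, not $2t$, so the coefficient of $z^{n_1+n_2}$ in your product formula is already wrong for $k=2$. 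More globally, your formula $\prod_i\bigl(1+(q^m-1)z^{n_i}\bigr)$ would make the weight distribution depend only on the type, whereas the paper's \Cref{th:numberpointsweightcomplementary} shows that already $A_{n_k}$ depends on the products $U_i^{\perp^*}U_h$, and the examples at the end of the paper exhibit two completely decomposable codes of the same type $(2,2,2)$ over $\F_{2^6}$ with $A_2=441$, not the value $189=3\cdot 63$ your formula predicts, and with different weight distributions from each other. So the weight enumerator neither factors as claimed nor is it an invariant of the type alone, and the reduction ``equivalent codes have equal weight enumerators, hence equal types'' cannot be carried out this way.

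The opening observation (that $\GL(n,q)$-equivalence preserves rank weights) and the induction on the factored polynomial are fine as far as they go, but the missing ingredient is a weight-independent invariant that detects the type. The paper instead works with the associated systems: it takes $U=U_1\times\cdots\times U_k$ and $U'=U'_1\times\cdots\times U'_k$, uses that equivalence of codes forces $U'=U\cdot B$ for some $B\in\GL(k,q^m)$, and compares $\dim_{\F_q}(U'\cap W')$ for $W'$ the image of a coordinate subspace $\langle\ee_1,\ldots,\ee_{\ell+1}\rangle_{\F_{q^m}}$, completing $W'$ to a basis with standard vectors and counting dimensions to reach a contradiction if the sorted types differ. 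If you want to keep an enumerative flavour, you would need to replace the weight distribution by an invariant such as the multiset of intersection dimensions $\dim_{\F_q}(U\cap H)$ over $\F_{q^m}$-subspaces $H$, which is essentially what that dimension count does.
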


\begin{proof}
Assume that $\C$ is equivalent to a completely decomposable code $\C'$ of type $(n'_1,\ldots,n'_k)$ and suppose, by contradiction that 
$n_1=n_1', \ldots, n_{\ell}=n_{\ell}'$ and  $n_{\ell+1}\neq n_{\ell+1}$, for some $\ell \in \{0,\ldots,k-1\}$. Without loss of generality, suppose that $n_{\ell+1}>n'_{\ell+1}$.
By \Cref{rk:systemassociatedtype}, we know that, up to equivalence, there exist a system of the form $U=U_1 \times \ldots \times U_k$ and a system of the form $U'=U_1' \times \ldots \times U_k'$ associated with $\C$ and with $\C'$, respectively, where $\dim_{\F_q}(U_i)=n_i$ and $\dim_{\F_q}(U'_i)=n'_i$.  Since $\C$ is equivalent to $\C'$, then the associated systems are equivalent. Therefore, there exists a matrix $B \in \GL(k,q^m)$, such that
\[
U'=U\cdot B=\{\xx B \colon \xx \in U\},
\]
see Remark \ref{rk: changebasissystem}.
Let consider $W=\langle \ee_1,\ldots,\ee_{\ell+1}\rangle_{\F_{q^m}},$ where the $\ee_i$'s are the vectors of the standard basis of $\F_{q^m}^k$, and let $W'=W \cdot B=\{\uu B \colon \uu \in W\}$. Then 
\begin{equation} \label{eq:partialequalitylenght}
\begin{array}{rl}

\dim_{\F_q}(U' \cap W') &  = \dim_{\F_q}((U \cdot B) \cap (W \cdot B)) \\
&  = \dim_{\F_q}((U \cap W) \cdot B) \\
&= \dim_{\F_q}(U \cap W ) \\
&= \sum_{i=1}^{\ell+1} \dim_{\F_q}(U_i)=n_1+\ldots +n_{\ell+1} \\
&=n_1'+\cdots+n'_{\ell}+n_{\ell+1}.
\end{array}
\end{equation}
We complete a basis of $W'$ to a basis of $\F_{q^m}^k$ by adding some vectors $\ee_{i_{\ell+2}},\ldots,\ee_{i_{k}}$ of the standard basis $\{\ee_1,\ldots,\ee_k\}$.
Therefore 
\[
\begin{array}{rl}
n&=n'_1+\ldots+n'_k \\
&=\dim_{\F_q}(U')\\
&=\dim_{\F_q}(U' \cap (W ' \oplus \langle \ee_{i_{\ell+2}}\rangle_{\F_{q^m}} \ldots \oplus  \langle \ee_{i_{k}}\rangle_{\F_{q^m}}))) \\
& \geq \dim_{\F_q}(U' \cap W ' )+ \dim_{\F_q}(U' \cap \langle \ee_{i_{\ell+2}}\rangle_{\F_{q^m}} )+\ldots+\dim_{\F_q}(U' \cap \langle \ee_{i_{k}}\rangle_{\F_{q^m}})) \\
&=n'_1+\ldots+n_{\ell}'+n_{\ell+1}+n'_{i_{\ell+2}}+\ldots+n'_{i_{k}} \hskip 2.9 cm \mbox{(by \eqref{eq:partialequalitylenght})} \\
& \geq n'_1+\ldots+n_{\ell}'+n_{\ell+1}+n'_{\ell+2}+\ldots+n'_{k}  \hskip 3 cm \mbox{(by $n'_1 \geq \cdots \geq n'_k$ )} \\
& >n'_1+\ldots+n_{\ell}'+n'_{\ell+1}+n'_{\ell+2}+\ldots+n'_{k} \hskip 3 cm \mbox{(by $n_{\ell+1}>n'_{\ell+1}$ )}\\
& =n,
\end{array}
\]
a contradiction.
\end{proof}

Completely decomposable codes are examples of rank-metric codes that are not minimal but for which the set of minimal codewords can be described. The notion of minimal code in the rank metric was introduced and studied in \cite{alfarano2022linear},
where an its geometric characterization was provided.

Let $\C$ be an $[n,k]_{q^m/q}$ code. A nonzero codeword $\cc \in \C$ is a \textbf{minimal codeword} if for every nonzero $\cc' \in \C$, $\supp(\cc') \subseteq \supp(\cc)$ implies that $\cc=\alpha \cc'$, for some $\alpha \in \F_{q^m}^*$. We say that $\C$ is
\textbf{minimal} if all its nonzero codewords are minimal. 

\begin{proposition} \label{prop:descriptionminimal}
    Let $\C$ be a completely decomposable $\nk$ code of type $(n_1,\ldots,n_k)$, with $k \geq 2$ and $G=\uu_1\oplus \ldots \oplus \uu_k$ be an its generator matrix in a weight complementary form. Then $\C$ is not a minimal code and the set of minimal codewords of $\C$ is
    \[
    \bigcup_{i=1}^k \left\{(\mathbf{0},\ldots,\mathbf{0},\alpha \uu_i,\mathbf{0},\ldots,\mathbf{0}) \colon \alpha \in \F_{q^m}^*\right\}.
    \]
\end{proposition}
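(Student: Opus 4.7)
The plan is to exploit the product structure of $\C$ relative to the basis $\cc^{(i)}=(\mathbf{0},\ldots,\uu_i,\ldots,\mathbf{0})$ by computing the support of a generic codeword, and then observing that when more than one block is activated the support decomposes into independent pieces that can be peeled off one at a time.

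Concretely, first I would establish the following support formula: writing an arbitrary codeword as $\cc=(\alpha_1\uu_1,\ldots,\alpha_k\uu_k)$ with $\alpha_i\in\F_{q^m}$, and setting $V_i=\{0\}^{n_1+\ldots+n_{i-1}}\times\F_q^{n_i}\times\{0\}^{n_{i+1}+\ldots+n_k}\subseteq\F_q^n$, one has
\[
\supp(\cc)=\bigoplus_{i\,:\,\alpha_i\neq 0} V_i.
\]
This follows by applying \Cref{th:characterization1dim}(2) to the basis $\{\cc^{(1)},\ldots,\cc^{(k)}\}$ (noting $\w(\uu_i)=n_i$ forces $\supp(\uu_i)=\F_q^{n_i}$, hence $\supp(\cc^{(i)})=V_i$), together with the fact that the support of a vector is invariant under multiplication by a nonzero scalar in $\F_{q^m}$, which is a direct consequence of \cite[Proposition 2.1]{alfarano2022linear}.

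With this formula in hand, the non-minimality is immediate: since $k\geq 2$, the codeword $\cc=\cc^{(1)}+\cc^{(2)}$ has support $V_1\oplus V_2$, which strictly contains $\supp(\cc^{(1)})=V_1$, while $\cc^{(1)}$ is plainly not an $\F_{q^m}$-multiple of $\cc$. For the characterization, I would argue both inclusions. For one direction, fix $i$ and $\alpha\in\F_{q^m}^*$ and let $\cc'=(\beta_1\uu_1,\ldots,\beta_k\uu_k)\in\C$ be nonzero with $\supp(\cc')\subseteq\supp((\mathbf 0,\ldots,\alpha\uu_i,\ldots,\mathbf 0))=V_i$; by the support formula this forces $V_j\subseteq V_i$ for every $j$ with $\beta_j\neq 0$, and since $V_i\cap V_j=\{0\}$ for $j\neq i$ we conclude $\beta_j=0$ for $j\neq i$, so $\cc'=\tfrac{\beta_i}{\alpha}(\mathbf 0,\ldots,\alpha\uu_i,\ldots,\mathbf 0)$, which proves $(\mathbf 0,\ldots,\alpha\uu_i,\ldots,\mathbf 0)$ is minimal. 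For the converse, if $\cc=(\alpha_1\uu_1,\ldots,\alpha_k\uu_k)$ has two distinct indices $i\neq j$ with $\alpha_i,\alpha_j\neq 0$, then the codeword $(\mathbf 0,\ldots,\alpha_i\uu_i,\ldots,\mathbf 0)$ has support $V_i\subsetneq\supp(\cc)$ yet is not an $\F_{q^m}$-multiple of $\cc$ (it vanishes on the $j$-th block while $\cc$ does not), so $\cc$ is not minimal.

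I do not expect a serious obstacle here; the only point deserving care is the justification of the support formula, which hinges on (i) the equality $\supp(\alpha\vv)=\supp(\vv)$ for $\alpha\in\F_{q^m}^*$ and (ii) the directness of the sum, both of which come from \Cref{th:characterization1dim}(2) applied to the basis $\{\cc^{(1)},\ldots,\cc^{(k)}\}$. Once the formula is available, both the non-minimality and the exact description of minimal codewords reduce to the observation that the $V_i$'s form an independent family of $\F_q$-subspaces of $\F_q^n$.
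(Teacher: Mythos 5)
Your strategy mirrors the paper's own proof, but the pivotal step --- the ``support formula'' $\supp(\cc)=\bigoplus_{i\,:\,\alpha_i\neq 0}V_i$ --- does not follow from what you cite and is false in general. \Cref{th:characterization1dim} (2) is a statement about $\supp(\C)$, i.e.\ about the sum of the supports of a \emph{basis}; it says nothing about the support of an individual codeword, for which one only has the inclusion $\supp(\uu+\vv)\subseteq\supp(\uu)+\supp(\vv)$. Writing $U_i$ for the $\F_q$-span of the entries of $\uu_i$, one has $\dim_{\F_q}\supp(\cc)=\w(\cc)=\dim_{\F_q}(\alpha_1U_1+\ldots+\alpha_kU_k)$, so your formula holds for a given $\cc$ exactly when the sum $\sum_{i:\alpha_i\neq 0}\alpha_iU_i$ is direct --- which need not be the case (for instance, whenever $n=n_1+\ldots+n_k>m$ it fails for every codeword with all $\alpha_i\neq 0$). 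A minimal counterexample: take $n_1=n_2=1$ and $\uu_1=\uu_2=(1)$, so $\C=\F_{q^m}^2$; then $\supp((1,1))=\{(c,c)\colon c\in\F_q\}$ has dimension $1$, not $2$, it does not contain $V_1$, and $(1,1)$ is a weight-one (hence minimal) codeword that is not of single-block form. So the half of your argument showing that every codeword with two active blocks is non-minimal genuinely breaks at this point.

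To be fair, this is not a place where you diverge from the paper: the paper's proof asserts the same containment $\supp(\cc')\subseteq\supp(\cc)$ for the single-block truncation $\cc'$ of $\cc$, calling it ``easy to see,'' and it fails for the same reason; what is really needed there is $\alpha_iU_i\cap\sum_{j\neq i}\alpha_jU_j=\{0\}$ for the chosen index $i$. The other half of your argument --- that each $(\mathbf 0,\ldots,\alpha\uu_i,\ldots,\mathbf 0)$ is minimal --- can be repaired without the support formula: if $\supp(\cc')\subseteq V_i$ with $\cc'=(\beta_1\uu_1,\ldots,\beta_k\uu_k)$, then the projection of $\supp(\cc')$ onto the $j$-th block of coordinates, which equals $\supp(\beta_j\uu_j)=\F_q^{n_j}$ whenever $\beta_j\neq 0$, must vanish for every $j\neq i$, forcing $\beta_j=0$ and hence proportionality. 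I would rewrite the non-minimality half so that it either assumes the relevant directness of the subspaces $\alpha_iU_i$ or explicitly identifies which multi-block codewords remain minimal.
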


\begin{proof}
    Let $\cc$ be a nonzero codeword of $\C$. Then $\cc$ is of the form $\cc=(\alpha_1 \uu_1,\ldots,\alpha_k \uu_k)$, for some $\alpha_i \in \F_{q^m}$. It is easy to see that if there exist two different indices $i\neq j$ such that $\alpha_i,\alpha_j \neq 0$, then the codeword $\cc'=(\mathbf{0},\ldots,\mathbf{0},\alpha_i \uu_i,\mathbf{0},\ldots,\mathbf{0})$ is such that $\supp(\cc') \subseteq \supp(\cc)$. However, $\cc$ and $\cc'$ are not $\F_{q^m}$-proportional and so $\cc$ is not minimal. Therefore, if $\cc$ is a minimal codeword of $\C$, then $\cc=(\mathbf{0},\ldots,\mathbf{0},\alpha_i \uu_i,\mathbf{0},\ldots,\mathbf{0})$ for some $i$ and $\alpha_i \neq 0$. Now, we prove that if 
    $\cc=(\mathbf{0},\ldots,\mathbf{0},\alpha_i \uu_i,\mathbf{0},\ldots,\mathbf{0})$ then it is minimal. Indeed,
    note that  
    \[\supp(\cc)= \supp((\mathbf{0},\ldots,\mathbf{0},\alpha_i \uu_i,\mathbf{0},\ldots,\mathbf{0}))\subseteq 
    \{0\}^{n_1+\cdots+n_{i-1}} \times \F_q^{n_i}\times \{0\}^{n_{i+1}+\cdots+n_{k}}  .\] 
    By using that $\dim_{\F_q}(\supp(\cc))=\w(\cc)=n_i$, we get that 
    \[\supp(\cc)=
    \{0\}^{n_1+\cdots+n_{i-1}} \times \F_q^{n_i}\times \{0\}^{n_{i+1}+\cdots+n_{k}}  .\] 
    Let $\cc'=(\beta_1 \uu_1,\ldots,\beta_k \uu_k)$ be a nonzero codeword of $\C$ such that $\supp(\cc') \subseteq \supp(\cc)$. Observe that, for every $j \in \{1,\ldots,k\}$, we have that $\supp(\cc_j) \subseteq \supp(\cc')$, where $\cc_j=(\mathbf{0},\ldots,\mathbf{0},\beta_j\uu_j,\mathbf{0},\ldots,\mathbf{0})$. Therefore, using that $\supp(\cc') \subseteq \supp(\cc)$, we get $\beta_j=0$, for any $j \neq i$, implying that $\cc=\alpha \cc'$, for some $\alpha \in \F_{q^m}$. This proves the assertion. 
\end{proof}

Finally, we investigate the dual operation on completely decomposable codes. Recall that, the \textbf{dual code} of a code $\C$ is
\[ \C^\perp = \left\{ (u_1,\ldots,u_n) \in \F_{q^m}^n \colon \sum_{i=1}^n u_iv_i=0, \mbox{ for every }(v_1,\ldots,v_n) \in \C \right\}. \]
If $\C=\bigoplus_{i=1}^k\C_i$ is a completely decomposable $[n,k]_{q^m/q}$ code of type $(n_1,\ldots,n_k)$ then its dual is $\C^{\perp}=\bigoplus_{i=1}^k\C_i^{\perp_i}$, where $\C_i^{\perp_i}$ is the dual of $\C_i$ in $\F_{q^m}^{n_i}$, for every $i$. In general, it is not the direct sum of $1$-dimensional MRD codes. This means the property of being completely decomposable is not preserved under duality. On the other hand, we prove that such property is preserved under the operation of the geometric dual, recently introduced in \cite{borello2023geometric}. 

\begin{definition}
Let $\C$ be a nondegenerate $[n,k,d]_{q^m/q}$ and let $U$ be a system associated with $\C$. Assume that $\dim_{\F_q}(U\cap \langle v\rangle_{\F_{q^m}})<m$, for any $v \in \F_{q^m}^k$.
Then a \textbf{geometric dual} of $\C$ (with respect to $\perp'$) is defined as $\C'$, where $\C'$ is any code associated with the system $U^{\perp'}$.
\end{definition}

\begin{remark}
In order to give the definition of the geometric dual of a rank-metric code, we have assumed that $\dim_{\F_q}(U \cap \langle v \rangle_{\F_{q^m}}) < m$ for any $v \in \F_{q^m}^k$. This condition is equivalent to the one used originally in \cite{borello2023geometric}, which relies on the notion of generalized rank weights. 
\end{remark}

\begin{proposition}
    Let $\C$ be a completely decomposable $\nk$ code of type $(n_1,\ldots,n_k)$. Let $U$ be a system associated with $\C$. Then $\dim_{\F_q}(U\cap \langle v\rangle_{\F_{q^m}})<m$, for any $v \in \F_{q^m}^k$ and a geometric dual $\C'$ of $\C$ is a completely decomposable $[km-n,k]_{q^m/q}$ code of type $(m-n_k,\ldots,m-n_1)$.
\end{proposition}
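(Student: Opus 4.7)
The plan is to exploit the product structure of the associated system provided by Remark \ref{rk:systemassociatedtype}, together with the explicit description of the dual of a direct product given by Proposition \ref{prop:dualcomplement}.

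By Remark \ref{rk:systemassociatedtype}, after replacing $\C$ by an equivalent code (which preserves both the property $\dim_{\F_q}(U\cap \langle v\rangle_{\F_{q^m}})<m$ for all $v$ and the property of being the geometric dual of a completely decomposable code, thanks to Remark \ref{rk: changebasissystem}), I may assume $U=U_1\times U_2 \times \cdots \times U_k$ with $\dim_{\F_q}(U_i)=n_i<m$. To verify the first assertion, suppose by contradiction that $\dim_{\F_q}(U\cap\langle v\rangle_{\F_{q^m}})=m$ for some nonzero $v=(v_1,\ldots,v_k)\in\F_{q^m}^k$; then $\langle v\rangle_{\F_{q^m}}\subseteq U$, hence $av_i\in U_i$ for every $a\in\F_{q^m}$ and every $i$. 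Choosing an index $i$ with $v_i\neq 0$ forces $U_i=\F_{q^m}$, contradicting $n_i\leq n_1<m$. Hence the geometric dual is well-defined.

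For the second assertion, Proposition \ref{prop:dualcomplement} gives
\[ U^{\perp'}=U_1^{\perp^*}\times\cdots\times U_k^{\perp^*}, \]
where each $U_i^{\perp^*}$ has $\F_q$-dimension $m-n_i\in\{1,\ldots,m-1\}$. In particular, $U^{\perp'}$ has $\F_q$-dimension $\sum_i (m-n_i)=km-n$ and it spans $\F_{q^m}^k$ over $\F_{q^m}$, since each factor $U_i^{\perp^*}$ is nonzero. Choosing an $\F_q$-basis of $U^{\perp'}$ that is compatible with the product decomposition yields, by construction, a generator matrix of any associated code of the form $\mathbf{u}_1'\oplus\cdots\oplus\mathbf{u}_k'$ with $\w(\mathbf{u}_i')=m-n_i$, i.e., in weight complementary form. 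Applying Theorem \ref{th:characterization1dim} (condition $4)\Rightarrow 1)$), any code associated with $U^{\perp'}$ is then completely decomposable; after reordering the blocks so that the weights are non-increasing, the resulting type is $(m-n_k,m-n_{k-1},\ldots,m-n_1)$.

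The main delicate point is the initial reduction to the canonical system $U_1\times\cdots\times U_k$: it relies on the fact that the quantity $\dim_{\F_q}(U\cap \langle v\rangle_{\F_{q^m}})$ (as $v$ varies) and the equivalence class of the geometric dual $\C'$ depend only on the $\mathrm{GL}(k,q^m)$-orbit of $U$, so that the argument carried out on the canonical representative transfers verbatim to any other system associated with $\C$.
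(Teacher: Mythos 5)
Your proof is correct and follows the same overall skeleton as the paper's: reduce to a system of the product form $U_1\times\cdots\times U_k$ via \Cref{rk:systemassociatedtype}, apply \Cref{prop:dualcomplement} to get $U^{\perp'}=U_1^{\perp^*}\times\cdots\times U_k^{\perp^*}$ with $\dim_{\F_q}(U_i^{\perp^*})=m-n_i$, and conclude with \Cref{th:characterization1dim}. The one place where you genuinely diverge is the first assertion. The paper splits into the cases $m>n$ and $m\le n$, and in the latter completes $\{\ee_i\colon i\neq j\}\cup\{\uu\}$ to an $\F_{q^m}$-basis to derive the contradiction $n\ge n-n_j+m>n$ from a direct-sum dimension count. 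You instead observe that $\dim_{\F_q}(U\cap\langle v\rangle_{\F_{q^m}})=m$ forces $\langle v\rangle_{\F_{q^m}}\subseteq U$, and since $U$ is a direct product this makes $U_i=\F_{q^m}$ for any coordinate with $v_i\neq 0$, contradicting $n_i<m$. Your version is shorter, avoids the case distinction, and exploits the product structure directly; the paper's dimension-count argument is the one that would survive if $U$ were not literally a product but only admitted the complementary-weight basis. Your closing remark on invariance under the $\GL(k,q^m)$-action is a legitimate point that the paper leaves implicit, and it is handled correctly.
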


\begin{proof}
By \Cref{rk:systemassociatedtype}, we know that, up to equivalence, there exists a system associated with $\C$ of the form 
        \[
        U=U_1 \times U_2 \times \ldots \times U_k \subseteq \F_{q^m}^k. 
        \]    
First, we prove that $\dim_{\F_q}(U\cap \langle v\rangle_{\F_{q^m}})<m$, for any $v \in \F_{q^m}^k$. Clearly, the assertion is true, if $m > n$. So, assume that $m\leq n$ and suppose by contradiction that there exists a nonzero $\uu \in \F_{q^m}^k$, such that $\dim_{\fq}(U \cap \langle \uu \rangle_{\F_{q^m}}) =m$. Then, there exists some $j \in \{1,\ldots,k\}$, such that \[\{\ee_i \colon i=1,\ldots,k \mbox{ and } i \neq j\} \cup \{\uu\}\] is a basis of $\F_{q^m}^k$. Therefore,
\[
n=n_1+\ldots+n_k=\dim_{\F_q}(U) \geq \sum_{i \neq j}\dim_{\F_q}(U \cap \langle \ee_i \rangle_{\F_{q^m}}) + \dim_{\F_q}(U \cap \langle \uu \rangle_{\F_{q^m}})=n-n_j+m>n,
\]
a contradiction. So, $\dim_{\F_q}(U\cap \langle v\rangle_{\F_{q^m}})<m$, for any $v \in \F_{q^m}^k$.
Now, by definition, a system associated with $\C'$ is 
\[
U^{\perp'}=(U_1\times \ldots \times U_k)^{\perp'}=
U_1^{\perp^*} \times \ldots \times U_k^{\perp^*},\]
with $\dim_{\F_q}(U_i)=m-n_i$, where the last equality follows from \Cref{prop:dualcomplement}.
Then the assertion follows by \Cref{th:characterization1dim}. 
\end{proof}

\section{On the weight distribution of completely decomposable codes}

In general, determining the weight distribution of a code from its structure is a challenging task, even providing some information about it is difficult. For rank-metric codes, the weight distribution of optimal codes is uniquely determined, see \cite{delsarte1978bilinear}. Except for such codes, very few is known about the weight distribution for other classes of codes. For some classes of $k$-dimensional $\F_{q^m}$-linear rank metric codes in $\F_{q^m}^n$, particularly when $k \in \{2,3\}$, the weight distributions have been determined relying on their specific geometric structures, see, e.g., \cite{napolitano2022clubs, polverino2023maximum, lia2023short, polverino2022divisible, marino2023evasive}.

In this section, we show that for completely decomposable codes, we can describe some properties on the weight distribution and codewords having certain weights. Also, we provide the number of minimum weight codewords of a completely decomposable code relying on properties of $\F_q$-subspaces of $\F_{q^m}$ associated with the code. As a consequence, we determine bounds on the number of minimum weight codes and we also provide constructions attaining such bounds. Finally, we characterize completely decomposable codes having the maximum number of minimum weight codewords.

Let $\C$ be a completely decomposable codes of type $(n_1,\ldots,n_k)$. By definition, up to equivalence, a generator matrix of $\C$ is of the form $\mathbf{u}_1 \oplus \mathbf{u}_2 \oplus \ldots \oplus \mathbf{u}_k,$ where $\mathbf{u}_i \in \F_{q^m}^{1 \times n_i}$ and $\w(\uu_i)=n_i$, for each $i$. In other words,
\begin{equation} \label{eq:equivdirect}
    \C=\bigoplus_{i=1}^k \C_i,
    \end{equation}
with $\C_i =\langle \uu_i \rangle_{\F_{q^m}}$, for every $i$. Since the metric properties of a code are invariant under equivalence, we often consider $\C$ as in \eqref{eq:equivdirect}.


We start with the following lower bound on the weight of a codeword $\cc$ of a completely decomposable code, depending on how we write $\cc$ as a linear combination of the fixed basis of $\C$. 

\begin{proposition} \label{prop:boundweightcompl}
Let $\C$ be a completely decomposable $\nk$ code of type $(n_1,\ldots,n_k)$ and $G=\uu_1\oplus \ldots \oplus \uu_k$ be an its generator matrix in a weight complementary form. Then any nonzero codeword $\cc=(\alpha_1 \uu_1,\ldots,\alpha_k \uu_k)$ has weight
   \begin{equation} \label{eq:boundweightcompl}
   \w(\cc) \geq \max\{n_j : \alpha_j \neq 0\}.
   \end{equation}
   In particular, $\C$ is an $[n,k,n_k]_{q^m/q}$ code. 
\end{proposition}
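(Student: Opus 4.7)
The plan is to exploit the block structure of $G$ directly. For a codeword $\cc=(\alpha_1\uu_1,\ldots,\alpha_k\uu_k)$, the entries of $\cc$ are partitioned into $k$ blocks, and the entries of the $j$-th block are exactly the entries of $\alpha_j\uu_j$. Consequently, the $\F_q$-span of the entries of $\cc$ contains the $\F_q$-span of the entries of $\alpha_j\uu_j$, which means
\[
\w(\cc) \;\geq\; \w(\alpha_j\uu_j)
\]
for every $j\in\{1,\ldots,k\}$.

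The key elementary observation is that scalar multiplication by a nonzero $\alpha\in\F_{q^m}^*$ preserves the rank weight: if $\alpha_j\neq 0$, then $\langle \alpha_j(\uu_j)_1,\ldots,\alpha_j(\uu_j)_{n_j}\rangle_{\F_q}=\alpha_j\langle (\uu_j)_1,\ldots,(\uu_j)_{n_j}\rangle_{\F_q}$, which has the same $\F_q$-dimension as $\langle (\uu_j)_1,\ldots,(\uu_j)_{n_j}\rangle_{\F_q}$. Hence $\w(\alpha_j\uu_j)=\w(\uu_j)=n_j$ whenever $\alpha_j\neq 0$. Combining with the previous inequality and taking the maximum over all $j$ with $\alpha_j\neq 0$ yields \eqref{eq:boundweightcompl}.

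For the second assertion, recall that $n_1\geq n_2\geq \ldots \geq n_k\geq 1$. The lower bound just proved shows that every nonzero codeword satisfies $\w(\cc)\geq \max\{n_j:\alpha_j\neq 0\}\geq n_k$, so the minimum distance of $\C$ is at least $n_k$. On the other hand, the codeword $\cc^{\ast}=(\mathbf{0},\ldots,\mathbf{0},\uu_k)$ belongs to $\C$ and has weight $\w(\uu_k)=n_k$, so $n_k$ is attained. Therefore $d(\C)=n_k$ and $\C$ is an $[n,k,n_k]_{q^m/q}$ code.

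There is no real obstacle; the statement follows essentially from the fact that the coordinates of $\cc$ in the $j$-th block are those of $\alpha_j\uu_j$, together with the rank-weight invariance under nonzero scalar multiplication.
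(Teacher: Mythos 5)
Your proof is correct and follows essentially the same route as the paper: the weight of $\cc$ dominates the weight of each block $\alpha_j\uu_j$, which equals $n_j$ when $\alpha_j\neq 0$ by invariance of rank weight under nonzero scalar multiplication. You additionally spell out the attainment of the minimum distance via the codeword $(\mathbf{0},\ldots,\mathbf{0},\uu_k)$, which the paper leaves implicit, but the argument is the same.
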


\begin{proof}
    The assertion immediately follows by the fact that if a codeword $\cc=(\alpha_1 \uu_1,\ldots,\alpha_k \uu_k) \in \C$ is such that $\alpha_j \neq 0$, for some $j \in \{1,\ldots,k\}$, then 
    \[
    \w(\cc) \geq \w(\uu_j)=n_j.
    \]
\end{proof}

Proposition \ref{prop:boundweightcompl} and the Singleton-like bound imply that for any completely decomposable $\nk$ code $\C$ of type $(n_1,\ldots,n_k)$, it holds
\begin{equation} \label{eq:singletoncompletely}
mk\leq \max\{m,n\}(\min\{m,n\}-n_k+1).
\end{equation}

Clearly if $n_1=\cdots=n_k=1$, with $k>1$, or in other words $n=k$, then $\C=\F_{q^m}^k$ and it is an MRD code. Now, we show that completely decomposable codes are not MRD codes for $n>k$. 

\begin{proposition}
    A completely decomposable $[n,k]_{q^m/q}$ code $\C$ of type $(n_1,\ldots,n_k)$, with $n>k>1$, is not an MRD code.
\end{proposition}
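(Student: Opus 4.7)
The plan is to combine \Cref{prop:boundweightcompl}, which identifies the minimum distance of $\C$ as $d=n_k$, with the Singleton-like bound \eqref{eq:boundgen}. If $\C$ were MRD the equality
\[ mk = \max\{m,n\}(\min\{m,n\} - n_k + 1) \]
would have to hold, and the strategy is to split on whether $n \leq m$ or $n > m$ and derive a contradiction in each case using only the hypotheses $m > n_1 \geq \ldots \geq n_k \geq 1$, $n = \sum_{i=1}^k n_i$, and $n > k \geq 2$.

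First I would handle the case $n \leq m$, where the MRD equality reduces to $n-k = n_k - 1$. The key observation is that $n-k = \sum_{i=1}^k (n_i - 1) \geq k(n_k - 1)$, since $n_i \geq n_k$ for every $i$; this forces $(k-1)(n_k - 1) \leq 0$, and with $k \geq 2$ we must have $n_k = 1$ and hence $n = k$, contradicting $n > k$.

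Next, in the case $n > m$ the MRD equality becomes $n(n_k - 1) = m(n - k)$. The value $n_k = 1$ is again excluded because it forces $n = k$, so $n_k \geq 2$; the strict inequality $m > n_k$ that comes from $m > n_1 \geq n_k$ then yields
\[ n(n_k - 1) = m(n-k) > n_k(n-k), \]
which rearranges to $n < k n_k$. This contradicts $n = \sum_{i=1}^k n_i \geq k n_k$.

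The proof is essentially arithmetic, so there is no major obstacle beyond carefully tracking the case distinction. The crucial points are that the hypothesis supplies the \emph{strict} inequality $m > n_k$ (this is exactly what closes the second case) and that $n > k$ rules out the degenerate possibility $n_k = 1$; no further structural information about $\C$ beyond the minimum-distance formula from \Cref{prop:boundweightcompl} is needed.
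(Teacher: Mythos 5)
Your proof is correct. The first case ($n\leq m$) is essentially the paper's own argument: the Singleton equality forces $k=n-n_k+1$ and the ordering $n_1\geq\ldots\geq n_k$ collapses everything to $n_i=1$, contradicting $n>k$. Where you genuinely diverge is the case $n>m$. The paper handles it geometrically: it invokes the characterization of MRD codes via $h$-scattered systems from \cite{zini2021scattered} (see also \cite[Theorem 4.9]{marino2023evasive}) to get $n=mk/(h+1)$ with $U$ evasive, then applies \cite[Proposition 2.1]{csajbok2021generalising} to conclude $\dim_{\F_q}(U\cap\langle\xx\rangle_{\F_{q^m}})\leq 1$, which forces all $n_i=1$. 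You instead stay entirely arithmetic: plugging $d=n_k$ (from \Cref{prop:boundweightcompl}) into the equality $mk=n(m-n_k+1)$, rewriting it as $n(n_k-1)=m(n-k)$, and using the strict inequality $m>n_1\geq n_k$ from the definition of the type together with $n>k$ to obtain $n<kn_k$, contradicting $n=\sum_i n_i\geq kn_k$. Your route is more elementary and self-contained (no external machinery on evasive subspaces), at the cost of not exhibiting the geometric reason behind the failure; the paper's route buys the structural insight that an MRD system of this shape would have to be scattered with respect to points. Both arguments are complete and correct.
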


\begin{proof}
First assume that $m \geq n$. Then, \eqref{eq:singletoncompletely} reads as $k=n-n_k+1=(n_1+\cdots+n_{k-1}+1)$ and so $n_1=\ldots=n_k=1$ and $n=k$, a contradiction to our assumptions. Now, assume that $n>m$. We known that, up to equivalence, there exists a system associated with $\C$ of the form $U=U_1 \times \ldots \times U_k$, where $U_i$ is an $\F_q$-subspace of $\F_{q^m}$ having dimension $n_i$. If $\C$ is an MRD codes, then by \cite[Section 5]{zini2021scattered}, see also \cite[Theorem 4.9]{marino2023evasive}, we get that $n=mk/(h+1)$, for some positive integer $h$ such that $h+1 \mid mk$ and $\dim_{\F_q}(U \cap W) \leq h$, for any $\F_{q^m}$-subspace $W$ of $\F_{q^m}^k$ having dimension $h$. Therefore, \cite[Proposition 2.1]{csajbok2021generalising} implies that
$\dim_{\F_q}(U \cap \langle \xx \rangle_{\F_{q^m}}) \leq 1$, for any $\xx \in \F_{q^m}^k$. Since $\dim_{\F_q}(U \cap \langle \mathbf{e}_i \rangle_{\F_{q^m}}) = n_i$, where the $\mathbf{e}_i$'s are standard basis vectors, we have that $n_1=\ldots=n_k=1$. So, $n=k$ and we still get a contradiction.  
\end{proof}

In the next, we show some properties on the weight distribution of completely decomposable codes. 
For this aim, the operations of shortening and puncturing on rank-metric codes are very useful. 
We refer to \cite{byrne2017covering,martinez2016similarities} for their properties.

We will use special type of shortening and puncturing based on the structure of completely decomposable codes.

\begin{definition}
 Let $\C$ be a completely decomposable $[n,k]_{q^m/q}$ code of type $(n_1,\ldots,n_k)$ and $G=\uu_1\oplus \ldots \oplus \uu_k$ be an its generator matrix in a weight complementary form. Let $t \in \{1,\ldots,k\}$. We call
\[
\Sigma(\C,t):=\{(\alpha_1\uu_1,\ldots,\alpha_k\uu_k) \in \C \mid \alpha_i=0, \mbox{ for any }i=1,\ldots,t-1\} \subseteq \F_{q^m}^n
\]
and 
\[
\Pi(\C,t):=\{(\alpha_t\uu_t,\ldots,\alpha_k\uu_k) \mid \alpha_i \in \F_{q^m} \} \subseteq \F_{q^m}^{n_{t}+\cdots+n_k}
\]
the \textbf{$t$-shortened code} and \textbf{$t$-punctured code} of $\C$, respectively.
\end{definition}
Note that, for any $t \in \{1,\ldots,k\}$, $\Sigma(\C,t)$ is an $[n,k-t+1,n_k]_{q^m/q}$ code and $\Pi(\C,t)$ is a completely decomposable $[n_t+\cdots+n_k,k-t+1,n_k]_{q^m/q}$ code of type $(n_t,\ldots,n_k)$.

By Proposition \ref{prop:boundweightcompl}, if a codeword $\cc \notin \Sigma(\C,t+1)$, we have that $\w(\cc)\geq n_t$. In the next result, we describe the codewords contained in $\Sigma(\C,t) \setminus \Sigma(\C,t+1)$ having weight $n_t$.

\begin{lemma}
\label{lm:numberpointsweightcomplementary}
Let $\C$ be a completely decomposable $\nk$ code of type $(n_1,\ldots,n_k)$. Let $G=\uu_1\oplus \ldots \oplus \uu_k$ be an its generator matrix in a weight complementary form and let $U_i$ be the $\F_q$-span of the entries of $\uu_i$, for every $i$.
For any integer $t \in \{1,\ldots,k-1\}$, the set of codewords of $\Sigma(\C,t) \setminus \Sigma(\C,t+1)$ of weight $n_t$ is
\[ 
\mathcal{W}_t=\left\{
\begin{array}{rl}
\beta(\mathbf{0},\ldots,\mathbf{0},\uu_t,\xi_{t+1}\uu_{t+1}, \ldots, \xi_{k}\uu_k) \colon & 
     \xi_h \in   \left( U_t^{\perp^*} U_h \right)^{\perp^*},  
     \\ & \mbox{for } h \in \{t+1,\ldots,k\} \mbox{ and }\beta \in \F_{q^m}^*
\end{array} \right\}.
\]
\end{lemma}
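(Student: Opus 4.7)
The plan is to characterize the codewords of minimum weight $n_t$ in the stratum $\Sigma(\C,t)\setminus \Sigma(\C,t+1)$ by first reducing the problem to a containment of $\F_q$-subspaces of $\F_{q^m}$ and then translating that containment into the desired orthogonality condition via the trace form $\perp^*$.

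First, I would write a generic element of $\Sigma(\C,t)\setminus \Sigma(\C,t+1)$ as
$$\cc=(\mathbf{0},\ldots,\mathbf{0},\alpha_t \uu_t,\alpha_{t+1}\uu_{t+1},\ldots,\alpha_k \uu_k),\qquad \alpha_t\neq 0.$$
Since the rank weight of $\cc$ is the $\F_q$-dimension of the span of its entries, we have
$$\w(\cc)=\dim_{\F_q}\bigl(\alpha_t U_t+\alpha_{t+1}U_{t+1}+\cdots+\alpha_k U_k\bigr).$$
By Proposition \ref{prop:boundweightcompl} this is at least $n_t=\dim_{\F_q}(\alpha_t U_t)$, and equality holds if and only if $\alpha_h U_h\subseteq \alpha_t U_t$ for every $h\in\{t+1,\ldots,k\}$ (here we use $\alpha_t\neq 0$). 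Setting $\beta:=\alpha_t$ and $\xi_h:=\alpha_h/\alpha_t$ for $h>t$, this reads $\xi_h U_h\subseteq U_t$, and $\cc=\beta(\mathbf{0},\ldots,\mathbf{0},\uu_t,\xi_{t+1}\uu_{t+1},\ldots,\xi_k\uu_k)$.

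The key step is then to convert the containment $\xi_h U_h\subseteq U_t$ into the stated condition $\xi_h\in (U_t^{\perp^*}U_h)^{\perp^*}$. I would argue as follows: the inclusion $\xi_h U_h\subseteq U_t$ is equivalent, by definition of $U_t^{\perp^*}$, to $\mathrm{Tr}_{q^m/q}(\xi_h u v)=0$ for every $u\in U_h$ and every $v\in U_t^{\perp^*}$. By bilinearity this in turn is equivalent to $\mathrm{Tr}_{q^m/q}(\xi_h w)=0$ for every $w$ in the $\F_q$-span of the products $uv$, that is, for every $w\in U_t^{\perp^*}U_h$; equivalently, $\xi_h\in (U_t^{\perp^*}U_h)^{\perp^*}$. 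Conversely, since $(U_t^{\perp^*})^{\perp^*}=U_t$, this condition forces $\xi_h U_h\subseteq U_t$.

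Combining these two equivalences yields the description of $\mathcal{W}_t$. The only mild subtlety, and what I expect to be the main thing to handle carefully, is ensuring that the parametrization $(\beta,\xi_{t+1},\ldots,\xi_k)\mapsto \beta(\mathbf{0},\ldots,\uu_t,\xi_{t+1}\uu_{t+1},\ldots,\xi_k\uu_k)$ is indeed well-defined on the stated domain and exhausts all minimum weight codewords in $\Sigma(\C,t)\setminus \Sigma(\C,t+1)$: well-definedness follows from the equivalence above, and exhaustion follows from the reduction in the first paragraph, since every such codeword with $\alpha_t\neq 0$ is captured by the rescaling $\beta=\alpha_t$, $\xi_h=\alpha_h/\alpha_t$.
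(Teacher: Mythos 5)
Your proof is correct, and it takes a genuinely different and more direct route than the paper. You work entirely in the primal picture: you compute $\w(\cc)=\dim_{\F_q}(\alpha_t U_t+\alpha_{t+1}U_{t+1}+\cdots+\alpha_k U_k)$, observe that equality with $n_t=\dim_{\F_q}(\alpha_t U_t)$ forces $\alpha_h U_h\subseteq \alpha_t U_t$ for all $h>t$ (and conversely), and then convert $\xi_h U_h\subseteq U_t$ into $\xi_h\in (U_t^{\perp^*}U_h)^{\perp^*}$ by bilinearity of the trace form together with the biduality $(U_t^{\perp^*})^{\perp^*}=U_t$. The paper instead passes to the associated system $U=U_1\times\cdots\times U_k$ and its dual $U^{\perp'}=U_1^{\perp^*}\times\cdots\times U_k^{\perp^*}$, uses the relation $\w(\xx G)=i \Leftrightarrow \dim_{\F_q}(U^{\perp'}\cap\langle\xx\rangle_{\F_{q^m}})=m-i$, and then carries out an explicit computation with a basis $a_1',\ldots,a_{m-n_t}'$ of $U_t^{\perp^*}$, the identity $(U_t^{\perp^*}U_h)^{\perp^*}=a_1'^{-1}U_h^{\perp^*}\cap\cdots\cap a_{m-n_t}'^{-1}U_h^{\perp^*}$, and (in the converse direction) a change-of-basis matrix to normalize a basis of $U^{\perp'}\cap\langle\zz\rangle_{\F_{q^m}}$. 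Your argument buys brevity and transparency: it needs only the elementary facts that rank weight is the dimension of the span of the entries and that $\perp^*$ is an involution on subspaces of $\F_{q^m}$, and it handles both inclusions symmetrically through a single chain of equivalences; the paper's approach is heavier but stays consistent with the duality machinery ($U^{\perp'}$, \eqref{eq:relationweightdual}, \eqref{eq:productsplittedperp}) that is reused throughout the rest of the article. The one point worth stating explicitly in your write-up is the biduality $(U_t^{\perp^*})^{\perp^*}=U_t$, which follows from nondegeneracy of $\mathrm{Tr}_{q^m/q}$ and a dimension count; with that in place there is no gap.
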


\begin{proof}
Assume that $U_{t}^{\perp^*}=\langle a'_{1},\ldots,a'_{m-n_t}\rangle_{\F_q}$ and, by \eqref{eq:productsplitted}, we note that 
\[
 U_t^{\perp^*} U_h =\sum\limits_{i=1}^{m-n_t}a'_{i} U_h.
\]
Moreover, there exists a system associated with the code of the form $U=U_1 \times \ldots \times U_k$ and its dual is $U^{\perp'}=U_1^{\perp^*} \times \ldots \times U_k^{\perp^*}$, by \Cref{prop:dualcomplement}.
    First we prove that the codewords of $\C$ contained in $\mathcal{W}_t$ have weight equal to $n_t$. Let $\cc=\beta(\mathbf{0},\ldots,\mathbf{0},\uu_t,\xi_{t+1}\uu_{t+1}, \ldots, \xi_{k}\uu_k) \in \mathcal{W}_t$. Since $\xi_{h}  \in   \left(\sum\limits_{i=1}^{m-n_t}a'_{i} U_h\right)^{\perp^*}  = a_{1}'^{-1} U_h^{\perp^*}\cap \ldots \cap a_{m-n_t}'^{-1} U_h^{\perp^*}$, where the equality follows by \eqref{eq:productsplittedperp}, we get \begin{equation} \label{eq:writingxi}
\xi_h=w_{h,1}a_{1}'^{-1}=\ldots=w_{h,m-n_t}a_{m-n_t}'^{-1},
\end{equation}
for some $w_{h,1},\ldots,w_{h,m-n_1} \in U_h^{\perp^*}$, for any $h\in \{t+1,\ldots,k\}$. Let \[\mathbf{z}_j=(0,\ldots,0,a_j',w_{t+1,j},\ldots,w_{k,j} )\in U_1^{\perp^*} \times \ldots \times U_{k}^{\perp^*},\]
for $j \in \{1,\ldots,m-n_t\}$.
Since $a_1',\ldots,a'_{m-n_t}$ are $\F_q$-linearly independent, we get that $\zz_1,\ldots,\zz_{m-n_t}$ are $\F_q$-linearly independent, as well, and by using \eqref{eq:writingxi}, we have $\langle \zz_1 \rangle_{\F_{q^m}}=\ldots=\langle \zz_{m-n_t} \rangle_{\F_{q^m}}$. It follows that 
\[\dim_{\F_q}((U_1^{\perp^*} \times \ldots \times U_{k}^{\perp^*} )\cap \langle \zz_1 \rangle_{\F_{q^m}}) \geq m-n_t.\]
Now, $\cc=\frac{\beta}{a_1'}\zz_1G$
and hence, $\w(\cc)=\w(\zz_1G) \leq n_t$, by \eqref{eq:relationweightdual}. Taking into account \eqref{eq:boundweightcompl}, we get that $\w(\cc)= n_t$.  
Conversely, let $\cc=(\alpha_1 \uu_1,\ldots,\alpha_k\uu_k)\in \Sigma(\C,t) \setminus \Sigma(\C,t+1)$ of weight $n_t$. Then $\alpha_i=0$, if $i<t$, $\alpha_t \neq 0$ and $\cc=\alpha_t(\mathbf{0},\ldots,\mathbf{0},\uu_t,\beta_{t+1}\uu_{t+1}, \ldots, \beta_k \uu_k)$, with $\beta_i=\alpha_i/\alpha_t$. So, $\cc=\alpha_t \zz G$, with $\zz=(0,\ldots,0,1,\beta_{t+1},\ldots,\beta_k)$. Since $n_t=\w(\cc)=\w(\zz G)$, then, by using \eqref{eq:relationweightdual}, we get $\dim_{\F_q}((U_t^{\perp^*} \times \ldots \times U_{k}^{\perp^*}) \cap \langle \zz \rangle_{\F_{q^m}})=m-n_t$. Therefore, there exist $m-n_t$ $\F_q$-linearly independent vectors such that
\[
\langle (0,\ldots,0,w_{t,1},\ldots,w_{k,1}),\ldots,(0,\ldots,0,w_{t,m-n_t},\ldots,w_{k,m-n_t}) \rangle_{\F_q} = (U_1^{\perp^*} \times \ldots \times U_{k}^{\perp^*}) \cap \langle \zz \rangle_{\F_{q^m}}.
\]
In particular, $w_{t,1},\ldots,w_{t,m-n_t} \neq 0$ and $\beta_i=w_{i,h}/ w_{t,h}$, for any $h \in \{t+1,\ldots,m-n_t \}$ and any $i \in \{t+1,\ldots,k\}$.
Note that, in this way, $w_{t,1},\ldots,w_{t,m-n_t}$ are $\F_q$-linearly independent.
Then $\langle w_{t,1},\ldots,w_{t,m-n_t} \rangle_{\fq}=U_t^{\perp'}$ and hence there exists a matrix $B=(b_{ij}) \in \mathrm{GL}(m-n_t,q)$ such that 
\[
B \left( \begin{matrix}
w_{t,1} \\
\vdots \\
w_{t,m-n_t}
\end{matrix}\right)=\left( \begin{matrix}
a_{1}' \\
\vdots \\
a_{m-n_t}'
\end{matrix}\right).
\]
For any $i \in \{t+1,\ldots,k\}$, define $\overline{w}_{i,1},\ldots,\overline{w}_{i,m-n_t} \in \F_{q^m}$ as 
\[\left( \begin{matrix}
\overline{w}_{i,1} \\
\vdots \\
\overline{w}_{i,m-n_t}
\end{matrix}\right)=B \left( \begin{matrix}
w_{i,1} \\
\vdots \\
w_{i,m-n_t}
\end{matrix}\right).
\]
Hence, $\overline{w}_{i,1},\ldots,\overline{w}_{i,m-n_t} \in U_i^{\perp'}$, for any $i \in \{t+1,\ldots,k\}$ and for $h\in \{1,\ldots,m-n_t\}$ \[\sum_{l=1}^{m-n_t} b_{hl} (0,\ldots,0,w_{t,l},\ldots,w_{k,l})=(0,\ldots,0,a_h',\overline{w}_{2,h},\ldots,\overline{w}_{k,h}) \in U_1^{\perp'} \times \ldots \times U_k^{\perp'}.\] This implies that \[(0,\ldots,0,a_h',\overline{w}_{2,h},\ldots,\overline{w}_{k,h}) \in  (U_1^{\perp^*} \times \ldots \times U_k^{\perp^*}) \cap \langle \zz \rangle_{\F_{q^m}},\] for any $h \in\{1,\ldots,m-n_t\}$. Let $\beta_h=\overline{w}_{h,i}/a_{i}'$. So $\beta_h \in a_{i}'^{-1}U_h^{\perp^*}$, for every $i\in\{t+1,\ldots,m-n_t\}$ providing that $\beta_{h}  \in  a_{1}'^{-1} U_h^{\perp^*}\cap \ldots \cap a_{m-n_t}'^{-1} U_h^{\perp^*}=\left(\sum\limits_{i=1}^{m-n_t}a'_{i} U_h\right)^{\perp^*}=\left( U_1^{\perp'} U_h \right)^{\perp^*}$, where the last two equalities follow by \eqref{eq:productsplittedperp} and \eqref{eq:productsplitted}, respectively. Therefore, $\cc \in \mathcal{W}_t$, that proves our assertion.
\end{proof}

Now, we are able to describe all the minimum weight codewords. For a subset $S \subseteq \F_{q^m}^n$ and a positive integer $i$, we denote by $A_i(S)$ the number of elements in $S$ having weight $i$.  

\begin{theorem} \label{th:numberpointsweightcomplementary}
Let $\C$ be a completely decomposable $\nk$ code of type $(n_1,\ldots,n_k)$ having as generator matrix $G=\uu_1\oplus \ldots \oplus \uu_k$ in a weight complementary form and let $U_i$ be the $\F_q$-span of the entries of $\uu_i$, for every $i$.
 Assume that 
$n_k=\ldots=n_{k-{\ell}} \neq n_{k-{\ell}-1}$, for some $\ell \in \{0,\ldots,k-1\}$, with $n_0:=0$. 
The number $A_{n_k}(\C)$ of codewords of weight $n_{k}$, is \[
A_{n_k}(\C)=\sum_{t=k-\ell}^k A_{n_k}\left(\Sigma(\C,t) \setminus \Sigma(\C,t+1)\right)=\begin{cases}
q^m-1 & \mbox{ if }\ell=0, \\
    (q^m-1)
    \left(\sum\limits_{i=k-\ell}^{k-1}\prod\limits_{h=i+1}^k q^{j_{i,h}}+1\right) & \mbox{ otherwise, }\end{cases}
    \] 
    where $j_{i,h}= m-\dim_{\F_q}\left( U_i^{\perp^*} U_h\right)$, for every $i \in \{k-\ell,\ldots,k-1\}$ and $h \in \{i+1,\ldots,k\}$. 
\end{theorem}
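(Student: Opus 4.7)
The plan is to decompose the codewords of minimum weight $n_k$ according to which block is the first nonzero one, then count each piece via the previous lemma.

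First I would justify the coarse decomposition. By \Cref{prop:boundweightcompl}, any nonzero codeword $\cc = (\alpha_1 \uu_1,\ldots,\alpha_k\uu_k)$ satisfies $\w(\cc) \geq \max\{n_j : \alpha_j \neq 0\}$. If $\w(\cc) = n_k$, let $t$ be the smallest index with $\alpha_t \neq 0$; then $\cc \in \Sigma(\C,t)\setminus \Sigma(\C,t+1)$ and $n_t \leq n_k$, which forces $n_t = n_k$ and hence $t \geq k-\ell$ by the hypothesis $n_k = \cdots = n_{k-\ell} \neq n_{k-\ell-1}$. This gives the disjoint union
\[
\{\cc \in \C : \w(\cc) = n_k\} = \bigsqcup_{t=k-\ell}^{k} \bigl\{\cc \in \Sigma(\C,t)\setminus \Sigma(\C,t+1) : \w(\cc) = n_k\bigr\},
\]
with the convention $\Sigma(\C,k+1) = \{0\}$, proving the first equality in the statement.

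Next I would handle the extremal term $t=k$ separately. Here $\Sigma(\C,k)\setminus\{0\} = \{\alpha_k \uu_k : \alpha_k \in \F_{q^m}^*\}$, and each such codeword has weight $\w(\uu_k) = n_k$. So this contributes exactly $q^m - 1$, which is the whole answer when $\ell = 0$.

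For $k-\ell \leq t \leq k-1$, since $n_t = n_k$, I apply \Cref{lm:numberpointsweightcomplementary}: the set of codewords in $\Sigma(\C,t)\setminus\Sigma(\C,t+1)$ of weight $n_k$ is exactly
\[
\mathcal{W}_t = \bigl\{\beta(\mathbf{0},\ldots,\mathbf{0},\uu_t,\xi_{t+1}\uu_{t+1},\ldots,\xi_k\uu_k) : \beta \in \F_{q^m}^*,\ \xi_h \in (U_t^{\perp^*}U_h)^{\perp^*}\bigr\}.
\]
To count $|\mathcal{W}_t|$, I would observe that the parametrization by $(\beta,\xi_{t+1},\ldots,\xi_k)$ is injective: reading the $t$-th block of two equal codewords forces equality of the $\beta$'s (since $\uu_t$ is a nonzero vector over $\F_{q^m}$), and then successive blocks give equality of the $\xi_h$'s. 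Hence
\[
|\mathcal{W}_t| = (q^m-1)\prod_{h=t+1}^{k} \bigl|(U_t^{\perp^*}U_h)^{\perp^*}\bigr| = (q^m-1)\prod_{h=t+1}^{k} q^{\,m - \dim_{\F_q}(U_t^{\perp^*}U_h)} = (q^m-1)\prod_{h=t+1}^{k} q^{j_{t,h}}.
\]

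Summing the contributions from $t = k-\ell, \ldots, k-1$ together with the $q^m-1$ coming from $t=k$ yields the claimed formula. The only nonroutine point is the injectivity of the parametrization of $\mathcal{W}_t$, but this is immediate from looking at coordinates; the bulk of the work has already been absorbed into \Cref{lm:numberpointsweightcomplementary}.
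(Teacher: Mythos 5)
Your proposal is correct and follows essentially the same route as the paper's proof: partition the weight-$n_k$ codewords by the first nonzero block, reduce to \Cref{lm:numberpointsweightcomplementary} for each $t\in\{k-\ell,\ldots,k-1\}$, and count the parametrizing tuples $(\beta,\xi_{t+1},\ldots,\xi_k)$. Your explicit check that this parametrization is injective is a small point the paper leaves implicit, but otherwise the two arguments coincide.
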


\begin{proof}
   A codeword $\cc$ of $\C$ is of the form $\cc=(\alpha_1\uu_1,\ldots,\alpha \uu_k)$, for some $\alpha_1,\ldots,\alpha_k \in \F_{q^m}$. If $\cc$ has minimum weight $n_k$, by \eqref{eq:boundweightcompl}, we need to have $\cc \in \Sigma(\C,k-\ell)$. 
   This implies that the codewords of $\C$ having weight $n_k$ are of the form \[(\mathbf{0},\ldots,\mathbf{0},\alpha_{k-\ell}\uu_{k-\ell},\ldots,\alpha_{k}\uu_k),\]
   for some $\alpha_{k-\ell},\ldots,\alpha_{k} \in \F_{q^m}$. 
   Clearly, if $\ell=0$, then $A_{n_k}=q^m-1$. So, assume that $\ell\geq 1$. Denote by $\mathcal{W}_{t}$ the subset of $\Sigma(\C,t) \setminus \Sigma(\C,t+1)$ in which all the codewords having weight $n_k$. The sets $\mathcal{W}_{k-\ell},\ldots,\mathcal{W}_k$ determine a partition of the codewords of $\C$ having weight $n_k$, and since $ \lvert \mathcal{W}_t \rvert =A_{n_k}(\Sigma(\C,t) \setminus \Sigma(\C,t+1))$, we get 
   \[
 A_{n_k}(\C) =\sum_{t=k-\ell}^{k} \lvert \mathcal{W}_t \rvert =\sum_{t=k-\ell}^{k} A_{n_k}(\Sigma(\C,t) \setminus \Sigma(\C,t+1)).
   \]
   So in order to determine the quantity $A_{n_k}(\C)$, it is enough to compute $\lvert \mathcal{W}_t\rvert$, for any $t\in \{k-\ell,\ldots,k\}$. Clearly, $\lvert \mathcal{W}_k\rvert=q^m-1$. 
   By Lemma \ref{lm:numberpointsweightcomplementary}, we get that
\[ 
\mathcal{W}_{t}=\left\{ \beta(\mathbf{0},\ldots,\mathbf{0},\uu_{t},\xi_{t+1}\uu_{t+1}, \ldots, \xi_{k}\uu_k) \colon 
     \xi_h \in   \left(U_t^{\perp^*} U_h\right)^{\perp^*},  
      \mbox{ for } h \in \{t+1,\ldots,k\}
\right\},\]
for every $t \in \{k-\ell,\ldots,k-1\}$.
As a consequence,
\[\lvert \mathcal{W}_{t} \rvert = (q^m-1)q^{j_{t,t+1}+\ldots+j_{t,k}},\] where $j_{t,h}= m-\dim_{\F_q}\left( U_t^{\perp^*} U_h\right)$, for every $t \in \{k-\ell,\ldots,k-1\}$ and $h \in \{t+1,\ldots,k\}$. 
Finally, we have
\[
A_{n_k}(\C)=\sum_{t=k-\ell}^{k-1} \lvert \mathcal{W}_t \rvert +q^m-1= (q^m-1)\left(\sum_{t=k-\ell}^{k-1}q^{j_{t,t+1}+\ldots+j_{t,k}} +1\right),
\]
where $j_{t,h}= m-\dim_{\F_q}\left( U_t^{\perp^*} U_h\right)$, for $t \in \{k-\ell,\ldots,k-1\}$ and $h \in \{t+1,\ldots,k\}$. 
\end{proof}

As a consequence of \Cref{th:numberpointsweightcomplementary}, we derive upper and lower bound on the number of minimum weight codewords.

\begin{corollary} \label{cor:boundminimumnonprime}
Let $\C$ be a completely decomposable $\nk$ code of type $(n_1,\ldots,n_k)$. Assume that 
$n_k=\ldots=n_{k-{\ell}} \neq n_{k-{\ell}-1}$, for some $\ell \in \{0,\ldots,k-1\}$, with $n_0:=0$. 
Then
\[
(q^{m}-1)(\ell+1)  \leq A_{n_k}(\C) \leq  (q^m-1) \frac{q^{(\ell+1)(m-n_k)}-1}{q^{m-n_k}-1}.
\]
\end{corollary}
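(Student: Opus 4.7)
The plan is to read off the exact formula for $A_{n_k}(\C)$ supplied by \Cref{th:numberpointsweightcomplementary} and bound each exponent $j_{i,h}=m-\dim_{\F_q}(U_i^{\perp^*}U_h)$ from above and below. When $\ell=0$ both sides of the corollary collapse to $q^m-1$, matching the formula, so it suffices to treat $\ell\geq 1$.

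For the lower bound the estimate is essentially free: since $U_i^{\perp^*}U_h$ is an $\F_q$-subspace of $\F_{q^m}$, we have $\dim_{\F_q}(U_i^{\perp^*}U_h)\leq m$, so $j_{i,h}\geq 0$ and thus $q^{j_{i,h}}\geq 1$ for every $i,h$. Each of the $\ell$ products $\prod_{h=i+1}^{k}q^{j_{i,h}}$ (one for each $i\in\{k-\ell,\ldots,k-1\}$) is therefore at least $1$, and adding the trailing $+1$ gives the sum $\geq\ell+1$. Multiplying by $q^m-1$ yields the left inequality.

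For the upper bound the key step is to show $j_{i,h}\leq m-n_k$ for all $i\in\{k-\ell,\ldots,k-1\}$ and $h\in\{i+1,\ldots,k\}$; equivalently,
\[
\dim_{\F_q}(U_i^{\perp^*}U_h)\geq n_k.
\]
In this range the hypothesis forces $n_i=n_h=n_k$, so $\dim_{\F_q}(U_h)=n_k$ and $U_i^{\perp^*}\neq\{0\}$ (indeed $\dim_{\F_q}(U_i^{\perp^*})=m-n_k\geq 1$, as $n_k<m$). Picking any nonzero $a\in U_i^{\perp^*}$, the $\F_q$-subspace $aU_h$ is contained in $U_i^{\perp^*}U_h$ and has dimension $n_k$, giving the claimed inequality.

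With the bound $j_{i,h}\leq m-n_k$ in hand, each summand is bounded by $\prod_{h=i+1}^{k}q^{j_{i,h}}\leq q^{(k-i)(m-n_k)}$, and after the substitution $s=k-i$ the sum telescopes into a geometric series:
\[
\sum_{i=k-\ell}^{k-1}\prod_{h=i+1}^{k}q^{j_{i,h}}+1\leq \sum_{s=1}^{\ell}q^{s(m-n_k)}+1=\sum_{s=0}^{\ell}q^{s(m-n_k)}=\frac{q^{(\ell+1)(m-n_k)}-1}{q^{m-n_k}-1}.
\]
Multiplying by $q^m-1$ produces the right-hand inequality. There is no real obstacle here; the only place requiring a small argument is the dimension estimate $\dim_{\F_q}(U_i^{\perp^*}U_h)\geq n_k$, which is handled by the ``multiplication by a nonzero element'' trick above.
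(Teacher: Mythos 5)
Your proof is correct and follows the same route as the paper: it invokes \Cref{th:numberpointsweightcomplementary} and bounds each exponent by $0\leq j_{i,h}\leq m-n_k$, with the upper bound coming from $\dim_{\F_q}(U_i^{\perp^*}U_h)\geq \dim_{\F_q}(U_h)=n_k$. The only difference is that you spell out the justification for this dimension estimate (via multiplication by a nonzero element of $U_i^{\perp^*}$) and the geometric-series summation, which the paper leaves implicit.
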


\begin{proof}
Up to equivalence, we can assume that $G=\uu_1\oplus \ldots \oplus \uu_k$ is a generator matrix in a complementary weight form for $\C$ and let $U_i$ be the $\F_q$-span of the entries of $\uu_i$, for every $i$. By using \Cref{th:numberpointsweightcomplementary}, we know that 
    \[
A_{n_k}=\begin{cases}
q^m-1 & \mbox{ if }\ell=0, \\
    (q^m-1)
    \left(\sum\limits_{i=k-\ell}^{k-1}\prod\limits_{h=i+1}^k q^{j_{i,h}}+1\right) & \mbox{ otherwise }\end{cases}
    \] 
    where $j_{i,h}= m-\dim_{\F_q}\left( U_i^{\perp^*} U_h\right)$, for every $i \in \{k-\ell,\ldots,k-1\}$ and $h \in \{i+1,\ldots,k\}$. 
    Since $\dim_{\F_q}(U_i^{\perp^*}U_h) \geq \dim_{\F_q}(U_h) \geq n_k$, we have $0 \leq j_{i,h}\leq m-n_k$ and hence we get the upper bound on $A_{n_k}$ when $j_{i,h}=m-n_k$ and the lower bound when $j_{i,h}=0$, for any $i,h$.
\end{proof}

\subsection{Tightness of the bounds on minimum weight codewords}

In this subsection, we study the tightness of the bounds on the number of minimum weight codewords for a completely decomposable code provided in \Cref{cor:boundminimumnonprime}.

First, we note that the lower bound of Corollary \ref{cor:boundminimumnonprime} is sharp. Indeed, assume that $m=2e$ and let $\xi \in \F_{q^m} \setminus \F_{q^e}$. Consider, $\mu_1,\ldots,\mu_k \in \F_{q^e}$, where $k \leq q-1$, such that $\N_{q^e/q}(\mu_i)\ne\N_{q^e/q}(\mu_j)$ and $\N_{q^e/q}(\mu_i \mu_j \xi^{q^e+1})\ne 1$, for every $i\ne j$. Let $\lambda \in \F_{q^e}$ such that $\F_{q^e}=\F_q(\lambda)$.
The codes admitting a generator matrix of the form
\[ G=
\begin{pmatrix}
            \uu_1 & 0 & 0 & \cdots & 0 \\
            0 & \uu_2 & 0 & \cdots  & 0 \\
            0 & 0 & \ddots & \cdots & 0 \\
            \vdots & \vdots & \vdots & \ddots & \vdots \\
            0 & 0 & 0 & \cdots & \uu_k
        \end{pmatrix}
\in \F_{q^m}^{k\times ke}, \]
where $\uu_i=(1 +\xi \mu_i, \lambda + \xi \mu_i \lambda^q, \ldots, \lambda^{e-1} + \xi \mu_i(\lambda^{e-1})^q),$ for every $i \in \{1,\ldots,k\}$ are completely decomposable $[ke,k,e]_{q^m/q}$ codes of type $(e,\ldots,e)$ having $(q^m-1)k$ codewords of minimum weight $e$. See \cite[Section 6.2]{zullo2023multi} for more details on the weight distribution of such codes.

The upper bound of \Cref{cor:boundminimumnonprime} is also sharp as proved in the following result.
\begin{proposition} \label{prop:constrlowernotprime}
    Let $m=re$, with $r>1$ and $e \geq 1$. Let $\xi \in \F_{q^m}$ such that $\F_{q^e}(\xi)=\F_{q^m}$. Define $\uu_i \in \F_{q^m}^{(r-1)e}$, such that the entries of $\uu_i$ forms an $\F_q$-basis of $\langle 1,\xi,\ldots,\xi^{r-2} \rangle_{\F_{q^e}}$. Let $\C_i=\langle \uu_i \rangle_{\F_{q^m}}$. Then the code
\[
\C=\bigoplus_{i=1}^k \C_i,
\]
is a completely decomposable $[k(r-1)e,k,(r-1)e]_{q^m/q}$ code of type $((r-1)e,\ldots,(r-1)e)$ having $(q^m-1)\frac{q^{ke}-1}{q^{e}-1}$ codewords of minimum weight $(r-1)e$ and all the other nonzero codewords have weight $m$.
\end{proposition}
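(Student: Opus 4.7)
My plan is to first read off the basic structure of $\C$ and then combine a weight-dichotomy argument with the counting formula from \Cref{th:numberpointsweightcomplementary}. Set $U:=\langle 1,\xi,\ldots,\xi^{r-2}\rangle_{\F_{q^e}}\subseteq\F_{q^m}$. Since $\F_{q^e}(\xi)=\F_{q^m}$, the element $\xi$ has $\F_{q^e}$-degree $r$, so $1,\xi,\ldots,\xi^{r-1}$ is an $\F_{q^e}$-basis of $\F_{q^m}$ and hence $U$ is an $\F_{q^e}$-subspace with $\dim_{\F_{q^e}}(U)=r-1$, equivalently $\dim_{\F_q}(U)=(r-1)e$. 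By construction, the $\F_q$-span of the entries of each $\uu_i$ equals $U$, so $\w(\uu_i)=(r-1)e$. Therefore \Cref{th:characterization1dim} gives that $\C$ is completely decomposable of type $((r-1)e,\ldots,(r-1)e)$ with $n=k(r-1)e$, and \Cref{prop:boundweightcompl} yields $d=(r-1)e$.

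The next step is a sharp weight dichotomy: every nonzero codeword of $\C$ has weight $(r-1)e$ or $m$. Indeed, for a nonzero $\cc=(\alpha_1\uu_1,\ldots,\alpha_k\uu_k)\in\C$, the $\F_q$-span of its entries is $V:=\sum_{i=1}^k\alpha_iU$; since each $\alpha_iU$ is $\F_{q^e}$-linear, so is $V$. Picking any $i_0$ with $\alpha_{i_0}\neq 0$, we have $\alpha_{i_0}U\subseteq V\subseteq\F_{q^m}$ with $\dim_{\F_{q^e}}(\alpha_{i_0}U)=r-1$ and $\dim_{\F_{q^e}}(\F_{q^m})=r$, so $\dim_{\F_{q^e}}(V)\in\{r-1,r\}$; equivalently $\w(\cc)\in\{(r-1)e,m\}$. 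This already settles the last clause of the proposition and reduces the remaining task to counting the codewords of minimum weight.

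For that count I would apply \Cref{th:numberpointsweightcomplementary} with $\ell=k-1$. Because $U_i=U$ for every $i$, all the invariants $j_{i,h}=m-\dim_{\F_q}(U^{\perp^*}U)$ coincide, and the main obstacle is computing $U^{\perp^*}$ for this particular $\F_{q^e}$-linear $U$. The key observation is the identity $U^{\perp^*}=U^{\perp^*_{q^m/q^e}}$: using $\Tr_{q^m/q}=\Tr_{q^e/q}\circ\Tr_{q^m/q^e}$ together with the $\F_{q^e}$-linearity of $U$, any $v\in U^{\perp^*}$ satisfies $\Tr_{q^e/q}\bigl(\alpha\,\Tr_{q^m/q^e}(uv)\bigr)=0$ for every $u\in U$ and every $\alpha\in\F_{q^e}$, which forces $\Tr_{q^m/q^e}(uv)=0$; the reverse inclusion is immediate. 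Now \Cref{prop:dualwithdual} applied to the extension $\F_{q^m}/\F_{q^e}$ with $\lambda=\xi$ and $t=r-1$ yields $U^{\perp^*}=\delta^{-1}\F_{q^e}$ for some $\delta\in\F_{q^m}^*$. Because $U$ is $\F_{q^e}$-linear, $U^{\perp^*}U=\delta^{-1}\F_{q^e}U=\delta^{-1}U$ has $\F_q$-dimension $(r-1)e$, whence $j_{i,h}=e$.

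Substituting $j_{i,h}=e$ into \Cref{th:numberpointsweightcomplementary} gives
\[
A_{(r-1)e}(\C)=(q^m-1)\Bigg(\sum_{i=1}^{k-1}q^{e(k-i)}+1\Bigg)=(q^m-1)\sum_{s=0}^{k-1}q^{es}=(q^m-1)\frac{q^{ke}-1}{q^e-1},
\]
which, combined with the dichotomy of the second paragraph, completes the proof.
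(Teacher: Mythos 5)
Your proof is correct, and the counting half is essentially the paper's: both apply \Cref{th:numberpointsweightcomplementary} with $\ell=k-1$, reduce everything to showing $U^{\perp^*}=\zeta\F_{q^e}$ for some $\zeta\in\F_{q^m}^*$ (so that $j_{i,h}=e$), and sum the geometric series. You justify the dual computation more explicitly than the paper does, via the identity $U^{\perp^*}=U^{\perp^*_{q^m/q^e}}$ for $\F_{q^e}$-linear $U$ followed by \Cref{prop:dualwithdual} over $\F_{q^e}$; the paper simply asserts $U_i^{\perp^*}=\zeta\F_{q^e}$, so your extra detail is welcome. Where you genuinely diverge is the weight dichotomy: the paper works on the dual side, noting that $U^{\perp'}=\zeta\F_{q^e}\times\cdots\times\zeta\F_{q^e}$ forces $\dim_{\F_q}(U^{\perp'}\cap\langle\xx\rangle_{\F_{q^m}})\in\{0,e\}$ and then invoking \eqref{eq:relationweightdual}, whereas you argue on the primal side that the entry span $\sum_i\alpha_iU$ of a nonzero codeword is an $\F_{q^e}$-subspace squeezed between some $\alpha_{i_0}U$ of $\F_{q^e}$-dimension $r-1$ and $\F_{q^m}$ of $\F_{q^e}$-dimension $r$. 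Your version is more elementary and self-contained (it needs neither \Cref{prop:dualcomplement} nor the weight--intersection correspondence for that step), while the paper's dual-side argument has the advantage of reusing machinery already set up for the count; both are equally short and both are valid.
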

\begin{proof}
Let $U_i$ be the $\F_q$-span of the entries of $\uu_i$ and note that $U_i = \langle 1,\xi,\ldots,\xi^{r-2} \rangle_{\F_{q^e}}$, for each $i$. By \Cref{th:numberpointsweightcomplementary}, we know that \begin{equation} \label{eq:minimumcodedubfield}
A_{e}(\C)=
    (q^m-1)
    \left(\sum\limits_{i=1}^{k-1}\prod\limits_{h=i+1}^k q^{j_{i,h}}+1\right)
    \end{equation}
    where $j_{i,h}= m-\dim_{\F_q}\left( U_i^{\perp^*} U_h\right)$, for every $i \in \{1,\ldots,k-1\}$ and $h \in \{i+1,\ldots,k\}$. So, in this case,
    \begin{equation} \label{eq:dualfqehyper}
U_i^{\perp^*}=(\langle 1,\xi,\ldots,\xi^{r-2} \rangle_{\F_{q^e}})^{\perp^*}=\zeta \F_{q^e},
    \end{equation}
  for some $\zeta \in \F_{q^m}^*$. Therefore,
    \[
    j_{i,h}= m-\dim_{\F_q}\left( U_i^{\perp^*} U_h\right) = m- (r-1)e=e.
    \]
    By substituting such values in \eqref{eq:minimumcodedubfield}, we get the number of minimum weight codewords. Finally, we show that all nonzero codewords have weight $m$ or $m-e=(r-1)e$, proving the assertion. Note that a system associated with $\C$ is $U=U_1 \times \cdots \times U_k$ and let $G$ be a generator matrix of $\C$ whose columns form and $\F_q$-basis for $U$. By using \eqref{eq:relationweightdual}, we get that a codeword $\xx G$ of $\C$ has weight $i$ if and only if 
$\dim_{\F_q}(U^{\perp'}\cap \langle \xx \rangle_{\F_{q^m}})=m-i$. Moreover, \Cref{prop:dualcomplement}, together with \eqref{eq:dualfqehyper}, imply that \[
U ^{\perp'} = U_1^{\perp^*} \times \ldots \times U_k^{\perp^*} =\zeta \F_{q^e} \times \ldots \times \zeta\F_{q^e}. \]
Therefore, $\dim_{\F_q}(U^{\perp'}\cap \langle \xx \rangle_{\F_{q^m}}) \in \{0,e\}$, implying $i\in \{m-e,m\}$.
\end{proof}

In the next, we characterize completely decomposable codes for which the number of minimum weight codewords attains the upper bound in \Cref{cor:boundminimumnonprime}.
To this aim, we need the following lemma, whose proof is straightforward.

\begin{lemma} \label{lem:powergeneralized}
Let $S$ be an $\fq$-subspace of $\F_{q^m}$ and let $a_2,\ldots,a_g \in \F_{q^m} ^*$. Assume that $\F_{q^e}=\fq(a_2,\ldots,a_g)$, where $\F_{q^e}$ is a subfield of $\F_{q^m}$. If $S=Sa_2=\ldots=Sa_g$, then $S$ is an $\F_{q^e}$-subspace of $\F_{q^m}$.
\end{lemma}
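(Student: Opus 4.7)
The plan is to show that $S$ is closed under multiplication by every element of $\F_{q^e}$, which suffices since $S$ is already an $\F_q$-subspace by hypothesis. Since $\F_{q^e} = \F_q(a_2,\ldots,a_g)$ is generated over $\F_q$ by a finite set of algebraic elements, every element of $\F_{q^e}$ can be expressed as a polynomial in $a_2,\ldots,a_g$ with coefficients in $\F_q$; the goal is thus to reduce the problem to closure under multiplication by arbitrary monomials in the $a_i$'s.

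First I would observe that the hypothesis $S = Sa_i$ means $Sa_i \subseteq S$ for each $i \in \{2,\ldots,g\}$. Iterating this inclusion, for any finite sequence of indices $i_1,\ldots,i_r \in \{2,\ldots,g\}$, one obtains
\[
S \cdot a_{i_1}a_{i_2} \cdots a_{i_r} \subseteq S,
\]
by a straightforward induction on $r$. This shows that $S$ is stable under multiplication by every monomial in $a_2,\ldots,a_g$.

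Next, using that $S$ is an $\F_q$-subspace, for any $\F_q$-linear combination $\alpha = \sum_j \lambda_j a_{i_{j,1}} \cdots a_{i_{j,r_j}}$ of such monomials and any $s \in S$, the element $s\alpha$ is an $\F_q$-linear combination of elements of $S$, hence it belongs to $S$. Therefore $S \cdot \F_q[a_2,\ldots,a_g] \subseteq S$.

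Finally, I would invoke the fact that in a finite field extension $\F_q[a_2,\ldots,a_g]$ already coincides with $\F_q(a_2,\ldots,a_g) = \F_{q^e}$, since each $a_i$ is algebraic over $\F_q$ and so the ring they generate is already a field. Thus $S\alpha \subseteq S$ for every $\alpha \in \F_{q^e}$, which combined with the $\F_q$-subspace structure of $S$ gives that $S$ is an $\F_{q^e}$-subspace of $\F_{q^m}$. There is no real obstacle here; the only point worth stating carefully is the passage from polynomial ring to field, which is immediate in the finite-field setting.
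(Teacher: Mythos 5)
Your proof is correct and follows essentially the same route as the paper's (omitted as ``straightforward'', but present in the source): closure of $S$ under multiplication by monomials in the $a_i$'s, then under $\F_q$-linear combinations of these, and finally identification of the resulting ring with $\F_{q^e}$. Your observation that $\F_q[a_2,\ldots,a_g]=\F_q(a_2,\ldots,a_g)$ because the $a_i$ are algebraic is a slightly cleaner way to handle the last step than the paper's explicit treatment of ratios of polynomials, but the argument is the same in substance.
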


The next result characterizes completely decomposable codes having the maximum number of minimum weight codewords.

\begin{theorem} \label{th:charminimumweightnotprime}
Let $\C$ be a completely decomposable $[n,k]_{q^m/q}$ code of type $(n_1,\ldots,n_k)$. Let
$n_k=\ldots=n_{k-{\ell}} \neq n_{k-{\ell}-1}$, for some $\ell \in \{1,\ldots,k-1\}$, with $n_0:=0$. If
\[A_{n_k}(\C) = (q^m-1) \frac{q^{(\ell+1)(m-n_k)}-1}{q^{m-n_k}-1} ,\]
then $m=re$, for some $r>1$ and $e \geq 1$, $n_k=(r-1)e$, and up to equivalence, 
  \[
    \Pi(\C,k-\ell)= \underbrace{\C' \oplus \ldots \oplus \C'}_{\ell+1},
    \]
where $\C'=\langle \uu \rangle_{\F_{q^m}}$, where the entries of $\uu$ forms an $\F_q$-basis of an $\F_{q^e}$-hyperplane of $\F_{q^m}$. In particular, if $\ell=k-1$, then up to equivalence, we have 
    \[
    \C=\bigoplus_{i=1}^{k} \C',
    \]
    with $\C'=\langle \uu \rangle_{\F_{q^m}}$, where the entries of $\uu$ form an $\F_q$-basis of an $\F_{q^e}$-hyperplane of $\F_{q^m}$.
\end{theorem}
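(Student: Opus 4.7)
The plan is to read off the equality case from the bound in \Cref{cor:boundminimumnonprime} via \Cref{th:numberpointsweightcomplementary}, and then extract strong multiplicative rigidity of the $\F_q$-subspaces associated with the last $\ell+1$ blocks of $\C$. Fix a generator matrix $G=\uu_1\oplus\ldots\oplus\uu_k$ in weight complementary form and let $U_i$ be the $\F_q$-span of the entries of $\uu_i$. By \Cref{th:numberpointsweightcomplementary}, $A_{n_k}(\C)=(q^m-1)\bigl(\sum_{i=k-\ell}^{k-1}\prod_{h=i+1}^{k}q^{j_{i,h}}+1\bigr)$ with $j_{i,h}=m-\dim_{\F_q}(U_i^{\perp^*}U_h)$. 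As in the proof of \Cref{cor:boundminimumnonprime}, $bU_h\subseteq U_i^{\perp^*}U_h$ for any nonzero $b\in U_i^{\perp^*}$ gives $\dim_{\F_q}(U_i^{\perp^*}U_h)\ge n_k$, hence $j_{i,h}\le m-n_k$. Attaining the upper bound forces $\dim_{\F_q}(U_i^{\perp^*}U_h)=n_k$ for every $i\in\{k-\ell,\ldots,k-1\}$ and $h\in\{i+1,\ldots,k\}$.

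For such a pair and any two nonzero $a,b\in U_i^{\perp^*}$, both $aU_h$ and $bU_h$ equal $U_i^{\perp^*}U_h$ by $\F_q$-dimension, forcing $(a^{-1}b)U_h=U_h$. Thus the set $\mathrm{Stab}(U_h):=\{x\in\F_{q^m}:xU_h\subseteq U_h\}$ contains $a^{-1}U_i^{\perp^*}$; since $(x+y)U_h, xyU_h\subseteq U_h$, it is a finite subring of $\F_{q^m}$, hence a subfield. Applying \Cref{lem:powergeneralized} to $U_h$ and a nontrivial basis of $a^{-1}U_i^{\perp^*}$, $U_h$ is an $\F_{q^e}$-subspace of $\F_{q^m}$, where $\F_{q^e}:=\F_q(a^{-1}U_i^{\perp^*})\subseteq\mathrm{Stab}(U_h)$. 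The degree $e$ satisfies $e\mid m$, $e\mid n_k$ (as $U_h$ is an $\F_{q^e}$-space of $\F_q$-dimension $n_k$), and $e\ge m-n_k$ (since $\dim_{\F_q}(a^{-1}U_i^{\perp^*})=m-n_k$). Combined with $e\mid\gcd(m,n_k)\mid(m-n_k)$, this forces $e=m-n_k$; writing $m=re$ yields $n_k=(r-1)e$, proving the first assertion.

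The inclusion $a^{-1}U_i^{\perp^*}\subseteq\F_{q^e}$ now becomes an equality by dimension, so $U_i^{\perp^*}=a\F_{q^e}$; expanding $\Tr_{q^m/q}(\alpha ax)=\Tr_{q^e/q}(\alpha\Tr_{q^m/q^e}(ax))$ for $\alpha\in\F_{q^e}$ gives $U_i=(a\F_{q^e})^{\perp^*}=a^{-1}\ker(\Tr_{q^m/q^e})$. On the other hand, each $U_h$ with $h\in\{k-\ell+1,\ldots,k\}$ is an $\F_{q^e}$-hyperplane of $\F_{q^m}$ and hence equals $\zeta_h^{-1}\ker(\Tr_{q^m/q^e})$ for some $\zeta_h\in\F_{q^m}^*$, since every nonzero $\F_{q^e}$-linear functional on $\F_{q^m}$ is of the shape $x\mapsto\Tr_{q^m/q^e}(\zeta x)$. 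Therefore every $U_j$ with $j\in\{k-\ell,\ldots,k\}$ is an $\F_{q^m}^*$-multiple of $Z:=\ker(\Tr_{q^m/q^e})$.

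To finish, let $\uu$ have entries forming an $\F_q$-basis of $Z$, and set $\C':=\langle\uu\rangle_{\F_{q^m}}$. Writing $U_j=\zeta_j^{-1}Z$, the vectors $\uu_j$ and $\zeta_j^{-1}\uu$ both have entries that are $\F_q$-bases of $U_j$, so some $A_j\in\GL(n_k,q)$ satisfies $\uu_j A_j=\zeta_j^{-1}\uu$, whence $\C_j\cdot A_j=\C'$. The block-diagonal equivalence $A_{k-\ell}\oplus\ldots\oplus A_k\in\GL((\ell+1)n_k,q)$ then transforms $\Pi(\C,k-\ell)=\bigoplus_{j=k-\ell}^{k}\C_j$ into $\underbrace{\C'\oplus\ldots\oplus\C'}_{\ell+1}$; when $\ell=k-1$ one has $\Pi(\C,1)=\C$, giving the final claim. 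The main obstacle I expect is the arithmetic pinning $e=m-n_k$: one has to simultaneously exploit the lower bound $e\ge m-n_k$ from the stabilizer containment and the divisibility chain $e\mid\gcd(m,n_k)\mid(m-n_k)$ to squeeze out equality, after which the rest of the proof is essentially linear-algebra bookkeeping.
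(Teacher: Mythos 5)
Your proposal is correct and takes essentially the same route as the paper's proof: both extract from the equality case of \Cref{th:numberpointsweightcomplementary} that $\dim_{\F_q}(U_i^{\perp^*}U_h)=n_k$, deduce that the multiplicative stabilizer of $U_h$ contains a subfield $\F_{q^e}$ with $e\geq m-n_k$ and $e\mid\gcd(m,n_k)$ (pinning $e=m-n_k$), and conclude that the last $\ell+1$ subspaces are all $\F_{q^m}^*$-multiples of a fixed $\F_{q^e}$-hyperplane, finishing with a block-diagonal equivalence. The only cosmetic difference is that you describe the hyperplanes via $\ker(\Tr_{q^m/q^e})$ where the paper uses the normal form $c\langle 1,\xi,\ldots,\xi^{r-2}\rangle_{\F_{q^e}}$.
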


\begin{proof}
Up to equivalence, there exists a generator matrix $G=\uu_1\oplus \ldots \oplus \uu_k$  of $\C$ in a weight complementary form. 
Let $U_i$ be the $\F_q$-span of the entries of $\uu_i$. The upper bound of \Cref{cor:boundminimumnonprime} is attained if and only if
\[
 m-n_k=j_{i,h}= m-\dim_{\F_q}\left( U_i^{\perp^*} U_h\right),
\]
or in other words
\begin{equation} \label{eq:equalityuppersubspace}
n_k=\dim_{\F_q}\left( U_i^{\perp^*} U_h\right),
\end{equation}
 for every $i \in \{k-\ell,\ldots,k-1\}$ and $h \in \{i+1,\ldots,k\}$. 
First, let consider $i=k-1$ and $h=k$ and the pair $(U_{k-1},U_k)$.
Let $a_1',\ldots,a_{m-n_k}'$ be an $\F_q$-basis of $U_{k-1}^{\perp^*}$. Since, $U_{k-1}^{\perp^*} U_k=a_1'U_k+\cdots+a_{m-n_k}'U_k$ and $\dim_{\F_q}(U_k)=n_k$, by \eqref{eq:equalityuppersubspace}, we get that
\[
a_1'U_k=\cdots=a_{m-n_k}'U_k.
\]
So, let $\F_{q^e}=\F_{q}(a_1',\ldots,a_{m-n_k}')$, where $\F_{q^e}$ is a subfield of $\F_{q^m}$ and so $m=er$, for some positive integer $r$. Clearly, $e \geq m-n_k$. By Lemma \ref{lem:powergeneralized}, we get that $U_k$ is an $\F_{q^e}$-subspace and as a consequence $e \mid n_k$ and $n_k=(r-1)e$. As a consequence, $U_{k-1}^{\perp^*}$ is an $\F_{q^e}$-subspace of dimension $1$. Hence, since $m=re$, $U_{k-1}$ is an $\F_{q^e}$-subspace of dimension $r-1$. So $U_{k-1}=c_{k-1}\langle 1,\ldots,\xi^{r-1} \rangle_{\F_{q^e}}$ and $U_{k}=c_{k}\langle 1,\ldots,\xi^{r-1} \rangle_{\F_{q^e}}$, for some $c_{k-1},c_k \in \F_{q^m}$ and $\xi \in \F_{q^m}$ such that $\F_{q^e}(\xi)=\F_{q^m}$, see e.g. \cite[Proposition 2.5]{napolitano2022clubs}. Arguing as before on the pairs $(U_i,U_h)$, for every $i \in \{k-\ell,\ldots,k-1\}$ and $h \in \{i+1,\ldots,k\}$, we get that
\[
U_i=c_i\langle 1,\ldots,\xi^{r-1} \rangle_{\F_{q^e}},
\]
for some $c_i \in \F_{q^m}$ and for every $i \in \{k-\ell,\ldots,k\}$.

So, a system associated with $\C$ is 
\[
U=U_1\times \ldots \times U_{k-\ell-1} \times c_{k-\ell}H \times \ldots c_{k-1}H \times c_kH, 
\]
where $H=\langle 1,\ldots,\xi^{r-1} \rangle_{\F_{q^e}}$ that is equivalent to 
\[
U'=U_1\times \ldots \times U_{k-\ell-1} \times H\times \ldots \times H.
\]
This means that, up to equivalence,  \[
    \C=\bigoplus_{i=1}^{k-\ell-1} \C_i \oplus \underbrace{\C' \oplus \ldots \oplus \C'}_{\ell+1},
    \]
    where $\C_i=\langle \uu_i \rangle_{\F_{q^m}}$, for some $\uu_i \in \F_{q^m}^{n_i}$, with $\w(\uu_i)=n_i$, for every $i \in \{1,\ldots,k-\ell-1\}$ and $\C'=\langle \uu \rangle_{\F_{q^m}}$, where the entries of $\uu$ forms an $\F_q$-basis of $\langle 1,\ldots,\xi^{r-1} \rangle_{\F_{q^e}}$. Therefore, $\Pi(\C,k-\ell)$ is as in the assertion.
\end{proof}

In the next result, we prove that a completely decomposable code with all blocks of the same length and the maximum number of minimum weight codewords is equivalent to the code constructed in \Cref{prop:constrlowernotprime}. In particular, its weight distribution is uniquely determined.

\begin{corollary} \label{cor:charmaximumminimnotprime}
    Let $\C$ be a completely decomposable $[n,k]_{q^m/q}$ code of type $(n_1,\ldots,n_k)$. Assume that
$n_1=\ldots=n_k$ and 
\[A_{n_k}(\C) = (q^m-1) \frac{q^{(\ell+1)(m-n_k)}-1}{q^{m-n_k}-1} .\]
Then $m=re$, for some $r>1$ and $e \geq 1$, $n_k=(r-1)e$, and up to equivalence, 
  we have 
    \[
    \C=\bigoplus_{i=1}^{k} \C',
    \]
    with $\C'=\langle \uu \rangle_{\F_{q^m}}$, where the entries of $\uu$ form an $\F_q$-basis of $\langle 1,\xi,\ldots,\xi^{r-2} \rangle_{\F_{q^e}}$, for some $\xi \in \F_{q^m}$ such that $\F_{q^e}(\xi)=\F_{q^m}$.  
\end{corollary}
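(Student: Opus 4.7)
The plan is to recognize the corollary as the boundary case $\ell = k-1$ of \Cref{th:charminimumweightnotprime}.

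First, I would check that under the extra hypothesis $n_1 = n_2 = \cdots = n_k$, the integer $\ell$ appearing in \Cref{th:charminimumweightnotprime} is forced to equal $k-1$. Recall that $\ell$ is defined as the largest integer in $\{1, \ldots, k-1\}$ such that $n_k = n_{k-1} = \cdots = n_{k-\ell}$ with $n_{k-\ell-1} \neq n_k$, using the convention $n_0 := 0$. Since all $n_i$ coincide with the positive integer $n_k$ while $n_0 = 0$, the only admissible value is $\ell = k-1$. In addition, the numerical hypothesis $A_{n_k}(\C) = (q^m-1)\,\frac{q^{(\ell+1)(m-n_k)}-1}{q^{m-n_k}-1}$ stated in the corollary is exactly the maximum-cardinality hypothesis of \Cref{th:charminimumweightnotprime} with this value of $\ell$.

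Second, I would invoke \Cref{th:charminimumweightnotprime} with $\ell = k-1$. The ``in particular'' clause of that theorem immediately yields $m = re$ for some $r > 1$ and $e \geq 1$, together with $n_k = (r-1)e$, and, up to equivalence,
\[
\C = \bigoplus_{i=1}^{k} \C'',
\]
where $\C'' = \langle \mathbf{u}'' \rangle_{\F_{q^m}}$ and the entries of $\mathbf{u}''$ form an $\F_q$-basis of some (a priori unspecified) $\F_{q^e}$-hyperplane $H$ of $\F_{q^m}$.

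Third, I would normalize $H$ into the canonical shape required by the statement. By \cite[Proposition 2.5]{napolitano2022clubs} (already invoked inside the proof of \Cref{th:charminimumweightnotprime}), every $\F_{q^e}$-hyperplane of $\F_{q^m}$ has the form $c\,\langle 1, \xi, \ldots, \xi^{r-2}\rangle_{\F_{q^e}}$ for some $c \in \F_{q^m}^*$ and some $\xi \in \F_{q^m}$ with $\F_{q^e}(\xi) = \F_{q^m}$. Choosing the alternative generator $\mathbf{u} := c^{-1}\mathbf{u}''$ of the same $1$-dimensional $\F_{q^m}$-line $\C''$ (so that $\C' := \langle \mathbf{u} \rangle_{\F_{q^m}} = \C''$), its entries form an $\F_q$-basis of $\langle 1, \xi, \ldots, \xi^{r-2}\rangle_{\F_{q^e}}$, which is the presentation required by the corollary.

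I do not expect any substantive obstacle here: the corollary is essentially a restatement of the $\ell = k-1$ case of \Cref{th:charminimumweightnotprime}, and the only point to address is the cosmetic normalization of an arbitrary $\F_{q^e}$-hyperplane as a geometric-progression span. This is achieved by a scalar change of generator, which does not alter the underlying $1$-dimensional $\F_{q^m}$-subspace and hence does not affect the direct-sum decomposition.
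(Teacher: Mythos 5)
Your proposal is correct and follows essentially the same route as the paper: reduce to the $\ell=k-1$ case of \Cref{th:charminimumweightnotprime} and then write the resulting $\F_{q^e}$-hyperplane as $c\langle 1,\xi,\ldots,\xi^{r-2}\rangle_{\F_{q^e}}$ via \cite[Proposition 2.5]{napolitano2022clubs}. The only (cosmetic) difference is that you absorb the scalar $c$ by rescaling the generator $\uu''$, whereas the paper phrases the same normalization as an equivalence of the associated systems $U'\times\cdots\times U'$ and $\langle 1,\xi,\ldots,\xi^{r-2}\rangle_{\F_{q^e}}\times\cdots\times\langle 1,\xi,\ldots,\xi^{r-2}\rangle_{\F_{q^e}}$; both arguments are valid.
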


\begin{proof}
By \Cref{th:charminimumweightnotprime}, we have that $m=re$, for some $r>1$ and $e \geq 1$, $n_k=(r-1)e$, and up to equivalence, 
    \[
    \C=\bigoplus_{i=1}^{k} \C',
    \]
    with $\C'=\langle \uu \rangle_{\F_{q^m}}$, where the entries of $\uu$ form an $\F_q$-basis of an $\F_{q^e}$-hyperplane of $\F_{q^m}$. Let $U'$ be the $\F_q$-span of the entries of $\uu$. Note that $U'$ is an $\F_{q^e}$-hyperplane of $\F_{q^m}$ and $U=U' \times \ldots \times U'$ is a system associated with $\C$. So, $U'=c\langle 1,\ldots,\xi^{r-1} \rangle_{\F_{q^e}}$, for some $c \in \F_{q^m}$ and $\xi \in \F_{q^m}$ such that $\F_{q^e}(\xi)=\F_{q^m}$, see e.g. \cite[Proposition 2.5]{napolitano2022clubs}. Then the system 
    \[
    \langle 1,\xi,\ldots,\xi^{r-2} \rangle_{\F_{q^e}} \times \ldots \times \langle 1,\xi,\ldots,\xi^{r-2} \rangle_{\F_{q^e}}
    \]
    is equivalent to $U$, implying the assertion.
\end{proof}

As proved in Theorem \ref{th:charminimumweightnotprime}, it is possible to construct codes whose number of minimum weight codewords attains the upper bound in \Cref{cor:boundminimumnonprime}, only when $m=re$ and $n_k=(r-1)e$ and so they have a common divisor $e$. Such a bound is not tight when the degree extension $m$ is a prime. 

\begin{corollary} \label{cor:boundminimumprime}
Assume that $m$ is a prime. Let $\C$ be a completely decomposable $\nk$ code of type $(n_1,\ldots,n_k)$. Assume that 
$n_k=\ldots=n_{k-{\ell}} \neq n_{k-{\ell}-1}$, for some $\ell \in \{0,\ldots,k-1\}$, with $n_0:=0$. Then 
\begin{equation} \label{eq:upperboundminimumprime}
 A_{n_k}(\C) \leq (q^m-1) \frac{q^{\ell+1}-1}{q-1} .
\end{equation}
\end{corollary}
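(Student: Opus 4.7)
The plan is to start from the exact formula for $A_{n_k}(\C)$ provided by \Cref{th:numberpointsweightcomplementary}, namely
\[
A_{n_k}(\C) = (q^m-1)\left(\sum_{i=k-\ell}^{k-1}\prod_{h=i+1}^{k} q^{j_{i,h}} + 1\right),
\]
where $j_{i,h}=m-\dim_{\F_q}(U_i^{\perp^*}U_h)$, and to bound each exponent $j_{i,h}$ individually by exploiting the hypothesis that $m$ is prime via the linear analogue of the Cauchy--Davenport inequality (\Cref{teo:bachocserrazemor}).

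The key observation is that, for every pair of indices with $k-\ell\le i<h\le k$, the assumption $n_{k-\ell}=\ldots=n_k$ yields $\dim_{\F_q}(U_i)=\dim_{\F_q}(U_h)=n_k$, and hence $\dim_{\F_q}(U_i^{\perp^*})=m-n_k$. Thus
\[
\dim_{\F_q}(U_i^{\perp^*})+\dim_{\F_q}(U_h)=m.
\]
Applying \Cref{teo:bachocserrazemor} to the pair $(U_i^{\perp^*},U_h)$: either $\dim_{\F_q}(U_i^{\perp^*}U_h)=m$, or else $\dim_{\F_q}(U_i^{\perp^*}U_h)\le m-1$, in which case the inequality forces
\[
\dim_{\F_q}(U_i^{\perp^*}U_h)\ge (m-n_k)+n_k-1=m-1.
\]
In either case $j_{i,h}\in\{0,1\}$, so $q^{j_{i,h}}\le q$ for every admissible pair $(i,h)$.

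Having bounded each factor, I would conclude by a direct geometric-sum calculation: for each fixed $i\in\{k-\ell,\ldots,k-1\}$ the product $\prod_{h=i+1}^{k} q^{j_{i,h}}$ involves $k-i$ factors, each at most $q$, so it is bounded by $q^{k-i}$. Summing over $i$ gives
\[
\sum_{i=k-\ell}^{k-1}\prod_{h=i+1}^{k} q^{j_{i,h}} \le \sum_{i=k-\ell}^{k-1} q^{k-i}=\sum_{s=1}^{\ell}q^{s}=\frac{q^{\ell+1}-q}{q-1},
\]
and substituting back yields
\[
A_{n_k}(\C)\le (q^m-1)\left(\frac{q^{\ell+1}-q}{q-1}+1\right)=(q^m-1)\,\frac{q^{\ell+1}-1}{q-1},
\]
which is the stated bound. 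No serious obstacle is anticipated here; the only point requiring care is checking that \Cref{teo:bachocserrazemor} is applicable, which it is since $\dim_{\F_q}(U_i^{\perp^*})=m-n_k\ge 1$ (as $n_k<m$) and $\dim_{\F_q}(U_h)=n_k\ge 1$.
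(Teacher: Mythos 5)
Your proposal is correct and follows essentially the same route as the paper: both start from the exact count in \Cref{th:numberpointsweightcomplementary} and use \Cref{teo:bachocserrazemor} on the pairs $(U_i^{\perp^*},U_h)$ to force $j_{i,h}\le 1$. Your version is in fact marginally more careful in handling the case $\dim_{\F_q}(U_i^{\perp^*}U_h)=m$ before invoking the Cauchy--Davenport analogue, and in writing out the final geometric sum, both of which the paper leaves implicit.
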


\begin{proof}
Up to equivalence, assume that $G=\uu_1\oplus \ldots \oplus \uu_k$ is a generator matrix in a weight complementary form for $\C$ and let $U_i$ be the $\F_q$-span of the entries of $\uu_i$, for every $i$. By using \Cref{th:numberpointsweightcomplementary}, we know that 
    \[
A_{n_k}(\C)=\begin{cases}
q^m-1 & \mbox{ if }\ell=0, \\
    (q^m-1)
    \left(\sum\limits_{i=k-\ell}^{k-1}\prod\limits_{h=i+1}^k q^{j_{i,h}} +1\right) & \mbox{ otherwise }\end{cases}
    \] 
    where $j_{i,h}= m-\dim_{\F_q}\left( U_i^{\perp^*} U_h\right)$, for every $i \in \{k-\ell,\ldots,k-1\}$ and $h \in \{i+1,\ldots,k\}$. By hypotheses, we have that $\dim_{\F_q}(U_i^{\perp^*})=m-n_k$, for any $i \in \{k-\ell,\ldots,k-1\}$, implying that, by Theorem  \ref{teo:bachocserrazemor},
    \[
    \dim_{\F_q}\left( U_i^{\perp^*} U_h\right) \geq m-n_k+n_k-1=m-1,
    \]
    and so $j_{i,h} \leq 1$, for every $i$ and $h$. Then, the assertion follows.
\end{proof}

By the above result, when $m$ is prime, the quantity $(q^m-1) \frac{q^{\ell+1}-1}{q-1}$ is roughly $q^{m+\ell}$ in \eqref{eq:upperboundminimumprime} provides an upper bound on $A_e(\C)$, significantly reducing the previous upper bound in \Cref{cor:boundminimumnonprime} defined by the quantity $(q^m-1) \frac{q^{(\ell+1)(m-n_k)}-1}{q^{m-n_k}-1}$ that is roughly $q^{m+\ell(m-n_k)}$.

\begin{remark} \label{rk:equalityweightminimum}
    Note that by the above theorem, we also deduce that the equality holds in \eqref{eq:upperboundminimumprime}, i.e.
    \[
    (q^m-1) \frac{q^{\ell+1}-1}{q-1} = A_{n_k}(\C)
    \]
    if and only if $j_{i,h}=1$, or equivalently $\dim_{\F_q}(U_i^{\perp^*}U_h)=m-1$, for every $i \in \{k-\ell,\ldots,k-1\}$ and $h \in \{i+1,\ldots,k\}$.
\end{remark}

For a prime degree extension, we characterize completely decomposable code $\C$ such that $A_{n_k}(\C)$ reaches the maximum in \eqref{eq:upperboundminimumprime}. 


\begin{theorem} \label{th:charshortnedpunct}
Assume that $m$ is prime. Let $\C$ be a completely decomposable $[n,k]_{q^m/q}$ code of type $\mathbf{n}=(n_1,\ldots,n_k)$. Let
$n_k=\ldots=n_{k-{\ell}} \neq n_{k-{\ell}-1}$, for some $\ell \in \{1,\ldots,k-1\}$, with $n_0:=0$. Then
\[A_{n_k}(\C) = (q^m-1) \frac{q^{\ell+1}-1}{q-1} ,\] if and only if, up to equivalence, we have 
    \[
    \Pi(\C,k-\ell)= \underbrace{\C' \oplus \ldots \oplus \C'}_{\ell+1},
    \]
with $\C'=\langle \uu \rangle_{\F_{q^m}}$, for some $\uu=(u_1,\ldots,u_{n_k}) \in \F_{q^m}^{n_k}$, where $\w(\uu)=n_k$ and $\dim_{\F_q}(U^{\perp^*}U)=m-1$, with $U=\langle u_1,\ldots,u_{n_k}\rangle_{\F_q}$.
\end{theorem}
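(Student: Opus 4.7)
The plan is to leverage Remark~\ref{rk:equalityweightminimum}, which asserts that, once we fix a generator matrix $G = \uu_1 \oplus \cdots \oplus \uu_k$ of $\C$ in weight complementary form and let $U_i$ be the $\F_q$-span of the entries of $\uu_i$, the equality $A_{n_k}(\C) = (q^m-1)\frac{q^{\ell+1}-1}{q-1}$ is equivalent to the condition $\dim_{\F_q}(U_i^{\perp^*} U_h) = m-1$ for every $i \in \{k-\ell,\ldots,k-1\}$ and $h \in \{i+1,\ldots,k\}$. By \Cref{rk:systemassociatedtype}, $U_1 \times \cdots \times U_k$ is a system associated with $\C$. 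Both directions will be reduced to this pairwise condition.

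For the sufficiency, assume that, up to equivalence, $\Pi(\C, k-\ell) = \C' \oplus \cdots \oplus \C'$ ($\ell+1$ copies) with $\C' = \langle \uu \rangle_{\F_{q^m}}$, $\w(\uu) = n_k$, and $\dim_{\F_q}(U^{\perp^*} U) = m-1$ where $U = \langle u_1,\ldots,u_{n_k}\rangle_{\F_q}$. Then, in a system associated with an equivalent code, the last $\ell+1$ factors all coincide with $U$, so for every pair $(i,h)$ in the required range one has $\dim_{\F_q}(U_i^{\perp^*} U_h) = \dim_{\F_q}(U^{\perp^*} U) = m-1$. Applying Remark~\ref{rk:equalityweightminimum} (and invariance of $A_{n_k}$ under equivalence) yields the desired equality.

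For the necessity, assume the equality. By Remark~\ref{rk:equalityweightminimum}, $\dim_{\F_q}(U_i^{\perp^*} U_h) = m-1$ for all pairs in the required range. Since $\dim_{\F_q}(U_i^{\perp^*}) + \dim_{\F_q}(U_h) = (m - n_k) + n_k = m$, I would apply Lemma~\ref{lem:gencriticalm-1} to the pair $(U_i^{\perp^*}, U_h)$, which gives $U_h = c_{i,h} (U_i^{\perp^*})^{\perp^*} = c_{i,h} U_i$ for some $c_{i,h} \in \F_{q^m}^*$. Specializing to $i = k-\ell$ shows that $U_h = c_h U_{k-\ell}$ for every $h \in \{k-\ell+1,\ldots,k\}$, so all these subspaces are scalar multiples of a common subspace $U := U_{k-\ell}$.

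To convert this proportionality into the structural statement on $\Pi(\C, k-\ell)$, I would apply the diagonal transformation $B = \mathrm{diag}(1,\ldots,1,c_{k-\ell+1}^{-1},\ldots,c_k^{-1}) \in \GL(k, q^m)$, which (see \Cref{rk: changebasissystem}) sends the system $U_1 \times \cdots \times U_k$ to the equivalent system $U_1 \times \cdots \times U_{k-\ell-1} \times U \times \cdots \times U$ with $\ell+1$ identical copies of $U$ at the end. The code associated with this new system is equivalent to $\C$, and its $(k-\ell)$-punctured code is exactly $\C' \oplus \cdots \oplus \C'$ for $\C' = \langle \uu \rangle_{\F_{q^m}}$ with the entries of $\uu$ an $\F_q$-basis of $U$. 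Finally, the remaining condition $\dim_{\F_q}(U^{\perp^*} U) = m-1$ is read off from the pair $(k-\ell,\,k-\ell+1)$, since
\[
\dim_{\F_q}(U_{k-\ell}^{\perp^*} U_{k-\ell+1}) = \dim_{\F_q}(U^{\perp^*} \cdot c_{k-\ell+1} U) = \dim_{\F_q}(U^{\perp^*} U) = m-1.
\]
The main obstacle is the bookkeeping that translates the pairwise proportionality $U_h = c_h U_{k-\ell}$ into an equivalence at the level of codes (and, in particular, at the level of the punctured code $\Pi(\C,k-\ell)$), and that verifies the final scalar-invariant identity $\dim_{\F_q}(U^{\perp^*} U) = m-1$ cleanly from a single pairwise product condition.
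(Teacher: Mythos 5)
Your proposal is correct and follows essentially the same route as the paper: both directions are reduced via Remark~\ref{rk:equalityweightminimum} to the pairwise condition $\dim_{\F_q}(U_i^{\perp^*}U_h)=m-1$, Lemma~\ref{lem:gencriticalm-1} is invoked to get proportionality of the last $\ell+1$ subspaces, and a diagonal element of $\GL(k,q^m)$ normalizes the system to $\ell+1$ identical factors. The only cosmetic difference is that you take $U_{k-\ell}$ rather than $U_k$ as the reference subspace.
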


\begin{proof}
Up to equivalence, we can assume that $\C$ has a generator matrix $G=\uu_1\oplus \ldots \oplus \uu_k$ in a weight complementary form. Let $U_i$ be the $\F_q$-span of the entries of $\uu_i$, for every $i$. By Remark \ref{rk:equalityweightminimum}, we have that $A_{n_k}(\C) = (q^m-1) \frac{q^{\ell+1}-1}{q-1}$ if and only if
\[
\dim_{\F_q}( U_{i}^{\perp^*}U_{k} )=m-1,
\]
for every $i\in \{k-\ell,\ldots,k-1\}$. Moreover, by hypotheses, $\dim_{\F_q}(U_i^{\perp^*})=m-\dim_{\F_q}(U_k)$ holds, Therefore, by using Lemma \ref{lem:gencriticalm-1}, we have that $U_i=d_iU_k$, for certain $d_i \in \F_{q^m}^*$. So, a system associated with $\C$ is 
\[
U=U_1\times \ldots \times U_{k-\ell-1} \times d_{k-\ell}U_{k} \times \ldots d_{k-1}U_{k} \times U_{k}, 
\]
that is equivalent to 
\[
U'=U_1\times \ldots \times U_{k-\ell-1} \times U_{k} \times \ldots \times U_{k}.
\]
This implies that, up to equivalence, 
\[
    \C=\bigoplus_{i=1}^{k-\ell-1} \C_i \oplus \underbrace{\C' \oplus \ldots \oplus \C'}_{\ell+1},
    \]
    where $\C_i=\langle \uu_i \rangle_{\F_{q^m}}$, for $i \in \{1,\ldots,k-\ell-1\}$ and $\C'=\langle \uu_k \rangle_{\F_{q^m}}$, such that $\dim_{\F_q}(U_k^{\perp^*}U_k)=m-1$, from which we get the assertion.

\end{proof}

In particular, if the $n_i$'s are equal, every completely decomposable code, having the maximum possible number of minimum weight codewords, is uniquely determined by one element $\uu \in \F_{q^m}^{n_i}$.

\begin{corollary} \label{cor:maximumnumeberprime}
    Assume that $m$ is prime. Let $\C$ be a completely decomposable $[n,k]_{q^m/q}$ code of type $(n_1,\ldots,n_k)$ with $n_1=\ldots=n_k$. Then
\[A_{n_k}(\C) = (q^m-1) \frac{q^{k}-1}{q-1} ,\] if and only if, up to equivalence, we have 
    \[
    \C= \bigoplus_{i_1}^k \C',
    \]
where $\C'=\langle \uu \rangle_{\F_{q^m}}$, for some $\uu=(u_1,\ldots,u_{n_k}) \in \F_{q^m}^{n_k}$, with $\w(\uu)=n_k$ and $\dim_{\F_q}(U^{\perp^*}U)=m-1$, with $U=\langle u_1,\ldots,u_{n_k}\rangle_{\F_q}$.
\end{corollary}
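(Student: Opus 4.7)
The plan is to derive this corollary as an immediate specialization of \Cref{th:charshortnedpunct} to the case where all block lengths coincide. First I would observe that the hypothesis $n_1 = \ldots = n_k$ forces the integer $\ell$ in the statement of \Cref{th:charshortnedpunct} (defined by $n_k = \ldots = n_{k-\ell} \neq n_{k-\ell-1}$, with the convention $n_0 := 0$) to take its maximal value $\ell = k-1$. Substituting $\ell+1 = k$ into $(q^m-1)\frac{q^{\ell+1}-1}{q-1}$ produces exactly $(q^m-1)\frac{q^{k}-1}{q-1}$, which matches the quantity appearing in the hypothesis of the corollary.

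Next I would invoke \Cref{th:charshortnedpunct}, which is itself an "if and only if" statement, so both directions of the corollary are covered simultaneously. For this value of $\ell$, the $t$-punctured code with $t = k - \ell = 1$ satisfies $\Pi(\C, 1) = \C$ by the very definition of the $1$-punctured code. Thus the conclusion of \Cref{th:charshortnedpunct} becomes
\[
\C \;=\; \Pi(\C, 1) \;=\; \underbrace{\C' \oplus \ldots \oplus \C'}_{k},
\]
with $\C' = \langle \uu \rangle_{\F_{q^m}}$ for some $\uu = (u_1, \ldots, u_{n_k}) \in \F_{q^m}^{n_k}$ satisfying $\w(\uu) = n_k$ and $\dim_{\F_q}(U^{\perp^*} U) = m-1$, where $U = \langle u_1, \ldots, u_{n_k}\rangle_{\F_q}$. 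This is exactly the characterization claimed.

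There is no genuine obstacle in this argument: its entire content is packaged in \Cref{th:charshortnedpunct}. The only care required is to track the bookkeeping correctly, namely that the assumption $n_1 = \ldots = n_k$ (together with the convention $n_0 = 0$) forces $\ell = k-1$, and that the $1$-punctured code agrees with $\C$ itself, so no "residual" blocks $\C_1, \ldots, \C_{k-\ell-1}$ survive in the decomposition of \Cref{th:charshortnedpunct}. With these two bits of bookkeeping in place, the corollary follows in a single line from the theorem.
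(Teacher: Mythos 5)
Your proposal is correct and matches the paper's own proof, which likewise derives the corollary as the immediate specialization of \Cref{th:charshortnedpunct} to the case $\ell=k-1$. Your additional bookkeeping (that $n_1=\ldots=n_k$ together with $n_0:=0$ forces $\ell=k-1$, and that $\Pi(\C,1)=\C$) is exactly the implicit content of the paper's one-line argument.
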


\begin{proof}
    It immediately follows by \Cref{th:charshortnedpunct}, for the case $\ell=k-1$. 
\end{proof}

We introduce a family of completely decomposable codes having as number of minimum weight codewords as in \Cref{th:charshortnedpunct}. 
For an element $\lambda \in \F_{q^m}$ having degree $e$ over $\F_q$, and a positive integer $t\leq e$, define $\uu_{\lambda,e,t}=(1,\lambda,\ldots,\lambda^{t-1})$ and $\C_{\lambda,e,t}$ the nondegenerate $[t,1]_{q^m/q}$ code generated by $\uu_{\lambda,e,t}$. Moreover, we denote the $\F_q$-subspace generated by the entries of $\uu_{\lambda,e,t}$ by $U_{\lambda,e,t}$, i.e.  \[U_{\lambda,e,t}=\langle 1,\lambda,\ldots,\lambda^{t-1}\rangle_{\F_q}.\] When the degree of $\lambda$ equals to $m$, or in other words, $\lambda$ is a generator of $\F_{q^m}$ of $\F_q$, we simply write $\uu_{\lambda,t}, \C_{\lambda,t}$ and $ U_{\lambda,t},$  instead of $\uu_{\lambda,m,t},\C_{\lambda,m,t}$ and $U_{\lambda,m,t}$, respectively.

\begin{proposition} \label{prop:numberminimumweight} Let
$n_k=\ldots=n_{k-{\ell}} \neq n_{k-{\ell}-1}$, for some $\ell \in \{1,\ldots,k-1\}$, with $n_0:=0$. For $e \geq n_{k-\ell}$ and $e \mid m$, the code
\[
\C=\bigoplus_{i=1}^{k-\ell-1}\C_i \oplus \C_{\lambda,e,n_{k-\ell}} \oplus \cdots \oplus \C_{\lambda,e,n_{k}},
\]
where $\C_i$ is an MRD $[n_i,1]_{q^m/q}$ code, for every $i\in \{1,\ldots,k-\ell-1\}$, is a completely decomposable code of type $(n_1,\ldots,n_k)$ with \[A_{n_k}(\C)=(q^m-1)\frac{q^{\ell+1}-1}{q-1}.\] In particular, the value $A_{n_k}(\C)$ does not depend on the choice of $\lambda$.
\end{proposition}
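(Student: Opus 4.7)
The plan is to apply Theorem~\ref{th:numberpointsweightcomplementary} and reduce the whole count to a single dimension computation, namely $\dim_{\F_q}(U_{\lambda,e,n_k}^{\perp^*}\cdot U_{\lambda,e,n_k})$, which I claim equals $m-1$.

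First, I would verify that $\C$ is completely decomposable of type $(n_1,\ldots,n_k)$. By construction, $\C$ admits a generator matrix $\uu_1\oplus\cdots\oplus\uu_{k-\ell-1}\oplus\uu_{\lambda,e,n_{k-\ell}}\oplus\cdots\oplus\uu_{\lambda,e,n_k}$ in weight complementary form: for $i<k-\ell$, $\w(\uu_i)=n_i$ since each $\C_i$ is MRD and hence nondegenerate by Lemma~\ref{lem:MRDnondegen}; for $j\ge k-\ell$, the entries of $\uu_{\lambda,e,n_j}$ are $\F_q$-linearly independent because $n_j\le n_{k-\ell}\le e$ and $\lambda$ has degree $e$, so $\w(\uu_{\lambda,e,n_j})=n_j$. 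Condition $4)$ of Theorem~\ref{th:characterization1dim} then confirms the type.

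Applying Theorem~\ref{th:numberpointsweightcomplementary}, let $U_i$ denote the $\F_q$-span of the entries of the $i$-th block. For $i\in\{k-\ell,\ldots,k\}$, all $U_i$ coincide with $U_{\lambda,e,n_k}$, so $j_{i,h}$ takes the same value $j:=m-\dim_{\F_q}(U_{\lambda,e,n_k}^{\perp^*}\cdot U_{\lambda,e,n_k})$ for every pair $k-\ell\le i<h\le k$. The formula then reads
\[
A_{n_k}(\C)=(q^m-1)\Big(\sum_{i=k-\ell}^{k-1}q^{(k-i)j}+1\Big).
\]
Once I show $j=1$, this telescopes to $(q^m-1)\bigl(\sum_{r=1}^{\ell}q^r+1\bigr)=(q^m-1)(q^{\ell+1}-1)/(q-1)$, as required, and the independence from $\lambda$ is immediate from the closed form.

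The core step is $\dim_{\F_q}(U_{\lambda,e,n_k}^{\perp^*}\cdot U_{\lambda,e,n_k})=m-1$, which I would prove via a case split. If $e=m$, Proposition~\ref{prop:dualwithdual} gives $U_{\lambda,n_k}^{\perp^*}=\delta^{-1}\langle 1,\lambda,\ldots,\lambda^{m-n_k-1}\rangle_{\F_q}$, and hence the product equals $\delta^{-1}\langle 1,\lambda,\ldots,\lambda^{m-2}\rangle_{\F_q}$, of dimension $m-1$. If instead $e<m$, Proposition~\ref{prop:duallambdasubfield} gives $U_{\lambda,e,n_k}^{\perp^*}=Z\oplus c\langle 1,\ldots,\lambda^{e-n_k-1}\rangle_{\F_q}$, where $Z=\mathrm{Ker}(\mathrm{Tr}_{q^m/q^e})$ and $c\notin Z$. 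Since $Z$ is $\F_{q^e}$-invariant and $U_{\lambda,e,n_k}\subseteq\F_{q^e}$ contains $1$, we get $Z\cdot U_{\lambda,e,n_k}=Z$, while $\langle 1,\ldots,\lambda^{e-n_k-1}\rangle_{\F_q}\cdot\langle 1,\ldots,\lambda^{n_k-1}\rangle_{\F_q}=\langle 1,\ldots,\lambda^{e-2}\rangle_{\F_q}$. Therefore
\[
U_{\lambda,e,n_k}^{\perp^*}\cdot U_{\lambda,e,n_k}=Z+c\langle 1,\lambda,\ldots,\lambda^{e-2}\rangle_{\F_q}.
\]
The main obstacle is showing this sum is direct: if $z=cw$ with $z\in Z$ and nonzero $w\in\langle 1,\ldots,\lambda^{e-2}\rangle_{\F_q}\subseteq\F_{q^e}^*$, then $c=zw^{-1}$ lies in $Z\cdot\F_{q^e}=Z$ (since $Z$ is $\F_{q^e}$-invariant), contradicting $c\notin Z$. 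The dimension is therefore $(m-e)+(e-1)=m-1$, yielding $j=1$ and the claimed count.
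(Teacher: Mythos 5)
Your proof is correct and follows essentially the same route as the paper: both reduce the count, via the formula for $A_{n_k}$ in terms of the exponents $j_{i,h}$, to the single claim $\dim_{\F_q}(U_{\lambda,e,n_k}^{\perp^*}U_{\lambda,e,n_k})=m-1$, and both establish that claim by splitting into the cases $e=m$ and $e<m$ using Propositions \ref{prop:dualwithdual} and \ref{prop:duallambdasubfield}. If anything, your write-up is slightly more careful: you invoke Theorem \ref{th:numberpointsweightcomplementary} directly (the paper routes through Theorem \ref{th:charshortnedpunct}, which formally carries a primality hypothesis on $m$), and you explicitly verify that the sum $Z+c\langle 1,\lambda,\ldots,\lambda^{e-2}\rangle_{\F_q}$ is direct, a point the paper asserts without argument.
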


\begin{proof}
    If $\ell=0$, the assertion is trivial. So assume that $\ell>0$. Let $t=n_{k-\ell}=\cdots=n_k$. Note that $\C_{\lambda,e,t}=\langle \uu_{\lambda,e,t}\rangle_{\F_{q^m}}$ and the $\F_q$-span of the entries of $\uu_{\lambda,e,t}$ is $U_{\lambda,e,t}$. Therefore, by using \Cref{th:charshortnedpunct}, we get the assertion if we prove that 
    \[
    \dim_{\F_q}(U_{\lambda,e,t}^{\perp ^*}U_{\lambda,e,t})=m-1.    \]
Assume first that $e=m$. By using Proposition \ref{prop:dualwithdual}, for every $i\in \{k-\ell,\ldots,k\}$, we get that \[U_{\lambda,t}^{\perp^*}=c\langle 1,\lambda,\ldots,\lambda^{m-t-1} \rangle_{\F_q},\] for some nonzero $c \in \F_{q^m}^*$. Therefore,
    \[
    U_{\lambda,t}^{\perp^*} U_{\lambda,t}=c\langle 1,\lambda,\ldots,\lambda^{m-2}\rangle_{\F_q}
    \]
    and $\dim_{\F_q}(U_{\lambda,t}^{\perp^*} U_{\lambda,t})=m-1$.\\
    Assume now that $m=es$, for some $e,s>1$. By using Proposition \ref{prop:duallambdasubfield}, we get that \[U_{\lambda,e,t}^{\perp^*}=\mathrm{Ker}(\mathrm{Tr}_{q^m/q^e})\oplus c U_{\lambda,e,e-t},
    \]
    for some $c \in \F_{q^m}^*$. Therefore, since $\mathrm{Ker}(\mathrm{Tr}_{q^m/q^e})$ is an $\F_{q^e}$-subspace of $\F_{q^m}$, we get
    \[
    U_{\lambda,e,t}^{\perp^*} U_{\lambda,e,t}=\mathrm{Ker}(\mathrm{Tr}_{q^m/q^e}) \oplus cU_{\lambda,e,e-1},
    \]
implying that $\dim_{\F_q}(U_{\lambda,e,t}^{\perp^*} U_{\lambda,e,t})=e(s-1)+e-1=es-1=m-1$, form which we get the assertion. 
\end{proof}

Note that, when $m$ is prime, the quantity $A_{n_k}(\C)$ reaches the maximum in \eqref{eq:upperboundminimumprime}.

\begin{remark}
    Under certain assumptions on the $n_i$'s, the codes 
    \[\C=\bigoplus_{i=1}^k \C_{\lambda,n_i}\]
    reach the equality on the upper bound for maximum weight codewords of \cite[Theorem 5.8]{polverino2023maximum}.
\end{remark}

By \Cref{prop:numberminimumweight}, we also get that the number of codewords having minimum weight $n_k$ does not depend on the choice of $\lambda$. However, different choices of $\lambda$ can produce codes with different weight distribution. 

\begin{example}
     Let $q=2$, $m=6$ and $e \mid m$. Define $\mathbf{n}=(2,2,2)$. As proved in \Cref{prop:numberminimumweight}, the number of codewords of the code
     $\C_{\lambda}:=\bigoplus_{i=1}^3 \C_{\lambda,e,2}$,
     having minimum weight 2 does not depend on $\lambda$ and
     \[
     A_2(\C_{\lambda})=(2^6-1) \frac{2^3-1}{2-1}=441,
     \]
     for any $\lambda \in \F_{2^6}$, where $e$ is such that $\F_{2^e}=\F_2(\lambda)$. However,
     by using MAGMA, it is possible to show that there exist $\lambda_1 \in \F_{2^6}$ such that $\F_2(\lambda_1)=\F_{2^6}$ and $\lambda_2 \in \F_{2^3}$ such that $\F_2(\lambda_2)=\F_{2^3}$ for which the weight distribution of $\C_{\lambda_1}$ is \[
(A_0,A_1,A_2,A_3,A_4,A_5,A_6 )=(1,0,441,2646,35280,127008,96768).
     \]
     and the weight distribution of $\C_{\lambda_2}$ is 
     \[
(A'_0,A'_1,A'_2,A'_3,A'_4,A'_5,A'_6 )=(1,0,441,4158,24696,148176,84672).
     \]
     Therefore, the completely decomposable $[6,3,2]_{2^6/2}$ codes $\C_{\lambda_1}$ and $\C_{\lambda_2}$ of type $(2,2,2)$ have a different weight distribution.
\end{example}

Finally, in Corollary \ref{cor:charmaximumminimnotprime}, we have showed that a completely decomposable code with all blocks of the same length and the maximum number of minimum weight codewords as in \Cref{cor:boundminimumnonprime} is equivalent to the code constructed in \Cref{prop:constrlowernotprime} and hence its weight distribution is uniquely determined. In the next example, we prove that for the case in which $m$ is prime, there exist completely decomposable codes having the maximum number of minimum weight codewords as in \Cref{cor:boundminimumprime}, but having different weight distributions.

\begin{example}
     Let $q=2$ and $m=7$. Define $(n_1,n_2,n_3)=(3,3,3)$ and consider the $[9,3,3]_{2^7/2}$ code $\C_{\lambda}:=\bigoplus_{i=1}^3 \C_{\lambda,n_i}$, for some $\lambda \in \F_{2^7} \setminus \F_2$. By \Cref{prop:numberminimumweight}, we know that the number of codewords of $\C_{\lambda}$ having minimum weight 3 does not depend on $\lambda$ and
     \[
     A_3(\C_{\lambda})=(2^7-1) \frac{2^3-1}{2-1}=127 \cdot 7=889.
     \]
     However, by using MAGMA, we get that there exists $\lambda \in \F_{2^7}$ such that the weight distribution of $\C_{\lambda}$ is \[
(A_0,A_1,A_2,A_3,A_4,A_5,A_6,A_7,A_8,A_9)=(1,0,0,127\cdot 7, 127\cdot 42,127 \cdot 336, 127 \cdot 2688, 127 \cdot 13440, 0, 0).
     \]
     and the weight distribution of the code $\C'=\bigoplus_{i=1}^3 \langle \uu \rangle_{\F_{2^7}}$, where $\uu=(1,\lambda,\lambda^3)$ is 
     \[
(A'_0,A'_1,A'_2,A'_3,A'_4,A'_5,A'_6,A'_7,A'_8,A'_9)=(1,0,0,127\cdot 7, 0,127 \cdot 294, 127 \cdot 3108, 127 \cdot 13104, 0, 0).
     \]
    Therefore, the completely decomposable $[9,3,3]_{2^7/2}$ codes $\C_{\lambda}$ and $\C'$ of type $(3,3,3)$ have different weight distributions, although $m$ is prime and they have the same number of codewords with the minimum weight 3.
\end{example}

\section*{\hskip 5.5 cm Acknowledgements}

The author is very grateful to Gianira N. Alfarano for fruitful discussions on minimal codewords of a completely decomposable code.
The research was supported by the project ``COMBINE'' of the University of Campania ``Luigi Vanvitelli'' and was partially supported by the Italian National Group for Algebraic and Geometric Structures and their Applications (GNSAGA - INdAM) and by the INdAM - GNSAGA Project \emph{Tensors over finite fields and their applications}, number E53C23001670001.

\bibliographystyle{abbrv}
\bibliography{biblio}

Paolo Santonastaso\\
Dipartimento di Matematica e Fisica,\\
Universit\`a degli Studi della Campania ``Luigi Vanvitelli'',\\
I--\,81100 Caserta, Italy\\
{{\em paolo.santonastaso@unicampania.it}}

\end{document}